\documentclass[format=acmsmall, review=false]{acmart}
\usepackage{acm-ec-26}

\setcitestyle{acmnumeric}

\renewcommand{\P}{\mathsf{P}}
\newcommand{\NP}{\mathsf{NP}}
\newcommand{\RP}{\mathsf{RP}}

\usepackage{dsfont}
\newcommand{\bone}{\mathds{1}}
\renewcommand{\phi}{\varphi}
\renewcommand{\vartheta}{\theta}
\newcommand{\dotcup}{\mathbin{\dot{\cup}}}

\usepackage{packages}
\usepackage{notation}
\usepackage{theorems}

\title{Maximally Random Sortition}

\author{Gabriel de Azevedo}
\affiliation{%
  \institution{Cornell University}
  \city{Ithaca}
  \state{NY}
  \country{USA}
}

\author{Paul G\"olz}
\affiliation{%
  \institution{Cornell University}
  \city{Ithaca}
  \state{NY}
  \country{USA}
}

\begin{document}


\begin{abstract}
Citizens' assemblies are a form of democratic innovation in which a randomly selected panel of constituents deliberates on questions of public interest. We study a novel goal for the selection of panel members: maximizing the entropy of the distribution over possible panels.

We design algorithms that sample from maximum-entropy distributions, potentially subject to constraints on the individual selection probabilities.
We investigate the properties of these algorithms theoretically, including in terms of their resistance to manipulation and transparency.
We benchmark our algorithms on a large set of real assembly lotteries in terms of their intersectional diversity and the probability of satisfying unseen representation constraints, and we obtain favorable results on both measures. We deploy one of our algorithms on a website for citizens’ assembly practitioners.
\end{abstract}
\maketitle

\setcounter{tocdepth}{1} %
\tableofcontents

\clearpage

\section{Introduction}
In a time of democratic backsliding~\cite{VDem25}, many countries are experimenting with \emph{citizens' assemblies} to complement and reinvigorate their democracies~\cite{OECD20,CoE2023}.
These assemblies gather a random sample of constituents to propose policy on a given topic.
Recent examples include assemblies in Ireland on same-sex marriage and abortion~\cite{Courant21}, which led to constitutional change, the French citizens' convention on climate change~\cite{GAA+22}, and permanently instituted assemblies in Brussels and Paris~\cite{OECD21}.

How exactly to do the random selection of panel members, called \emph{sortition}, is not obvious.
While practitioners and political theorists praise the virtues of random selection~\cite{Engelstad89,CM99,Dowlen08,MASSLBP17}, they consider an idealized sortition process in which the panel members are directly drawn (uniformly and without replacement) from the population, which is unrealistic since many people will not agree to participate. %
In practice, sortition proceeds in two stages: a large random sample of constituents is invited, and those willing to participate opt into a \emph{pool} of volunteers; then, a \emph{selection algorithm} randomly selects the panel from among the pool.
To be representative, this panel must satisfy \emph{quotas} imposed by the assembly organizers\,---\,say that, between 20 and 23 out of the 100 participants should have a college degree, exactly 50 should be female, and so forth.

Hence, the question is: \emph{Which probability distribution over the possible panels should the selection algorithm implement, to approximate idealized sortition?}
\citet{FGG+21} gave a first answer in \emph{Nature}:
since one desirable property of idealized sortition is that all constituents are equally likely to be selected~\cite{Engelstad89,CM99,Fishkin09}, their algorithmic framework implements a lottery such that the pool members' selection probabilities are as close to equal as possible. They refer to this objective as \emph{fairness}.
Algorithms within this framework\footnote{These algorithms are \textsc{MaxiMin}, its refinement \textsc{LexiMin}~\cite{FGG+21}, and \textsc{Goldilocks}~\cite{BF24}, which differ in their optimized measure of closeness to equality.} are now widely used in practice, through an open source tool\footnote{\url{https://github.com/sortitionfoundation/stratification-app/}} and \href{https://panelot.org}{\emph{panelot.org}}.

Though equalizing individual selection probabilities has clear appeal, the \emph{column generation} technique used by their algorithms comes with several drawbacks.
In particular, the resulting distributions' support consists of only few panels, which can cause extreme correlations in the selection of several pool members.
A wider problem is that many distributions over panels induce the same individual selection probabilities, leaving the choice of distributions to implementation details.
These limitations present an opening for new algorithms with complementary strengths.

In this paper, we design selection algorithms attaining a different objective: maximizing the randomness, specifically the \emph{entropy}~\cite{Shannon48}, of the probability distribution.
If the only constraints are the practitioners' quotas, this means choosing the panel uniformly from the set of all possible panels.
If organizers prescribe the pool members' selection probabilities to ensure fairness as above, our algorithms sample from the maximum-entropy distribution subject to this constraint.

Sampling from maximum-entropy distributions is attractive for several reasons:
\begin{itemize}
\item In a straight-forward way, the maximum entropy distribution makes the sortition \textbf{as random as possible} given the constraints.
It is well known that the maximum entropy distribution minimizes the Kullback--Leibler divergence to the uniform distribution over all subsets of the desired panel size~\cite[][Eq. (2.94)]{CT06}, so our distribution is \textbf{as close to a uniform selection} without replacement from the pool as possible given the constraints.
\item Our approach follows the \textbf{principle of maximum entropy} of Jaynes~\citep{Jaynes57,Jaynes57a}, who argues that, among all distributions satisfying known constraints, the maximum-entropy distribution is the only one that avoids introducing “arbitrary assumption of information”~\citep{Jaynes57}.
He aims to estimate the expected value of a function on the random variable, which captures the central question of how much representation in the panel a group not protected by quotas should have (on average).
Jaynes'~\citep{Jaynes57} reassures us that, by drawing from the maximum-entropy distribution, we represent all such groups according to the best guess given our information.
\item We found our choice of distribution \textbf{simple to explain} to practitioners and politicians since it can be defined by the following rejection procedure: uniformly draw the desired number of pool members,\footnote{If there are constraints on selection probabilities, each pool member has an adjusted probability of being selected in this stage, as we describe in \cref{sec:fme}. The rest of the process is unchanged.} check if it happens to satisfy all quotas, and repeat this until a valid panel is found.
Though this process is hopelessly slow, and our selection algorithm quite non-trivial, our algorithm only implements a speed-up of this common-sense, intuitively fair procedure.
\item \citet{Dowlen08} and \citet{Stone11}, influential philosophers of sortition, value randomness as a tool to decide between equally acceptable options \textbf{without any influence by illicit reasons}, which is what we do when choosing uniformly between panels satisfying the quotas.
In contrast to the goal of equalizing selection probabilities, maximizing entropy defines a unique probability distribution, which means that the distribution entirely emerges from the input (the quotas, which practitioners already must justify).
This leaves \textbf{no room for discretion by the organizers or algorithm} that could raise a ``suspicion of bias''~\cite{CM99}.
A maximum-entropy distribution also makes the result of the sortition as unpredictable as possible, which is seen as an obstacle to undue outside influence such as bribery~\cite{Stone11}.
\end{itemize}

\subsection{Our Techniques and Results}
Sampling from a maximum-entropy distribution over valid panels is clearly a hard algorithmic problem because even deciding if a valid panel exists is $\NP$-complete~\cite{FGG+21}.
What allows us to nevertheless solve the problem is a heavily optimized code implementation and a sequence of algorithmic optimizations exploiting the structure of practical sortition problems.

At the heart of our algorithm is a dynamic program (DP) for counting valid panels, similar to Papadimitriou's~\citep{Papadimitriou81} pseudo-polynomial algorithm for integer linear programming with a constant number of constraints.
For a typical medium-sized assembly, there are 20 to 30 quotas, which is limited but hopeless to include in this DP given its exponential running-time dependency.
By exploiting symmetries between pool members, going through pool members and quotas in a smart order, and aggressively pruning states, we scale the DP to as many quotas as possible.
Though this rarely scales to all quotas, it allows us to uniformly sample from the candidate panels that satisfy the incorporated quotas and to enforce the remaining quotas with rejection sampling.
The resulting algorithm, \textsc{MaxEntropy}, can uniformly sample among all feasible panels for 78 out of 86 real-world sortition problems, leaving out only instances with unusually many quotas.

To ensure fair selection probabilities, our algorithm \textsc{FairMaxEntropy} can take in target selection probabilities, which might be computed by column generation.
We formulate finding the maximum-entropy probability distribution as a convex problem with a constraint on the selection probabilities %
and consider its dual, which has fewer variables.
Each solution of the dual corresponds to an exponential-family distribution over panels, from which we can sample by using a variant of the DP above as an oracle.
By estimating the pool members' selection probabilities through sampling, we obtain an estimator for the gradient of the dual objective and optimize it using stochastic gradient descent.
We prove that \textsc{FairMaxEntropy} always produces a maximum-entropy distribution, whose selection probabilities converge to the targets at a rate of $O(1/\sqrt{T})$ in the number of iterations $T$. %

Next, we study the properties of our selection algorithms, theoretically (\cref{sec:properties}) and empirically (\cref{sec:empirics}).
From a theoretical perspective, \textsc{MaxEntropy} and \textsc{FairMaxEntropy} coincide with stratified sampling if the quotas are over disjoint strata and satisfy a notion of equal treatment of equals not satisfied by column generation.
We prove that \textsc{MaxEntropy} satisfies asymptotically optimal bounds on resistance to misrepresentation of features as defined by \citet{FlaniganLPW24} and how our algorithms can be implemented as a physical lottery to transparently demonstrate its selection probabilities~\cite{FlaniganKP21}.

Empirically, we evaluate our algorithms on a dataset of real-world sortition problems that is several times larger than the datasets of earlier papers, and compare them to existing selection algorithms.
In terms of fairness, pure \textsc{MaxEntropy} lies between the fairness of \textsc{Legacy} (an algorithm developed by the Sortition Foundation) and the column generation approaches.
Already few iterations of gradient descent seem to clearly increase fairness.
We find that the panels produced by our algorithms tend to have higher intersectional diversity than those produced by column generation.
If we hold out one category of quotas from the quotas (say, age), our algorithms are more likely to satisfy the held-out quotas by chance than column generation algorithms, primarily because column generation satisfies a fair fraction of features with zero probability.

Finally, we describe in \cref{sec:panelot} how we are deploying \textsc{MaxEntropy} on \href{https://panelot.org}{\emph{panelot.org}}, to facilitate its adoption in real-world assemblies.

\subsection{Related Work}

We already touched on several works on sortition algorithms~\cite{FGG+21,BF24,FlaniganKP21,FlaniganLPW24}. %
By providing a practical algorithm for maximum-entropy sortition, our paper resolves a question left open by \citet{FGG+21}, who also discuss potential strengths of maximum-entropy sortition in resisting outside influence.
Similarly, \citet{flanigan2023minipublic} call for ``selection algorithms for goals beyond individual fairness''.
Other works on sortition algorithms aim to equalize the probabilities of \emph{constituents} to be selected rather than those of pool members~\cite{FGG+20} and study how to replace exiting panel members~\cite{ABF+25}. %
\citet{DAL+21} study the recruitment of assembly members by random dialing (as done in French assemblies), which makes selection an online problem, and \citet{GMS+25} study how to compose the pool if population registers are located in municipalities (as in German assemblies).

Two recently proposed selection algorithms are \emph{Diversimax} by \citet{matar2025diversimax} and \emph{simulated annealing} by \citet{GMK+25}.
Diversimax only selects a deterministic panel, maximizing some notion of diversity by mixed integer linear programming, which is why it cannot be compared in terms of its random properties.
The simulated annealing algorithm, on the other hand, does not respect quotas, which is why we can also not compare it with other algorithms on equal terms.
\citet{GMK+25} stress as a key advantage over \textsc{LexiMin} that, among 10,000 sampled panels, almost all are distinct; \textsc{MaxEntropy} provides an even wider range of panels while additionally satisfying quotas.

Finally, several papers study sortition when assembly members can be directly picked from the population, and how well the assembly represents the underlying population.
This setting is studied in terms of representation of features without quotas~\cite{BGP19}, the welfare of majority votes performed by the assembly~\cite{MST21}, and representativeness measures in a metric space~\cite{EKM+22,EM25,CMP24}.

On a technical level, our work is related to papers on sampling from combinatorial structures based on uniform and general maximum-entropy distributions~\cite{AGM+17,AS10,JerrumSV04}.
Our paper uses a well-known reduction from sampling and counting, made formal by \citet{JerrumVV86}.
Specifically for maximum-entropy distributions over combinatorial sets, \citet{SinghV14} establish the equivalence between counting and optimizing via the ellipsoid method.
Other approaches for counting combinatorial structures include Markov Chain Monte Carlo~\cite{JerrumSV04,KV97,MS04} and generating functions~\cite{Barvinok94,DeLoeraHTY04}.

\section{Preliminaries}
\noindent
\textbf{Sortition.}
An instance of the sortition problem consists of a set of \emph{pool members} $N = [n] = \{1,\dots, n\}$, a set of features $F$, the values of these features $V$, the features' lower and upper quotas $\ell, u$, and a desired panel size $k$.
Each pool member is characterized by their value for all the features $f \in F$. For instance, one feature $f \in F$ might be \emph{gender}, and its possible values $V_f$ be $\{\text{female}, \text{male}, \text{other}\}$.
We write each feature as a function $f : N \to V_f$, so that $f(i)$ gives pool member $i$'s gender.
For each \emph{feature-value pair} $f,v \in \FV \coloneqq \{(f, v) : f \in F, v \in V_f\}$, we are given integer quotas $0 \leq  \ell_{f,v} \leq u_{f,v} \leq k$ specifying the minimum and maximum allowed number of selected panel members with feature $f$ taking value $v$.

At times, it will be convenient to consider pool members' attributes in the shape of a characteristic vector.
Let $w : N \to \{0,1\}^{\FV}$ be this mapping from pool members to \emph{feature-value vectors}, so that $w(i)_{f,v} = \Iverson{f(i) = v}$ for all $f, v \in \FV$.
We denote the set of distinct feature-value vectors in the pool by $\calW$.
For any set of feature-value vectors $W \subseteq \curly{0, 1}^\FV$, we denote the subset of pool members with features in $W$ by $\restr{N}{W} \coloneqq \setst*{i \in N}{w(i) \in W}$.

A \emph{panel} is a set of $k$ pool members such that, for every feature-value pair $f, v \in \FV$, the number of pool members with feature $f$ taking value $v$ is between $\ell_{f, v}$ and $u_{f, v}$.
We denote the set of all panels by $\calP$, and the set of panels containing some pool member $i \in N$ by $\calP(i)$.
A \emph{selection algorithm} takes in the instance, and randomly selects a panel. %

Given a \emph{panel distribution} $\lambda \in \Delta(\calP)$, pool member $i$'s \emph{selection probability} is $\pi_\lambda(i) \coloneqq \mathbb{P}_{P \sim \lambda}[i \in P] = \sum_{P \in \calP(i)} \lambda(P)$. 
When $\lambda$ is clear from the context, we drop it from the subscript.
Note that sum of selection probabilities $\sum_{i \in N} \pi_\lambda(i) = k$ is constant for all panel distributions.

A \emph{fairness measure} is an objective intended to quantify how evenly the selection probability is distributed across pool members.
Formally, a fairness measure is a concave function $g: [0, 1]^N \to \R \cup \{-\infty\}$ that maps a vector of selection probabilities to a \emph{fairness score}.
A panel distribution maximizes a fairness measure if its induced selection probabilities maximize the fairness measure.\medskip

\noindent
\textbf{Convex analysis.}
We review some standard notions from convex analysis, which we will use for the development of \textsc{FairMaxEntropy} in \cref{sec:fme} see \citet{Rockafellar70} for a comprehensive treatment.
Given a set $U \subseteq \R^d$, its \emph{affine hull} $\aff(U)$ is the set of all affine combinations of elements of $U$. 
For a $x \in \R^d $, we define the $\ell_\infty$ \emph{ball} as $B_\infty(x, \eta) \coloneqq \setst{y \in \R^d}{\norm{x - y}_\infty \leq \eta}$.
The $\eta$-\emph{relative interior} of a set $U$ is defined as $\ri_\eta(U) \coloneqq \setst{x \in U}{ B_\infty(x, \eta) \cap \aff(U) \subseteq U}$.

Given a distribution $\gamma \in \Delta(S)$ over a finite set $S$,
we define its \emph{(Shannon) entropy} as $H(\gamma) \coloneqq -\sum_{s \in S} \gamma(s)\log \gamma(s)$.
Since entropy is strictly concave, it has a unique maximizer over any convex compact set of distributions, so the maximum-entropy distribution is uniquely defined, with or without additional constraints on selection probabilities.

\section{\textsc{MaxEntropy}: Sampling Uniformly among Possible Panels}
\label{sec:me}

When the only restrictions on the sortition are the quotas, the maximum-entropy distribution coincides with the uniform distribution.
In this section, we describe how our algorithm \textsc{MaxEntropy} samples from this distribution.
We defer all proofs to \cref{apx:me}, where we directly describe a generalization of this algorithm.
This generalization samples panels with different probabilities based on some pool member-specific weights and will be useful for our development of fair maximum-entropy algorithms in \cref{sec:fme}.

To sample uniformly from the feasible panels, \textsc{MaxEntropy} needs to solve a problem on practical instances that is in general intractable.
Indeed, sampling panels is clearly at least as hard as finding one, and deciding whether a panel exists is $\NP$-hard, even for only three features by reduction from exact 3-cover~\cite{FGG+21}.\footnote{Even with the promise that a panel exists, no algorithm can find one in polynomial time unless $\P=\NP$, and no polynomial-time algorithm can find one with constant probability unless $\RP=\NP$~\cite{FGG+21}.}
Our quota constraints are so expressive that \textsc{MaxEntropy} sortition is equivalent to uniformly sampling from the solutions of a quite general class of 0--1 integer linear programs, which underscores the difficulty of the problem.
General tools for this problem, like \emph{LattE}~\cite{DeLoeraHTY04} scale badly in the number of variables (i.e., pool members).
The one piece of good news from complexity theory is that the number of quotas tends to be somewhat bounded in practice, and that 0--1 integer linear programs with a constant number of constraints can be solved in polynomial-time~\cite{Papadimitriou81}.
Indeed, a similar dynamic program will be the starting point of our algorithm, its running time on even a smaller instance would be at least on the order of $k^{|\FV|} \approx 30^{20} \approx 3 \cdot10^{29}$, so hopelessly inefficient.

\subsection{Counting Panels by Dynamic Programming}
\label{sec:cp}
It is well known that uniform sampling over combinatorial structures can be reduced to counting the number of solutions~\cite{NW78,JerrumVV86}.
We will begin by formulating a na\"ive dynamic program for counting the number of panels, and then describe three optimizations that exploit the structure of real-world sortition problems to speed up the computation.
In addition to an optimized C++ implementation with careful memory management (see \cref{sec:implementation}), we need all these optimizations, in addition to the later rejection sampling stage, to be able to sample from highly constrained real-world sortition instances.

For a set of pool members $U\subseteq N$, define its \emph{(feature-value) profile} as $w(U) \coloneqq \sum_{i\in U} w(i)$.
We denote the set of \emph{quota-compliant profiles} by
\begin{align*}
    \overline{\calZ}\coloneqq \setst*{z \in \N^{\FV}}{\textstyle \ell_{f,v}\leq z_{f,v}\leq u_{f,v}\ \text{for all }f, v\in\FV \text{~and~}
    \sum_{v\in V_{f_0}} z_{f_0,v}=k \text{ for some fixed }f_0\in F },
\end{align*}
where the latter constraint uses the fact that $\sum_{v \in V_f} w(i)_{f,v} = 1$ for any pool member $i$ and feature $f$ to count the number of pool members.
The set $U$ constitutes a panel if and only if $w(U)\in\overline{\calZ}$, though some profiles in $\overline{\calZ}$ may be unrealizable.
Our dynamic program calculates the following \emph{counting function}:
\begin{align*}
    \phi : [n] \times \N^{\FV} \to \N, \quad (i, z) \mapsto \card{ \setst{U \subseteq \curly{i, \dots, n}}{ z + w(U) \in \overline{\calZ}}},
\end{align*}
which counts the number of ways to complete a partial profile $z$ to a quota-compliant profile using only a suffix of pool members $\{i, \dots, n\}$.
In particular, $\phi(1, \vec{0}) = \card{\calP}$ computes all possible panels. 
We may compute the counting function according to the following recursion, based on the simple idea that the ways for completing $z$ with the pool members $\{i, \dots, n\}$ can be partitioned into those that choose $i$ and those that do not.

\begin{proposition}
\label{prp:unif_dp}
For every $i \in N$ and $z \in \N^{\FV}$, we have that
\begin{align*}
    \phi(i, z) = \phi\big(i + 1, z + w(i)\big) + \phi(i + 1, z),
\end{align*}
with the base case $\phi(n + 1, z) = \Iverson{z \in \overline{\calZ}}$.
\end{proposition}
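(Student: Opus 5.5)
The plan is a direct bijective counting argument: partition the family of subsets counted by $\phi(i,z)$ according to whether pool member $i$ is included.

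\textbf{Base case.} For $i = n+1$, the suffix $\{n+1,\dots,n\}$ is empty, so the only candidate is $U = \emptyset$. Since $w(\emptyset) = \vec{0}$, we get $\phi(n+1, z) = \Iverson{z \in \overline{\calZ}}$, as claimed.

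\textbf{Recursive step.} Fix $i \in N$ and $z \in \N^{\FV}$, and let
\[
\calA \coloneqq \setst{U \subseteq \curly{i,\dots,n}}{z + w(U) \in \overline{\calZ}} .
\]
Split $\calA$ into the disjoint parts $\calA_1 \coloneqq \setst{U \in \calA}{i \in U}$ and $\calA_0 \coloneqq \setst{U \in \calA}{i \notin U}$, so that $\card{\calA} = \card{\calA_1} + \card{\calA_0}$.

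\emph{The part $\calA_0$.} Its members are exactly the subsets $U \subseteq \curly{i+1,\dots,n}$ with $z + w(U) \in \overline{\calZ}$. This is literally the set counted by $\phi(i+1, z)$.

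\emph{The part $\calA_1$.} Use the map $U \mapsto U \setminus \curly{i}$, whose inverse is $U' \mapsto U' \cup \curly{i}$. This is a bijection between subsets of $\curly{i,\dots,n}$ containing $i$ and subsets of $\curly{i+1,\dots,n}$. The profile is additive, $w(U) = \sum_{j \in U} w(j)$, and $i \notin U'$, so
\[
w\bigl(U' \cup \curly{i}\bigr) = w(U') + w(i).
\]
Hence $z + w(U) \in \overline{\calZ}$ if and only if $\bigl(z + w(i)\bigr) + w(U') \in \overline{\calZ}$. The bijection therefore carries $\calA_1$ exactly onto the set counted by $\phi\bigl(i+1, z + w(i)\bigr)$.

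Adding the two counts gives the recursion. No step is a real obstacle. The only point needing care is the index convention at $i = n$, where the suffix $\curly{n+1,\dots,n}$ is empty; the base case covers it consistently.
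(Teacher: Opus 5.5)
Your proof is correct and uses the same argument as the paper. The paper proves the weighted, block-aggregated version (\cref{thm:wgt_dp}) by uniquely splitting each $U$ into its part $S$ inside the current block of identical pool members and the remainder $T$, using additivity $w(S) = |S|\,\omega_j$ and grouping by $d = |S|$; your include/exclude split on $i$, with the shift $z \mapsto z + w(i)$, is the special case of singleton blocks and unit weights, and your base-case argument via $U = \emptyset$ is, if anything, more explicit than the paper's.
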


Since the number of profiles that can possibly be completed to be quota-compliant is at most $(k + 1)^{\card{\FV}}$ and since we do a constant amount of operations per state, we may bound the running time by $O\big(n \, (k+1)^{\card{\FV}}\big)$, which as discussed above is hopelessly large. \medskip

\noindent
\textbf{Processing identical pool members simultaneously.}
When several pool members have the same feature-value vector and we place them next to each other in the ordering of $N$, we can perform the recurrence for these pool members in a single step, in which we decide how many of them to choose.
Specifically, if pool members $i, i+1, \dots, i+n_{\omega}-1$ have the feature-value vector $\omega = w(i) = \cdots = w(i + n_\omega-1)$, where $n_\omega$ is the number of pool members with feature-value vector $\omega$, we can write
\[ \phi(i,z) = \sum_{d=0}^{n_\omega} \binom{n_\omega}{d} \cdot \phi(i+ n_\omega, z + d \cdot \omega) \quad \text{for each $z \in \mathbb{N}^\FV$.} \]
While this does not improve the theoretical complexity, and instances with many quotas have few entirely identical pool members, it will pay off in combination with a later optimization. \medskip

\noindent
\textbf{Decreasing necessary state space for one feature.}
To further reduce the running time, we sort the pool members by their value for some chosen feature $f^*$.
The benefit of this approach is that, in the recurrence step for the last pool member with some value $v \in V_{f^*}$, we can immediately decide whether the quotas for $f^*, v$ are violated (so the count is zero) or satisfied (so this quota will be satisfied for all completions).
As a result, the dynamic program only needs to keep track of the profile for the current value of $f^*$ rather than all of them, which reduces the running time to $O\big(n \, (k+1)^{|\FV| - |V_{f^*}|+1}\big)$.

This optimization is highly effective because many sortition problems include a feature for geographical location, which can have a large number of values.
If there are, say, 10 regions in the country or 10 neighborhoods in a city, processing the pool members by geography reduces the running times by a considerable factor of $(k+1)^9$.\medskip

\noindent
\textbf{Pruning states based on subsets of the features.}
In our experience, a large majority of states cannot be completed to a quota-compliant profile, even when only a portion of the quotas is considered.
One reason for this is that practitioners often set the lower and upper quotas with a narrow gap between them (often around the census percentages~\cite{OECD20,GillL22}).
Moreover, some feature-values, especially those of populations that opt into the pool at lower rates, cannot vary entirely independently on each other.

To avoid spending large amounts of memory and computation on states that cannot be completed, we iteratively compute versions of the dynamic program that include more and more features.
Fix a subset of features $F' \subseteq F$, %
set $\restr{z}{F'}$ for the restriction of a profile $z$ to the features in $F'$, and define the counting function $\phi'$ for only the features $F'$ as above.
The key insight is that the counting function for $F'$ gives an upper bound on the full counting function:
\begin{proposition}
    \label{prp:phi_heuristic}
    For all $i \in N$, $z \in \N^{\FV}$, and $F' \subseteq F$, we have that
    $\phi(i, z) \leq \phi'(i, \restr{z}{F'})$.
\end{proposition}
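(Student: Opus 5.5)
Both counting functions count subsets of the same ground set $\{i,\dots,n\}$, so the cleanest route is to compare the two sets being counted, without going through the recursion of \cref{prp:unif_dp}. Let $\FV' \coloneqq \setst{(f,v) \in \FV}{f \in F'}$. Let $\overline{\calZ}'$ be the set of quota-compliant profiles on $\N^{\FV'}$, defined from the quotas $\ell_{f,v},u_{f,v}$ for $(f,v)\in\FV'$ with the size constraint placed on some fixed $f_0' \in F'$. Let $w'(i) \coloneqq \restr{w(i)}{F'}$. Then by definition
\[
\phi'(i,\restr{z}{F'}) = \card{\setst{U \subseteq \{i,\dots,n\}}{\restr{z}{F'} + w'(U) \in \overline{\calZ}'}} .
\]
The inequality will follow from the set inclusion
\[
\setst{U}{z + w(U) \in \overline{\calZ}} \subseteq \setst{U}{\restr{z}{F'} + w'(U) \in \overline{\calZ}'} ,
\]
since both sides consist of subsets of the same set $\{i,\dots,n\}$.

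For the inclusion, I first note that restriction is linear, so $\restr{(z+w(U))}{F'} = \restr{z}{F'} + w'(U)$. It then suffices to show that restricting any $y \in \overline{\calZ}$ to $F'$ gives an element of $\overline{\calZ}'$. The bound constraints $\ell_{f,v} \le y_{f,v} \le u_{f,v}$ for $f \in F'$ are a subset of those defining $\overline{\calZ}$, so they still hold.

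The main obstacle is the size constraint, because $\overline{\calZ}$ fixes $\sum_v y_{f_0,v}=k$ only for one particular $f_0$, which need not lie in $F'$. I would handle this in one of two ways.
\begin{itemize}
\item \emph{Choose $f_0$ inside $F'$.} Since $f_0$ is arbitrary, we may take $f_0 = f_0' \in F'$ (both definitions leave this choice free). Then the size constraint restricts verbatim.
\item \emph{Use the profile structure.} If the definition of $\overline{\calZ}'$ must use a different feature, I would use that $z$ is a partial profile arising in the DP, i.e.\ $z = w(S)$ for some $S \subseteq \{1,\dots,i-1\}$. Each pool member contributes exactly $1$ to $\sum_v$ for every feature, so $\sum_v (z+w(U))_{f,v} = \card{S}+\card{U}$ for every $f$. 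Hence the size constraint holds for one feature if and only if it holds for all. For arbitrary $z$, the first option is the right one.
\end{itemize}
Either way the inclusion holds, and taking cardinalities gives $\phi(i,z) \le \phi'(i,\restr{z}{F'})$.

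For the case $F'=\emptyset$, I would adopt the convention that $\overline{\calZ}'$ encodes only the size constraint, or else simply treat that case separately.
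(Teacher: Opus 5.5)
Your proof is correct and takes the paper's approach: the paper also proves the inclusion $\calU(i,z)\subseteq\calU'(i,\restr{z}{F'})$ of valid extensions, using $\restr{(z+w(U))}{F'}=\restr{z}{F'}+w'(U)$ and the fact that $\overline{\calZ}'$ keeps only a subset of the constraints, and then compares the sums (in weighted form, using $\mu>0$). Your care with the size constraint is warranted: the paper glosses over it, and with a single fixed $f_0\notin F'$ and a non-realizable $z$ (e.g.\ a unit vector on a coordinate of $F'$, with slack quotas) the inequality can actually fail, so one of your two fixes is genuinely needed, namely taking $f_0\in F'$ or restricting to profiles $z=w(S)$ as they arise in the DP.
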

Hence, we can prune all states $(i, z)$ with $\phi'(i, \restr{z}{F'}) = 0$.

We found this pruning procedure to be a large improvement over direct pruning steps that do not take into account the interdependency of features.
Our approach of iteratively increasing the number of features also makes the first optimization pay off to a much greater degree because much more pool members appear identical when only considering a subset of the features.

This approach also raises the question of in which order features should be added to the dynamic program.
We use an intelligent heuristic which, in each iteration, prioritizes features that are rarely satisfied among a large number of panels sampled from the previous iteration's dynamic program.
The goal of this heuristic is to rule out a large number of dynamic programming states, and to prioritize the features that cannot be addressed well through rejection sampling as described below.

\subsection{Uniformly Sampling a Panel}
After having computed the dynamic program for any subset of the features, we can use it to uniformly sample from all sets of $k$ pool members that satisfy the incorporated features.
For exposition, we describe this process for the na\"ive dynamic program without the optimizations.

To sample using the dynamic program, we follow the recurrence of the dynamic program, building our panel person by person.
We start at the state $(1, \vec{0})$ and with an empty panel.
At each state $(i, z)$, we add pool member $i$ to the panel with probability
\[ \frac{\phi(i+1, z + w(i))}{\phi(i, z)} = \frac{\phi(i+1, z + w(i))}{\phi(i+1, z + w(i)) + \phi(i+1, z)},\]
which gives the ratio of completions that include pool member $i$.
If $i$ is included in the panel, the algorithm recurses to state $(i + 1,  z + w(i))$, otherwise, to state $(i + 1,  z)$.
The algorithm stops when it reaches a base state, at which point the panel is guaranteed to be feasible by \cref{prp:unif_dp}.
\medskip

\noindent
\textbf{Enforcing remaining features with rejection sampling.}
If we can compute the dynamic program for all features, the above is sufficient for our goal of uniform sampling.
Most of the time, however, this is only feasible for a subset of the features.

As we iteratively build the dynamic program for larger subsets $F' \subseteq F$ of features, we keep sampling from each dynamic program, uniformly among all pool member sets of size $k$ that satisfy the quotas for $F'$.
This sampling helps us choose the next feature to add, but it can also reach the point where a sample satisfies all remaining quotas.
In this case, this panel is a uniform sample from the set of all panels, and can be returned:
\begin{theorem}
    \label{thm:me_alg}
    Given an instance of a sortition problem, \textsc{MaxEntropy} samples a panel uniformly from the set of all panels. 
\end{theorem}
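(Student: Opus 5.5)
We will prove the statement in two layers: first, that the DP sampler for a fixed feature subset $F' \subseteq F$ draws uniformly from the set $\calP_{F'}$ of $k$-subsets of $N$ satisfying the quotas of $F'$; second, that rejecting samples that violate a quota outside $F'$ yields the uniform distribution on $\calP$. Throughout, $\calP_{F'}$ denotes the set of $k$-subsets satisfying the quotas of $F'$ (with the cardinality constraint always included, as in the definition of $\overline{\calZ}$), and $\phi'$ is the counting function for $F'$.

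\textbf{Step 1: the DP sampler is uniform on $\calP_{F'}$.} We argue by induction on $i$ from $n+1$ down to $1$. The claim is that, started from any state $(i,z)$ with $\phi'(i,z)>0$, the sampler outputs each completion $U \subseteq \{i,\dots,n\}$ with $z+w(U)$ quota-compliant for $F'$ with probability exactly $1/\phi'(i,z)$.

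For the base case $i=n+1$, we have $\phi'(n+1,z)=1$, and the only completion is $\emptyset$. For the inductive step, a completion $U$ either contains $i$ or it does not.
\begin{itemize}
\item If $i \in U$, the probability of $U$ is $\frac{\phi'(i+1,z+w(i))}{\phi'(i,z)} \cdot \frac{1}{\phi'(i+1,z+w(i))} = \frac{1}{\phi'(i,z)}$.
\item If $i \notin U$, the same cancellation happens with $\phi'(i+1,z)$.
\end{itemize}
The two transition probabilities sum to one by \cref{prp:unif_dp} applied to $\phi'$. The sampler never enters a zero-count state, because it moves to a branch with probability zero exactly when that branch's count is zero. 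At $(1,\vec 0)$ this gives uniformity on $\calP_{F'}$, provided $\calP_{F'} \neq \emptyset$; this holds whenever $\calP \neq \emptyset$, since $\calP \subseteq \calP_{F'}$.

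The optimizations must preserve this. For grouping identical members, the branch choosing $d$ of the $n_\omega$ identical members is taken with probability $\binom{n_\omega}{d}\phi'(i+n_\omega,z+d\omega)/\phi'(i,z)$, followed by a uniformly random $d$-subset of those members. Each concrete completion then again gets probability $1/\phi'(i,z)$, by the grouped recurrence. Sorting by $f^*$ and compressing its coordinates, and pruning states via \cref{prp:phi_heuristic}, change only how $\phi'$ is computed, not its values on reachable states. States with $\phi'=0$ are never reached anyway, and a pruned state has full count $0$ by \cref{prp:phi_heuristic}.

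\textbf{Step 2: rejection sampling.} Since $\calP \subseteq \calP_{F'}$, conditioning the uniform distribution on $\calP_{F'}$ on the event $\calP$ gives the uniform distribution on $\calP$. Formally, for any $P\in\calP$, consider one round of the (independent) sampling loop for $F'$. The probability that it outputs $P$ given acceptance is $\frac{1/|\calP_{F'}|}{|\calP|/|\calP_{F'}|}=1/|\calP|$.

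\textbf{Main obstacle.} The subtle point is that the algorithm interleaves several $F'$ and may stop at whichever stage first produces an accepted sample, and the stopping stage is itself random. We handle this by conditioning on the round $r$ (and its feature set $F'_r$) at which acceptance first occurs. Each round's sample is drawn fresh, independently of the past. The choice of the next feature depends only on previous samples, all of which were rejected. Hence, conditional on the history up to round $r$ and on acceptance in round $r$, the output is uniform on $\calP$ by Step 2. Averaging over histories preserves uniformity.

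Termination with probability one follows because acceptance probability in each round is $|\calP|/|\calP_{F'}|>0$ when $\calP\neq\emptyset$. Moreover, once $F'=F$, every sample is accepted.
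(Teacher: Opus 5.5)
Your proof is correct and has the same two-layer structure as the paper's appendix argument: first, the DP sampler is exact on the relaxed set $\calP_{F'}$; second, rejection sampling restores uniformity on $\calP$. The differences are in execution and scope.

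\textbf{First layer.} The paper (\cref{prop:sampling}) fixes a panel and telescopes the product of transition probabilities $p_{j,z}(d)\,q_{j,d}(S)$ along that panel's unique path. You induct backward on states instead, which is the same computation. The paper carries this out for arbitrary positive weights $\mu$, so the same proof later serves \textsc{FairMaxEntropy}, and the uniform case is $\mu\equiv 1$. You treat only the uniform case, which is all the statement needs. In exchange, you explicitly check that the $f^*$-compression and the pruning leave the counts on reachable states unchanged, which the paper leaves implicit.

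\textbf{Second layer.} The paper's \cref{alg:max_entropy} fixes a single $F'$ and obtains the output law by summing a geometric series over the stopping time. You instead condition on the round of first acceptance. Your version actually matches the main-text description, in which $F'$ grows adaptively based on earlier, rejected samples. Since each round's sample is drawn fresh given the history, the output conditional on any history and on acceptance in that round is uniform on $\calP$, and a mixture of uniform laws on $\calP$ is uniform. This argument is sound and covers a case the paper's formal proof does not explicitly treat.

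For termination in the adaptive setting, it is cleaner to note that the acceptance probability is at least $|\calP|/\binom{n}{k}>0$ uniformly over all choices of $F'$.
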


\section{\textsc{FairMaxEntropy}: Sampling with Prescribed Selection Probabilities}
\label{sec:fme}
We see the \textsc{MaxEntropy} algorithm as a serious contender for use in practical citizens’ assemblies, in particular due to its ease of explanation.
But many practitioners use algorithms that optimize the fairness of selection probabilities and may be reluctant to give up their optimal fairness properties.

With these practitioners in mind, we develop a selection algorithm called \textsc{FairMaxEntropy}, which targets selection probabilities prescribed by the practitioners and draws from the maximum-entropy distribution subject to this constraint.
If practitioners derive these marginals from the output of \textsc{LexiMin} or \textsc{Goldilocks}, they retain the optimal fairness guarantees provided by these algorithms, without suffering from the limitations of column generation (small support, lack of unique specification) we describe in the introduction. \medskip

\noindent
\textbf{Optimization formulation and structure.}
To find these maximum-entropy distributions, we rely on convex optimization.
Fix a sortition instance, and let $\pi$ denote the set of target selection probabilities.
The goal of \textsc{FairMaxEntropy} is to sample panels from the distribution $\lambda^*_\pi$ with maximum entropy satisfying this constraint.
To achieve this, we compute a solution to the following pair of primal and dual entropy-regularized convex programs (see \cite{BoydV04}):

\begin{center}
\vspace{0.1cm}
\begin{minipage}[b]{0.4\textwidth}
\begin{equation}
  \begin{aligned}
    \text{max~} & H(\lambda) \\
    \text{s.t.~} & \pi_\lambda = \pi \\
     & \lambda \in \Delta(\calP).
  \end{aligned}
  \tag{P\ensuremath{(\pi)}}\label{frm:p_max_entropy}
\end{equation}
\end{minipage}
\hfill
\begin{minipage}[b]{0.56\textwidth}
\begin{equation}
  \begin{aligned}
    \text{min~} & \LSE(A^\transp\theta) -\iprod{\theta}{\pi} \\
    \text{s.t.~} & \vartheta \in \R^N. \\
    ~
  \end{aligned}
  \tag{D\ensuremath{(\pi)}}\label{frm:d_max_entropy}
\end{equation}
\end{minipage}
\vspace{0.1cm}
\end{center}
In the dual,
$A$ is the \emph{incidence matrix} of $\calP$, that is, 
$A\in \{0, 1\}^{N\times \calP}$ where 
$A_{i,P} = \Iverson{i \in P}$ for all $i \in N$ and $P \in \calP$,
and $\LSE$ denotes the \emph{log-sum-exp} function
\begin{align*}
    \LSE(A^\transp \vartheta) = 
    \log\paren*{\sum_{P \in \calP} \exp\paren*{\sum_{i \in P} \vartheta(i)}}.
\end{align*}
Whereas the primal program has an exponentially large number of variables, the dual has only $n$ and might therefore be more amenable to optimization.
We begin by relating the primal and dual optimal solutions of the convex program.
\begin{proposition}
    \label{prp:gibbs}
    A panel distribution $\lambda$ such that $\pi_{\lambda} = \pi$ and a vector $\vartheta \in \R^N$ are an optimal primal--dual pair iff
    \begin{equation}
        \lambda(P) 
        = \frac{ \exp\paren*{ \sum_{i \in P} \vartheta(i)}}
        {\sum_{Q \in \calP} \exp\paren*{\sum_{i \in Q} \vartheta(i)}} 
        \quad \text{ for all } P \in \calP. \label{eq:expfamily}
    \end{equation}
    Whenever $\pi$ is contained in the relative interior of the vector of selection probabilities.
\end{proposition}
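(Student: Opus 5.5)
The natural route is Lagrangian duality for the entropy program, using KKT conditions together with strong duality (Slater's condition). Throughout, write $M \coloneqq \setst{\pi_\lambda}{\lambda \in \Delta(\calP)} = \operatorname{conv}\{A e_P : P \in \calP\}$ for the set of achievable selection-probability vectors. The hypothesis is that $\pi \in \ri(M)$.

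\textbf{Lagrangian and dual function.} Introduce a multiplier $\vartheta \in \R^N$ for the constraint $A\lambda = \pi$, and keep the simplex constraint implicit. The Lagrangian is
\[
L(\lambda,\vartheta) = H(\lambda) + \iprod{\vartheta}{\pi - A\lambda}.
\]
It is more convenient to flip the sign on $\vartheta$, so that we maximize $H(\lambda) + \iprod{\vartheta}{A\lambda} - \iprod{\vartheta}{\pi}$ over $\Delta(\calP)$.

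By the Gibbs variational principle, $\max_{\lambda \in \Delta(\calP)} \{ H(\lambda) + \sum_P \lambda(P) c_P \} = \LSE(c)$, and the unique maximizer is the softmax $\lambda(P) \propto e^{c_P}$. This follows from nonnegativity of the KL divergence between $\lambda$ and the softmax distribution. Taking $c = A^\transp \vartheta$, the dual function is exactly $\LSE(A^\transp\vartheta) - \iprod{\vartheta}{\pi}$, which is \eqref{frm:d_max_entropy}. Weak duality $H(\lambda) \le \LSE(A^\transp\vartheta) - \iprod{\vartheta}{\pi}$ then holds for every feasible $\lambda$ and every $\vartheta$, with equality iff $\lambda$ is the softmax of $A^\transp\vartheta$, i.e.\ iff \eqref{eq:expfamily} holds. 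Here we use that $\pi_\lambda = \pi$ makes $\iprod{\vartheta}{A\lambda} = \iprod{\vartheta}{\pi}$.

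\textbf{The two directions.}
\begin{itemize}
\item[($\Leftarrow$)] If $\lambda$ is feasible and satisfies \eqref{eq:expfamily} for $\vartheta$, then the duality gap is zero. Hence $\lambda$ is primal optimal and $\vartheta$ is dual optimal simultaneously.
\item[($\Rightarrow$)] Suppose $(\lambda,\vartheta)$ is an optimal pair. If strong duality holds, meaning the optimal values coincide, then $H(\lambda)$ equals the dual value at $\vartheta$. The equality case of the variational inequality then forces \eqref{eq:expfamily}.
\end{itemize}

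\textbf{Main obstacle: strong duality and dual attainment.} The work is in establishing strong duality with an attained dual optimum, which is where $\pi \in \ri(M)$ enters. Since the constraints are affine and the objective is concave and finite on the simplex, the refined Slater condition only asks for a feasible $\lambda$ in the relative interior of the domain $\Delta(\calP)$. That is a strictly positive $\lambda$ with $A\lambda = \pi$.

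To build one, I would use $\pi \in \ri(M)$. Take the uniform $\bar\lambda$ and $\bar\pi = A\bar\lambda \in \ri(M)$. Since $\pi \in \ri(M)$, there is an $\epsilon>0$ such that $\pi' = \pi + \epsilon(\pi - \bar\pi) \in M$. Write $\pi' = A\lambda'$ with $\lambda' \in \Delta(\calP)$. Then
\[
\pi = \tfrac{1}{1+\epsilon}\pi' + \tfrac{\epsilon}{1+\epsilon}\bar\pi
\]
is realized by $\tfrac{1}{1+\epsilon}\lambda' + \tfrac{\epsilon}{1+\epsilon}\bar\lambda$, which is strictly positive.

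Slater's theorem (e.g.\ \citet{Rockafellar70}, Thm.~28.2, or \cite{BoydV04}) then gives zero duality gap and existence of a dual optimal $\vartheta$, which completes ($\Rightarrow$). The dual optimum is not unique: it is determined only up to directions in which $A^\transp\vartheta$ is constant, which is harmless.

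Finally, if $\pi$ were on the relative boundary, the primal optimum would have zeros. The dual infimum would then not be attained, since softmax weights are never zero. This explains why the hypothesis is needed.
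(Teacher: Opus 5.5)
Your proposal is correct. The Gibbs variational identity gives weak duality, with equality exactly at the exponential-family distribution. Your strictly positive feasible $\lambda$, obtained by extending from the uniform distribution's marginals past $\pi \in \ri(\calM)$, verifies the refined Slater condition needed for the ($\Rightarrow$) direction. The paper does not prove this proposition itself but defers to \citet{BoydV04,SinghV14}, and your argument is essentially the standard Lagrangian-duality proof from those references, written out in full.
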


This proposition shows that the maximum-entropy distribution $\lambda_\pi^*$ has a compact representation in exponential-family form when parameterized by the optimal dual variables (see \cite{BoydV04,SinghV14} for a proof of the statement).
It also shows that, for any $\theta \in \R^N$, the exponential-family distribution $\lambda$ defined by \cref{eq:expfamily} is an optimal primal solution for the primal (\hyperref[frm:p_max_entropy]{P$(\pi_{\lambda})$}), i.e., maximizes entropy among all distributions with the same selection probability vector.
It will be natural to write these exponential-family distributions in a multiplicative form, as
\begin{align*}
    \lambda_\mu(P) \coloneqq \frac{\prod_{i \in P}\mu(i)}
    {\sum_{Q \in \calP} \prod_{i \in Q} \mu(i)} 
    \quad \text{ for all } P \in \calP
\end{align*}
for given weights $\mu \in \R_{> 0}^N$, where $\mu(i)$ plays the role of $e^{\theta(i)}$.

This characterization shows that any exponential-family distribution can still be described by a variant of the rejection sampling procedure for uniform sampling we described in the introduction.
Define a probability distribution over the pool members, in which pool member $i$ has mass proportional to $\mu(i)$.
We draw $k$ pool members with replacement from this distribution; check if they are all distinct and happen to satisfy the quotas; and repeat this process until we find a valid panel.
Clearly, each panel $P$'s probability of being sampled in this process is proportional to $\prod_{i \in P} \mu(i)$, which shows that the procedure implements $\lambda_{\mu}$.
This means that, if assembly organizers can explain the different weights $\mu(i)$ for the pool members (say, giving higher weight to those appearing on few panels), the resulting probability distributions are still aligned with common sense. \medskip

\noindent
\textbf{Selection algorithm.} If we know $\mu \in \R_{> 0}^N$, a generalization of our algorithm in \cref{sec:me} can sample panels from its associated distribution $\lambda_\mu$.
To do so, our dynamic program should no longer simply count the number of panels, but compute a weighted sum $\sum_{P \in \calP} \prod_{i \in P} \mu(i)$, which is easy to change (see \cref{apx:gcp}).\footnote{Here, we ignore questions of numerical precision in the dynamic program, and treat it as an oracle. In our practical implementation, we round the $\mu(i)$ to be able to represent the values of the dynamic program with bounded integers, which entails some error.}
So all that remains is to find the optimal dual weights $\mu^*$ that give us the correct selection probabilities.

The dual \eqref{frm:d_max_entropy} has few variables, but the log-sum-exp term in its objective contains a sum ranging over all exponentially many panels.
What helps us optimize it is that the gradient of the dual objective at a point $\theta$ is simply
\[ \left(\frac{\sum_{P \in \calP(i)} \exp(\sum_{j \in P} \theta(j))}{\sum_{P \in \calP} \exp(\sum_{j \in P} \theta(j))} -\pi(i)\right)_{i \in N} = \left(\sum_{P \in \calP(i)} \lambda_\mu(P) - \pi(i)\right)_{i \in N} = \pi_{\lambda_\mu} - \pi \]
for the weights $\mu(i) = e^{\theta(i)}$.

It is a remarkable coincidence that computing the gradients reduces to the one thing we know how to do\,---\,sampling many panels according to the exponential-family distribution $\lambda_\mu$ and using them to estimate the pool members' selection probabilities.
Specifically, the empirical frequency $\overline{\pi}(i)$ of drawn panels that include pool member $i$ is an unbiased estimator for $\pi_{\lambda_\mu}(i)$.
By subtracting the known vector $\pi$ from $\overline{\pi}$, we obtain an unbiased estimator for the gradient and can optimize the dual using stochastic gradient descent.\medskip

\noindent
\textbf{Convergence bounds.} Let $\calM \coloneqq \setst{\pi_{\lambda}}{\lambda \in \Delta(\calP)}$  be the \emph{polytope of selection probabilities}.
Assume for now that the vector of target selection probabilities $\pi$ lies in $\ri_\eta(\calM)$ for some $\eta > 0$ so that strong duality holds. 
(Further below, we discuss the necessity of this assumption and what can be done when it does not hold.) We can then give following convergence bound.

\begin{theorem}
    Given a sortition instance and target selection probabilities $\pi \in \ri(\calM)$, run \textsc{FairMaxEntropy} for $T$ iterations, and let $\lambda$ be the distribution obtained from the last iterate.
    Then, $\lambda$ is an optimal solution for \emph{(\hyperref[frm:p_max_entropy]{P$(\pi_{\lambda})$})} (i.e., the maximum-entropy distribution subject to its own selection probabilities) and $%
    \mathbb{E}\big[\norm{\pi_{\lambda} - \pi}_2^2\big] \leq O(1/\sqrt{T})$.
\end{theorem}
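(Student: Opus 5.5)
The first claim, that the final iterate is optimal for its own program, follows directly from \cref{prp:gibbs}. Whatever iterate $\theta_T \in \R^N$ the stochastic gradient descent returns, the algorithm outputs the exponential-family distribution $\lambda = \lambda_\mu$ with $\mu = e^{\theta_T}$. By \cref{prp:gibbs}, applied with target vector $\pi_\lambda$ in place of $\pi$, this distribution together with $\theta_T$ is an optimal primal--dual pair for (\hyperref[frm:p_max_entropy]{P$(\pi_{\lambda})$}). Equivalently, $\theta_T$ satisfies the stationarity condition $\nabla(\LSE(A^\transp\theta) - \iprod{\theta}{\pi_\lambda}) = \pi_\lambda - \pi_\lambda = 0$ of the convex dual D$(\pi_\lambda)$, so $\theta_T$ is dual optimal, and $\lambda$ is the corresponding primal solution. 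This step is deterministic and does not depend on $T$.

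For the rate, let $f(\theta) \coloneqq \LSE(A^\transp\theta) - \iprod{\theta}{\pi}$. We already know $\nabla f(\theta) = \pi_{\lambda_\mu} - \pi$, so the quantity to bound is $\norm{\nabla f(\theta_T)}_2^2$. I would use three standard facts.
\begin{enumerate}
\item $f$ is $L$-smooth with $L \le k$. Its Hessian is the covariance matrix of the indicator vector of a random panel under $\lambda_\mu$, and the largest eigenvalue of that matrix is at most $\mathbb{E}\norm{A_{\cdot,P}}_2^2 = k$.
\item The stochastic gradient $\overline{\pi} - \pi$ is unbiased. Its variance is at most $k/m \le k$, where $m$ is the number of panels sampled per iteration.
\item $f$ is bounded below. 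This is where the assumption $\pi \in \ri_\eta(\calM)$ enters: strong duality holds, so $\inf f = H(\lambda^*_\pi) > -\infty$. The required bound is $\inf f \ge -\log\card{\calP}$, since $-H(\lambda^*_\pi) \le \log\card{\calP}$ gives $\inf f = H(\lambda^*_\pi) \ge 0 \ge -\log\card{\calP}$. Write $\Delta \coloneqq f(\theta_0) - \inf f$.
\end{enumerate}
With these in hand, the nonconvex-style SGD analysis for smooth functions with step size $\alpha = \Theta(1/\sqrt{T})$ gives
\[
\frac1T\sum_{t<T}\mathbb{E}\norm{\nabla f(\theta_t)}_2^2 \le \frac{2\Delta}{\alpha T} + L\alpha\sigma^2 = O(1/\sqrt{T}).
\]
This uses only smoothness and a lower bound, not the location of the minimizer, which is why the argument avoids needing a bounded dual optimum.

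The main obstacle is that this bound controls the average iterate, or a uniformly random one, rather than the last iterate. I would try two routes, in this order.
\begin{enumerate}
\item Define ``last iterate'' algorithmically as a uniformly random stopping time in $\{0,\dots,T-1\}$, which is the usual convention. That immediately yields $\mathbb{E}\norm{\pi_\lambda - \pi}_2^2 = O(1/\sqrt{T})$.
\item If a genuine last-iterate result is required, exploit convexity together with the interior assumption. Because $\pi \in \ri_\eta(\calM)$, the sublevel sets of $f$ restricted to the affine directions orthogonal to the lineality space of $A^\transp$ are bounded, with radius controlled by $\log\card{\calP}/\eta$. Intuitively, moving $\theta$ by $r$ in such a direction raises $f$ by at least roughly $\eta r$ minus a constant. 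Projecting iterates onto that subspace changes nothing, since those directions leave $\lambda_\mu$ invariant. Then I would apply a known last-iterate bound for SGD on smooth convex functions with a bounded domain to get $\mathbb{E}[f(\theta_T) - f^*] = O(1/\sqrt{T})$, possibly up to a log factor, and convert it with $\norm{\nabla f}^2 \le 2L(f - f^*)$.
\end{enumerate}
The second route also gives the stated $O(1/\sqrt{T})$ bound on the squared gradient norm, which is exactly $\mathbb{E}\norm{\pi_\lambda - \pi}_2^2$, up to the possible logarithmic factor.
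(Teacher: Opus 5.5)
Your first claim is correct; it is the remark after \cref{prp:gibbs} that the paper relies on. The gap is in the rate.

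\textbf{What your main argument proves.} The part you actually carry out is the smooth nonconvex SGD bound with $L\le k$, $\sigma^2\le k/m$ and $\Delta\le\log\card{\calP}$. That computation is right, but it controls $\frac1T\sum_{t<T}\mathbb{E}\|\nabla f(\theta_t)\|_2^2$, i.e.\ a uniformly random iterate. The theorem is explicitly about the last iterate. Your route~1 replaces the statement with a different one, so it does not prove it.

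\textbf{Route~2 is the paper's proof, minus the key ingredient.} The paper's structure is the same as yours:
\begin{itemize}
\item convexity of the dual;
\item stochastic gradients bounded by $2\sqrt{k}$ almost surely;
\item a dual optimum of norm at most $k\log(n)/\eta$ from the $\eta$-interior condition (your sublevel-set estimate $\log\card{\calP}/\eta\le k\log(n)/\eta$ is a fine substitute for the Singh--Vishnoi citation);
\item a last-iterate bound on $\mathbb{E}[d(\theta_T)-d(\theta^*)]$;
\item finally $\|\nabla d\|_2^2\le 2k\,(d-d(\theta^*))$ from $k$-smoothness.
\end{itemize}
What you leave out is the ingredient that yields the stated rate. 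Standard last-iterate analyses of SGD lose a $\log T$ factor. The paper avoids this by using the modified step-size schedule of \citet{JNN21}. That schedule gives an $O(1/\sqrt{T})$ last-iterate guarantee for convex Lipschitz objectives with almost surely bounded stochastic gradients over a \emph{bounded} domain.

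\textbf{The bounded domain must be explicit.} Bounded sublevel sets do not give a bounded domain, because stochastic steps can increase $f$ and leave any sublevel set. Projecting onto the orthogonal complement of the lineality space is vacuous: starting from $\theta_0=0$, every iterate already lies in the span of $\calM-\pi$, since $\overline{\pi},\pi\in\calM$. It bounds nothing. You need projected SGD on an explicit ball of radius $R\ge\log\card{\calP}/\eta$. This ball contains the minimum-norm dual optimum, so the constrained and unconstrained minima coincide, and the smoothness inequality still applies at $\theta_T$.

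\textbf{A correction that works in your favor.} Boundedness below of $f$ does not need the interior assumption. For any $\lambda'$ with $\pi_{\lambda'}=\pi$, the variational formula for log-sum-exp gives $\LSE(A^\transp\theta)\ge\iprod{\theta}{\pi}+H(\lambda')$. Hence $f\ge 0$ for every $\pi\in\calM$, and the interior condition is needed only for the existence of a bounded minimizer. Your random-iterate argument therefore gives an $O(1/\sqrt{T})$ bound with constants independent of $\eta$, even for targets on the boundary of $\calM$. This complements the paper's last-iterate result, but it is not the statement as posed.
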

\begin{proof}
    We define \textsc{FairMaxEntropy} by minimizing (\hyperref[frm:d_max_entropy]{D$(\pi_{\lambda})$}), using stochastic gradient descent with the step-size schedule of \citet{JNN21}, using the dynamic program as an oracle for the gradient.
    (This step-size rule ensures that we can give guarantees for the last iterate, rather than standard bounds which hold on average over the iterates.)

    To apply their convergence result, we must show that the dual objective $d(\theta) \coloneqq \LSE{}(A^\transp \theta) - \langle \theta, \pi \rangle$ is convex, Lipschitz continuous, that there is a dual optimal solution with bounded norm, and that the estimated gradients have bounded norm.
    The dual objective $d$ is a convex because the log-sum-exp function is convex.
    Its gradient has bounded norm $\norm{\pi_{\lambda_\mu} - \pi}_2 \leq \norm{\pi_{\lambda_\mu}}_2 + \norm{\pi}_2 \leq 2\sqrt{k}$, and so does any estimated gradient $\norm{\overline{\pi} - \pi}_2 \leq 2 \sqrt{k}$.
    Finally, the $\eta$-interior condition implies that there is a dual optimum solution $\vartheta^*$ with $\norm{\vartheta^*}_2 \leq k\log(n)/\eta$~\cite[][Thm.\ 4.2]{SinghV14}. 
    Taking $\vartheta$ as the solution obtained at iteration $T$, Theorem 1 of~\citet{JNN21} implies that
    \begin{align*}
        \mathbb{E}\big[d(\theta) - d(\theta^*)\big] \leq O(1/\sqrt{T}).
    \end{align*}
    The dual function is $k$-smooth \cite[][Lem.\ 3.10]{BuergisserLNW20}, which
    together with convexity implies that the norm of the gradient decays proportionally to the suboptimality gap.
    Formally, for any $\theta' \in \R^N$, it holds that $\norm{\nabla d(\theta')}_2^2 \leq 2 \, k(d(\vartheta') - d(\theta^*))$. 
    Therefore, taking $\lambda \coloneqq \lambda_{e^\theta}$ as the distribution obtained from the last iterate, we conclude that
    \begin{equation*}
        \mathbb{E}\big[\norm{\pi_{\lambda} - \pi}_2^2\big] = \mathbb{E}\big[ \norm{\nabla d(\theta)}_2^2\big] \leq 2 \, k \, \mathbb{E}\big[d(\theta) - d(\theta^*)\big] \leq O(1/\sqrt{T}).\qedhere
    \end{equation*}
\end{proof}

This $O(1/\sqrt{T})$ convergence rate is the best we can hope for our optimization problem since the dual objective is not strongly convex~\cite{AgarwalWBR09}.
If $\pi$ indeed lies in the relative interior, the theorem shows that \textsc{FairMaxEntropy} can sample from a maximum-entropy distribution with selection probabilities that are arbitrarily close to the desired vector $\pi$.\medskip

\noindent
\textbf{Discussion of relative-interior assumption.}
Recall that our theorem assumes that the target marginals $\pi$ lie in the relative interior of $\calM$.
If, instead, $\pi$ lies on its boundary, then any distribution $\lambda^*$ must lie on a facet of $\Delta(\calP)$, because $\pi = \pi_{\lambda^*} = A \, \lambda^*$, and the linear function $A$ (i.e., the incidence matrix defined below the convex programs) maps the interior of $\Delta(\calP)$ to the relative interior of $\calM$.
This means that, if $\pi$ is not in the relative interior, some panels have zero mass in $\lambda^*$, which is not possible for any exponential-family distribution and can only be approximated by having some $\theta(i)$ become very negative.
As a result, $\theta$ diverges in the optimization.

Given target selection probabilities $\pi$ on the boundary of $\calM$, we can fix this issue by conducting an interiorization step.
Let $\unifD$ be the uniform distribution over panels, which clearly lies in the relative interior of $\Delta(\calP)$.
As we argued above, its marginals must lie in the relative interior of $\calM$.
Hence, one can introduce a small perturbation to $\pi$ in the direction of $\pi_{\unifD}$ to obtain a $\pi'$ that is in $\ri(\calM)$.
Since $\pi_{\unifD}$ is not known, we can only estimate it by sampling from \textsc{MaxEntropy}, which might find a point in the relative interior, but is not guaranteed to. \medskip

\noindent
\textbf{Approaching arbitrary target probabilities.} The good news is that \textsc{FairMaxEntropy} can actually approach any target vector $\pi$\,---\,even ones that are not realizable (i.e., not in $\calM$), in which case the selection probabilities of \textsc{FairMaxEntropy} approach the projection of $\pi$ on the feasible set $\calM$.
This follows from very recent results by \citet{HiraiS25}, who show that gradient descent converges to the minimizer of the gradient norm (in our case, $\norm{\pi_\lambda - \pi}_2$, where $\lambda$ is the primal distribution for $\theta$) in the limit, even when the function is unbounded.
Though no meaningful bounds are known for the speed of convergence of the gradient, this means that our algorithm is robust to marginals on the boundary of $\calM$, or to errors in the target probabilities $\pi$.

Since gradient descent optimizes $\norm{\pi_\lambda - \pi}_2$, one might be tempted to set $\pi = (\frac{k}{n}, \dots, \frac{k}{n})$, to let \textsc{FairMaxEntropy} find the fairest selection probabilities without prior computation.
This works, in a way, and maximizes the fairness measure defined as the negative $\ell_2$ norm of selection probabilities from the point $\pi$ of perfectly equal probabilities.
Unfortunately, this happens to be a rather poor fairness measure, which tends to select many pool members with zero probability.
This has been previously observed by~\cite[][App.\ D.7]{FlaniganLPW24} for a relaxation of the sortition problem. %

\section{Theoretical Properties}
\label{sec:properties}
In the introduction, we gave arguments that make maximizing entropy an attractive objective on a conceptual level.
In this section, we add desirable properties satisfied by \textsc{MaxEntropy} and \textsc{FairMaxEntropy} requiring more mathematical elaboration and discuss our algorithms according to properties of selection algorithms studied in earlier papers.

\subsection{Simplicity of Distributions}
\label{sec:simplicity}
If a citizens' assembly is to legitimately speak for a population, it is important that the sortition process choosing its members be publicly trusted~\cite{Gasiorowska23}.
This poses a challenge because all selection algorithms, by virtue of solving an $\NP$-hard problem, are somewhat complex and hard for a layperson to understand.
This difficulty, however, can be ameliorated by showing that selection algorithms work in a simple way in combinatorially straight-forward instances, or by stressing natural properties of the distributions.

One sampling algorithm that is broadly understood (and referenced in practitioners' explanations of sortition~\cite{MASSLBP17}) is
\emph{stratified sampling}.
Stratified sampling only applies to a subsetting of the sortition problem, in which there is a single feature and there are no gaps between lower and upper quotas.
For instance, one might stratify by age group, with the values “18--29”, “30--44”, “45--64”, and “65+” and respective targets 6, 8, 7, and 4, for a total of 25 members.
The set of pool members with a given value is called a \emph{stratum}.
To produce the panel, stratified sampling draws, from each stratum, a uniform subset of the targeted size and returns their union.

In this restricted subsetting, we show that \textsc{MaxEntropy} coincides with stratified sampling.
The fact that \textsc{MaxEntropy} generalizes a common-sense selection procedure to more general settings supports the notion that this algorithm itself implements a natural panel distribution.
By contrast, column generation algorithms do not coincide with stratified sampling, which can be seen from the fact that their support captures only a vanishing fraction of possible panels.

\begin{proposition}
    \textsc{MaxEntropy} is equivalent to stratified sampling in the case of a single feature and tight quotas.
\end{proposition}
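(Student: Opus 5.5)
We will show that both procedures induce the same distribution over panels, namely the uniform distribution on $\calP$. By \cref{thm:me_alg}, \textsc{MaxEntropy} outputs a uniformly random element of $\calP$, so it remains to characterize $\calP$ in this setting and to show that stratified sampling is also uniform on it.

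First, we describe $\calP$ explicitly. There is a single feature $f$ with values $V_f$ and tight quotas $\ell_{f,v} = u_{f,v} \eqqcolon k_v$. A subset $U \subseteq N$ is a panel iff $|U \cap N_v| = k_v$ for every $v \in V_f$, where $N_v \coloneqq \setst{i\in N}{f(i)=v}$ is the stratum of $v$. Here we use that the strata partition $N$, which holds because every pool member has exactly one value of $f$. In particular, $|U| = \sum_v k_v = k$ automatically; if the quotas did not sum to $k$, both procedures would be ill-defined (or $\calP$ empty), and we assume consistency as implicit in stratified sampling. Hence the map $U \mapsto (U\cap N_v)_{v\in V_f}$ is a bijection between $\calP$ and $\prod_{v \in V_f} \binom{N_v}{k_v}$, where $\binom{N_v}{k_v}$ is the family of $k_v$-subsets of $N_v$. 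This gives $|\calP| = \prod_v \binom{|N_v|}{k_v}$.

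Second, stratified sampling picks independent uniform $k_v$-subsets $S_v \subseteq N_v$ and returns $\bigcup_v S_v$. Under the bijection above, this is the uniform distribution on the product set. So every panel $P \in \calP$ is returned with probability $\prod_v \binom{|N_v|}{k_v}^{-1} = 1/|\calP|$, which is exactly the \textsc{MaxEntropy} distribution.

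The only mildly delicate step is the bijection, specifically that the union over disjoint strata loses no information and that the size constraint is implied by the quotas. Both follow from the strata partitioning $N$. The same reasoning also gives the claim for \textsc{FairMaxEntropy} with targets $k_v/|N_v|$: uniform on $\calP$ is the unconstrained maximum-entropy distribution and attains these marginals, so it is also optimal for \eqref{frm:p_max_entropy}.
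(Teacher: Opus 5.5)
Your proof is correct and follows essentially the same route as the paper. Both arguments identify $\calP$ with the product of the families of $k_v$-subsets of the strata, using $\sum_v k_v = k$, and observe that stratified sampling gives each such panel probability $\prod_v \binom{|N_v|}{k_v}^{-1}$, i.e., it is uniform on $\calP$, just as \textsc{MaxEntropy} is by \cref{thm:me_alg}. Your extra remark about \textsc{FairMaxEntropy} with targets $k_v/|N_v|$ is also valid, since the unconstrained entropy maximizer is feasible for \eqref{frm:p_max_entropy}.
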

\begin{proof}
    Consider a sortition instance with a single feature $f$, values $V_f = \{v_1, \dots, v_m\}$, and targets $k_{v_j} = \ell_{f,v_j} = u_{f,v_j}$ for $j \in [m]$.
    Set $N_j \coloneqq \setst{i \in N}{f(i)=v_j}$ for the stratum belonging to $v_j$.
    For this instance to be feasible, it must hold that $k = \sum_{j=1}^m k_j$.
    It suffices to observe that
    \[ \calP = \setst{P \subseteq N}{\card{P} = k\text{ and }\card{N_i \cap P} = k_i\text{ for all } i \in [m]} \]
    also describes the set of all panels that can be selected by stratified sampling, and that each of these is selected with an equal probability, namely
    \[ \prod_{j = 1}^m 1/{\textstyle \binom{|N_i|}{k_i}}. \qedhere \]
\end{proof}

One attractive property of stratified sampling is that it treats pool members with the same value perfectly symmetrically.
Not only do two such pool members have the same probability of being selected but they, for example, also have the same probability of being selected jointly with a third pool member.
We generalize this observation in the following axiom.
\begin{definition}
A panel distribution $\lambda$ satisfies \emph{higher-order equal treatment of equals} if, for any two pool members $i, j$ with identical feature-value vectors and, if applicable, equal target selection probabilities, the distribution $\lambda$ is invariant under swapping $i$ and $j$. That is, if $\tau_{i,j} P$ denotes the swapping of $i$ for $j$ and $j$ for $i$ in $P$, and if $\tau_{i,j}\lambda$ is the distribution defined by $(\tau_{i,j} \lambda)(P) \coloneqq \lambda(\tau_{i,j} P)$ for all $P \in \calP$, it holds that $\lambda = \tau_{i,j} \lambda$.
\end{definition}
Note that this notion is substantially stronger than the \emph{equal treatment of equals} defined by \citet[][App.\ 15.3]{FGG+21}, who only require $i$ and $j$ to have equal selection probabilities.

\begin{theorem}
    \label{thm:anonimity}
    Given a sortition instance, the probability distributions of \textsc{MaxEntropy} and \textsc{FairMaxEntropy} (when optimized to optimality) satisfy higher-order equal treatment of equals.
\end{theorem}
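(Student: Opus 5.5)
The plan is to use a symmetry argument built on the uniqueness of the maximum-entropy distribution. Fix two pool members $i, j$ with $w(i) = w(j)$ and, in the fair case, $\pi(i) = \pi(j)$. The first step is to check that $\tau_{i,j}$ is a bijection of $\calP$ onto itself. For any $P \subseteq N$, the profile $w(\tau_{i,j}P)$ equals $w(P)$, because swapping $i$ and $j$ either leaves $P$ unchanged or exchanges one member for another with the same feature-value vector. So $\tau_{i,j}P$ satisfies exactly the same quotas as $P$ and has the same size. Since $\tau_{i,j}$ is an involution, it permutes $\calP$. It follows that $\lambda \mapsto \tau_{i,j}\lambda$ maps $\Delta(\calP)$ onto itself and preserves entropy, because $H(\tau_{i,j}\lambda)$ sums the same multiset of probabilities.

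For \textsc{MaxEntropy}, \cref{thm:me_alg} says the output distribution is uniform on $\calP$. A uniform distribution is invariant under any permutation of $\calP$, so this case is immediate. Equivalently, the uniform distribution is the unique entropy maximizer over $\Delta(\calP)$, and $\tau_{i,j}$ of a maximizer is again a maximizer.

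For \textsc{FairMaxEntropy} at optimality, the output is the maximum-entropy distribution $\lambda^*_\pi$ solving \eqref{frm:p_max_entropy}. The key step is to show that $\tau_{i,j}\lambda^*_\pi$ is feasible for \eqref{frm:p_max_entropy}. I would compute
\[ \pi_{\tau_{i,j}\lambda}(h) = \sum_{P \ni h} \lambda(\tau_{i,j}P) = \sum_{Q \ni \tau_{i,j}(h)} \lambda(Q) = \pi_\lambda(\tau_{i,j}(h)), \]
where $\tau_{i,j}$ acts on pool members by swapping $i$ and $j$. So the marginals are the marginals of $\lambda$ with the coordinates $i$ and $j$ exchanged. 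Because $\pi(i) = \pi(j)$, the target vector is invariant under this exchange, and hence $\pi_{\tau_{i,j}\lambda^*_\pi} = \pi$. The transformed distribution is therefore feasible and has the same entropy. Strict concavity of $H$, noted in the preliminaries, gives a unique maximizer over the convex compact feasible set, so $\tau_{i,j}\lambda^*_\pi = \lambda^*_\pi$.

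Alternatively, when $\pi \in \ri(\calM)$, I could use the exponential-family form from \cref{prp:gibbs}: by the same invariance the dual optimum can be taken with $\theta(i) = \theta(j)$, which makes the symmetry visible directly. The uniqueness argument seems cleaner, though, because it needs no interiority assumption.

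The only delicate point is interpreting ``optimized to optimality''. If the target $\pi$ is not realizable, the limit point is the maximum-entropy distribution for the projection of $\pi$ onto $\calM$. I would handle this by noting that the projection of a $\tau_{i,j}$-invariant $\pi$ onto the $\tau_{i,j}$-invariant convex set $\calM$ is itself invariant, by uniqueness of Euclidean projection. The argument above then applies with that projected target.
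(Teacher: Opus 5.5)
Your proposal is correct and follows essentially the same route as the paper. Both show that swapping $i$ and $j$ maps $\calP$ to itself, so the swapped distribution stays feasible for the primal program (its marginals are those of $\lambda^*$ with coordinates $i,j$ exchanged, and $\pi(i)=\pi(j)$) and has the same entropy; strict concavity then forces it to equal the unique maximizer. Your added remark on unrealizable targets, via invariance of the Euclidean projection onto the swap-invariant set $\calM$, is a sound extension that the paper does not address.
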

\begin{proof}
    For \textsc{MaxEntropy}, this is immediate since all panels have an equal probability (and swapping pool members with identical feature-value vectors preserves quota satisfaction).

    Now, let $\lambda^*$ denote the optimal panel distribution for \textsc{FairMaxEntropy}, i.e., the optimal solution to \eqref{frm:p_max_entropy}.
    Fix two pool members $i,j$ with equal feature-value vectors and target probabilities $\pi(i)=\pi(j)$, and consider $\tau_{i,j} \lambda^*$ as defined above.
    Clearly, $\pi_{\tau_{i, j} \lambda^*}(i) = \pi_{\lambda^*}(j) = \pi(j) = \pi(i)$, $\pi_{\tau_{i, j} \lambda^*}(j) = \pi_{\lambda^*}(i) = \pi(i) = \pi(j)$, and all other pool members' selection probabilities are also unchanged.
    Hence, $\tau_{i, j} \lambda^*$ is also a feasible solution to \eqref{frm:p_max_entropy}.
    Since we just permuted the probability masses of panels, $\tau_{i,j} \lambda^*$ must furthermore have the same, maximal entropy as $\lambda^*$.
    Since the entropy objective is strictly concave, the maximizer is unique, and hence $\lambda^* = \tau_{i,j} \lambda^*$.
\end{proof}

\subsection{Resistance to Manipulation}
\label{sec:manip}
When recruiting the members for an assembly, the pool members report their own feature-value vectors when opting into the pool.
This process might be vulnerable to misreporting by a group of malicious agents who seek to strategically change their own and others' selection probabilities.

Under the assumptions made by \citet{FlaniganLPW24}, who first studied this issue theoretically, \textsc{MaxEntropy} achieves optimal manipulability across three measures. \emph{Internal manipulability} measures the largest increase in selection probability that can be achieved for one of the members of the manipulating coalition; \emph{external manipulability} measures the largest decrease in selection probability that can be imposed on a non-coalition member; and \emph{composition manipulability} measures the largest amount of seats that a malicious group can, in expectation, misappropriate in favor of some group. 

\citet{FlaniganLPW24} showed that both \textsc{LexiMin} and \textsc{Nash} are arbitrarily manipulable; that is, as the pool grows, malicious agents can raise their selection probabilities by almost 1. They also show that minimizing the $\ell_\infty$ distance to the uniform vector of selection probabilities achieves optimal manipulability. There are two drawbacks here. First, minimizing the $\ell_\infty$ distance typically results in many pool members having zero selection probability, making it a poor choice of algorithm.
Second, their analysis is in a continuous relaxation, meaning that their algorithms only satisfy ex-ante targets rather than ex-post quotas.

We show that, on the same set of assumptions, adapted to the non-relaxed case with ex-post quotas, \textsc{MaxEntropy} also achieves optimum manipulability for all three measures, and it does so without introducing the degeneracies of the $\ell_p$ norm.
\begin{theorem}[Resistance to Manipulation, Informal statement]
    Given a sortition instance satisfying an equivalent set of assumptions as the ones described in \cite{FlaniganLPW24}, \textsc{MaxEntropy} achieves optimum manipulability.
\end{theorem}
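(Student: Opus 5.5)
The plan is to recall the setting of \citet{FlaniganLPW24} and adapt it to ex-post quotas, then bound the three manipulability measures for the uniform distribution over panels. Their model takes a large pool, where each feature-value vector $\omega \in \calW$ is held by a number of pool members $n_\omega$ growing proportionally to $n$. A coalition of $c$ manipulators, $c$ constant or sublinear, may misreport their vectors, and the quotas are fixed functions of the reported pool composition. I would impose the analogous assumptions here. The panel size $k$ is fixed. Every feature-value vector appearing in the pool has many holders, at least $\alpha n$ for a constant $\alpha>0$. The quotas are derived from population targets rather than from the pool, so that misreports do not move them. The benchmark is the lower bound of \citet{FlaniganLPW24}: any algorithm can be manipulated by $\Omega(k/n)$-ish amounts, roughly order $c$ times the typical selection probability, in each measure. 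So it suffices to show that \textsc{MaxEntropy} has manipulability $O(c\,k/n)$ (internal and external) and $O(c)$ seats (composition).

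The key structural step is to write the uniform distribution over $\calP$ as a two-stage process. First draw a profile type $t=(t_\omega)_{\omega\in\calW}$ of how many panel members have each vector. The probability of $t$ is proportional to $\prod_\omega \binom{n_\omega}{t_\omega}$, restricted to types whose profile is quota-compliant. Then draw a uniform $t_\omega$-subset of each class, exactly as in the stratified-sampling proposition above. A member of class $\omega$ is therefore selected with probability $\mathbb{E}[t_\omega]/n_\omega$. A manipulation by $c$ agents changes the class sizes $n_\omega$ by at most $c$ in total. Since $k$ is fixed and $n_\omega\ge \alpha n$, each weight ratio $\binom{n_\omega\pm c}{t}/\binom{n_\omega}{t}$ equals $1+O(ck/n)$ uniformly over $t\le k$. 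Hence the distribution over types moves by $O(ck/n)$ in total variation, and $\mathbb{E}[t_\omega]$ moves by $O(ck^2/n)$.

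The three bounds then follow. \emph{External manipulability:} a non-coalition member's probability $\mathbb{E}[t_\omega]/n_\omega$ changes by $O(ck^2/n^2)+O(c k/n^2)$, which is at most $O(ck/n)$. \emph{Internal manipulability:} a manipulator who joins class $\omega'$ gets $\mathbb{E}'[t_{\omega'}]/n'_{\omega'}\le k/(\alpha n)$. This is already $O(k/n)$, matching the optimal order, because the construction forbids tiny classes. \emph{Composition manipulability:} the expected number of seats held by any group is a sum of $\mathbb{E}[t_\omega]$ over the relevant classes, together with the at most $c$ manipulators themselves. It therefore changes by $O(c)$, or $O(ck^2/n)+c$.

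The main obstacle is the regime assumption. I need every class that is present (or created by a misreport) to be large. I also need the set of quota-compliant types to be nonempty both before and after manipulation; otherwise an infeasibility flip could move probabilities by $\Theta(1)$. I would first try to state the assumption so that classes into which manipulators may report already have linear size, which keeps the feasible type set unchanged. Then only the binomial weight ratios move, and the estimate above goes through cleanly. If the paper's formal statement instead allows new classes, I would handle them with the crude bound $k/c$ per newly created class and check that it still matches the \citet{FlaniganLPW24} lower bound.
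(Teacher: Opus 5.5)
\textbf{There is a genuine gap: the misreports you set aside are not handled.} The manipulability measures maximize over \emph{all} joint misreports $\tilde w|_C$, so a manipulator may report a vector held by few or no truthful pool members. Your argument fails for such misreports at three points:
\begin{itemize}
\item The class $\omega'$ she joins then has $n'_{\omega'} = O(c)$, so $\mathbb{E}'[t_{\omega'}]/n'_{\omega'}$ yields no $O(k/n)$ bound.
\item The set of quota-compliant types changes, so your total-variation estimate, which compares weights on a fixed support, no longer controls the external bound.
\item Your fallback of ``$k/c$ per newly created class'' does not tend to $0$ as $n \to \infty$, so it cannot match the $\Omega(k/n)$ lower bound of \citet{FlaniganLPW24}.
\end{itemize}

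\textbf{Your regime is also narrower than the one the paper uses.} You assume every vector in the pool has at least $\alpha n$ holders. The paper's $\kappa$-richness only asks for some subset $W^*$ of vectors, each held by at least $\kappa n + k$ members, from which one panel can be formed. Your estimate $\binom{n_\omega \pm c}{t}/\binom{n_\omega}{t} = 1 + O(ck/n)$ requires $c = o(n/k)$, while the formal result allows $c \le \kappa' n$ for any $\kappa' < \kappa$. Finally, your composition target of $O(c)$ seats is far from optimal: with tight quotas the attainable order is $O(ck/n)$.

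\textbf{The missing idea is that no stability analysis is needed.} Richness alone bounds \emph{every} pool member's selection probability, whatever the size of her class.
\begin{itemize}
\item \emph{Counting bound.} Let $P_0$ use $p_0(w)$ members of each $w \in W^*$. Then $|\mathcal{P}| \ge \prod_{w \in W^*} \binom{\kappa n + k}{p_0(w)} \ge (\kappa n)^k/k!$. At most $\binom{n-1}{k-1} \le n^{k-1}/(k-1)!$ panels contain a given $i$, so $\pi(i) \le k/(\kappa^k n)$.
\item \emph{Richness survives manipulation.} A coalition with $c \le \kappa' n$ removes at most $c$ members from each class in $W^*$. So the manipulated instance is $(\kappa - \kappa')$-rich, and $\tilde\pi(i) \le k/((\kappa-\kappa')^k n)$ for all $i$. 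This includes manipulators who report brand-new vectors.
\item \emph{Internal and external.} Both are then $O(k/n)$ immediately, from $\tilde\pi(i) - \pi(i) \le \tilde\pi(i)$ and $\pi(i) - \tilde\pi(i) \le \pi(i)$.
\item \emph{Composition.} The true group $f = v$ held at least $\ell_{f,v}$ expected seats before manipulation. Afterwards, its members who still report $v$ hold at most $u_{f,v}$, because quotas hold ex post in the manipulated instance. The remaining members are at most $c$ manipulators. So the gain is at most $(u_{f,v} - \ell_{f,v}) + c \max_i \tilde\pi(i) = (u_{f,v} - \ell_{f,v}) + O(ck/n)$.
\end{itemize}

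Your two-stage stability argument is not useless. Restricted to misreports into existing linear-size classes with small $c$, it could give sharper external and composition bounds, with no $u_{f,v} - \ell_{f,v}$ slack term. But it is not what establishes the theorem.
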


We defer a formal treatment of this result to \cref{apx:manip}.
\textsc{Goldilocks} also achieves a notion of optimum manipulability~\cite{BF24}, but under stronger assumptions on this instance, which makes these results hard to compare.

\subsection{Transparency}
\label{sec:transparency}
Now we turn our attention to the \emph{transparency} of the random selection.
To show that pool members are indeed selected with their fair selection probabilities (rather than included based on the whims of the organizers), \citet{FGG+21,FlaniganKP21} perform part of the panel selection in a public event, in what they call a \emph{transparent lottery}.
Specifically, having generated a panel distribution with column generation, they produce from this lottery a list of $m$ (say, 1000) panels, possibly with duplicates.
Then, they draw the final panel uniformly from these $m$ options, using a publicly observable process such as a lottery machine.

\citet{FlaniganKP21} propose a rounding algorithm that, given an explicit representation of the distribution with small support, obtains a list of $m$ panels, such that the selection probabilities when drawing from this list does not deviate much from the selection probabilities of the original distribution, say by at most $O(k/m)$ for each pool member. %
Their approach does not apply to the distributions of \textsc{MaxEntropy} and \textsc{FairMaxEntropy}, since it relies on a distribution that is explicitly given and has small support.
The best we can do is to sample $m$ panels independently from our distribution, in which case a straight-forward concentration bound yields:
\begin{proposition}
Let $\lambda$ be any panel distribution, and create a transparent lottery consisting of $m$ independently drawn panels.
Then, the selection probabilities $\hat{\pi}$ for the transparent lottery satisfy
\begin{align*}
    \norm{\hat{\pi} - \pi_{\lambda}}_\infty \leq \sqrt{\frac{\ln(2n) + \ln(\delta^{-1})}{2m}}.
\end{align*}
with probability at least $1-\delta$, for any $0 < \delta < 1$.
\end{proposition}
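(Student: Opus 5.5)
The plan is to apply Hoeffding's inequality to each pool member separately and then take a union bound over all $n$ pool members.

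Let $P_1, \dots, P_m$ be the panels drawn independently from $\lambda$. Drawing uniformly from this list gives pool member $i$ the selection probability
\[ \hat{\pi}(i) = \frac{1}{m}\sum_{t=1}^m \Iverson{i \in P_t}. \]
For fixed $i$, the indicators $X_t \coloneqq \Iverson{i \in P_t}$ are i.i.d., take values in $[0,1]$, and have mean $\mathbb{P}_{P \sim \lambda}[i \in P] = \pi_\lambda(i)$. So $\hat{\pi}(i)$ is an empirical mean of $m$ bounded i.i.d.\ variables, and the two-sided Hoeffding bound gives, for every $\epsilon > 0$,
\[ \mathbb{P}\big[\,|\hat{\pi}(i) - \pi_\lambda(i)| > \epsilon\,\big] \leq 2\exp(-2m\epsilon^2). \]

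Next, a union bound over $i \in N$ shows that $\norm{\hat{\pi} - \pi_\lambda}_\infty > \epsilon$ has probability at most $2n\exp(-2m\epsilon^2)$. Setting this equal to $\delta$ and solving gives
\[ \epsilon = \sqrt{\frac{\ln(2n) + \ln(\delta^{-1})}{2m}}, \]
which is exactly the claimed bound, holding with probability at least $1-\delta$.

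No step here is a real obstacle. The only care needed is in two places. First, we must use the two-sided version of Hoeffding, which is where the factor $2$ inside $\ln(2n)$ comes from. Second, we must be clear that the randomness is over constructing the list, not over the later draw from it: $\hat{\pi}$ is itself a random vector determined by the list. The correlations between different pool members within the same panel do not matter, because the union bound needs no independence across $i$.
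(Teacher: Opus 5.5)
Your proof is correct. It is essentially the paper's approach: the paper gives no written proof and only says the bound follows from a straightforward concentration bound, and two-sided Hoeffding for each pool member plus a union bound over the $n$ pool members produces exactly the stated $\sqrt{(\ln(2n)+\ln(\delta^{-1}))/(2m)}$.
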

This bound is substantially worse than those of \citet{FlaniganKP21}, since it scales like $O(1/\sqrt{m})$ rather than in $O(1/m)$ in the lottery size $m$, and it only holds probabilistically.
Nevertheless, a larger number of panels $m$ can ensure that selection probabilities are faithfully represented with high probability; for example, in a pool of size 1,000, a transparent lottery of size 10,000 yields a maximum deviation of $2.5\%$ with a failure probability of at most $1\%$.

Note that, since the entropy of any choice over a list of size $m$ is at most $\log(m)$, there is little chance to preserve the high entropy (which requires large support) in a transparent lottery (which should draw from a manageable number of panels).
One upside is that drawing $m$ independent samples and choosing one uniformly among them is of course equivalent to drawing a single sample.
As a result, creating a transparent lottery does not reduce the entropy (or any other properties) of the overall sortition.

\section{Empirical Evaluation}
\label{sec:empirics}
Now we evaluate the empirical properties of our algorithms on $86$ real-world sortition instances provided to us by several organizations, mainly the \emph{Sortition Foundation} (United Kingdom), \emph{MASS LBP} (Canada), and the \emph{newDemocracy Foundation} (Australia).
This dataset is several times larger than those used in previous papers on sortition algorithms~\cite{FGG+21,GMK+25}, which allows us to gain more robust insights into the breadth of problems encountered in practice.
For a list of all instances and their sizes, see \cref{tab:instances} in \cref{apx:instances}.
Instance names (for example, sf\_b\_20) refer to the organization (``sf'' for Sortition Foundation) and the panel size (20).

We benchmark our algorithms against three prior ones: \textsc{LexiMin} \cite{FGG+21}, \textsc{Goldilocks} \cite{BF24}, and \textsc{Legacy}\footnote{\url{https://github.com/sortitionfoundation/stratification-app/}} (a heuristic used by the Sortition Foundation before their adoption of column generation).  %
For \textsc{Goldilocks}, we use a smooth version of the algorithm that replaces the $\ell_\infty$ norm~\cite{BF24} with the $\ell_{100}$ norm, which makes the optimal selection probabilities uniquely determined and is the version that has been used in practical assemblies. %
For both \textsc{LexiMin} and \textsc{Goldilocks}, we use column generation to obtain an explicit distribution, and then target the resulting selection probabilities with \textsc{FairMaxEntropy}.
We refer to the respective runs of \textsc{FairMaxEntropy} as \textsc{Maximum-Entropy Leximin} and \textsc{Maximum-Entropy Goldilocks}.

Due to the large number of experiments, we impose timeouts for the runs.
For column generation, the timeouts were $15$ minutes, $20$ minutes for \textsc{MaxEntropy}, and $30$ minutes for \textsc{FairMaxEntropy} (see \cref{apx:timeouts} for more details). \medskip

\noindent
\textbf{Statistics.}
Since the column generation versions of \textsc{LexiMin} and \textsc{Goldilocks} yield explicit distributions, selection probabilities and other empirical quantities for them are exact rather than estimates.
For the other algorithms, we estimate selection probabilities from the empirical frequencies  of inclusion over a number of sampled panels ($10^5$ for \textsc{MaxEntropy} and \textsc{Legacy} and $10^4$ for \textsc{FairMaxEntropy}).
For each pool member, the number of panels in which they appear is distributed binomially with an unknown success probability.
We compute a Jeffreys $95\%$ confidence interval for each selection probability, which yields the error bars in the plots.
We follow the same approach for estimating the probability that a random panel satisfies a feature's quotas.

\subsection{Implementation and Running Times}
\label{sec:implementation}
Implementing our algorithms was a challenging engineering task. The number of states in the dynamic programming algorithm grows rapidly, which makes memory consumption a bottleneck.
Our implementation involves specialized data structures for store the dynamic programming states and a careful memory layout to even make medium-sized instances viable.

The algorithm trades off memory and execution time by deciding which features to satisfy explicitly using dynamic programming and which to satisfy via rejection sampling, which we can do in several threads in parallel.
After each feature is added, the algorithm samples $10^5$ panels satisfying the current set of features to estimate the satisfaction probabilities of the unenforced features. 
Then, the algorithm either attempts to add the feature with the lowest satisfaction probability to the dynamic program or defers it to rejection sampling.

In \cref{tab:running_times}, we give the running times of \textsc{MaxEntropy} on the set of instances studied by~\citet{FGG+21}. 
These running times were computed on a 2023 MacBook Pro M3 with 18 GB of memory, with 5 threads allocated for sampling. 
The exception is sf\_e\_110, for which we used a cloud machine with 64 GB of memory and 5 threads allocated for sampling; the peak memory usage was $50$ GB. 
We were unable to successfully sample a panel from obf\_30, even on the cloud machine.

On average, \textsc{MaxEntropy} was able to sample a panel from $78$ out of $86$ instances and sample more than $10^5$ panels on $68$ of them.
The eight instances for which \textsc{MaxEntropy} failed to sample a panel within the $20$ minute timeout %
all have at least $8$ features, the largest number among the 10 sortition instances studied by \citet{FGG+21}, and most have a large panel size. %
\medskip
\begin{table}[tbh]
\footnotesize
\caption{Running times of \textsc{MaxEntropy} on the instances studied by \citet{FGG+21}.
For all instances except cca\_75 and sf\_e\_110, the sampling times are averaged over $10^4$ samples; for cca\_75, over 100 samples; and for sf\_e\_110, a single sample.}%
\label{tab:running_times}%
\begin{tabular}{rccccc}
\toprule
Instance    & Pool size & Features & Values & Counting (s) & Sampling (ms) \\
\midrule
cca\_75     & 825       & 4        & 66     & 226.5        & 136,398.3             \\
hd\_30      & 239       & 7        & 33     & 448.6        & 30.53              \\
mass\_a\_24 & 70        & 5        & 11     & 1.8          & 0.0045             \\
nexus\_170  & 342       & 5        & 33     & 229.1        & 18.9              \\
obf\_30*    & 321       & 8        & 38     & Timeout            & Timeout                  \\
sf\_a\_35   & 312       & 6        & 22     & 69.9         & 13.8               \\
sf\_b\_20   & 250       & 6        & 20     & 6.1           & 0.074             \\
sf\_c\_44   & 161       & 7        & 20     & 478.4        & 8.9                \\
sf\_d\_40   & 404       & 6        & 19     & 6.2          & 1.7                \\
sf\_e\_110* & 1727      & 7        & 31     & 623.2        & 369,374.2           \\
\bottomrule

\end{tabular}
\end{table}
\begin{table}[tbh]
\footnotesize
\caption{Running times of \textsc{FairMaxEntropy} on \textsc{LexiMin} selection probabilities, on instances studied by \citet{FGG+21} without timeout.}%
\label{tab:running_times_fme}%
\begin{tabular}{rccccc}
\toprule
Instance    & Pool size & Features & Values & Time (s)  \\
\midrule
mass\_a\_24 & 70        & 5        & 11     & 18.8                    \\
sf\_a\_35   & 312       & 6        & 22     & 940.7 \\
sf\_b\_20   & 250       & 6        & 20     & 115                   \\
sf\_c\_44   & 161       & 7        & 20     & 1416.6     \\
sf\_d\_40   & 404       & 6        & 19     & 607.2                     \\
\bottomrule
\end{tabular}
\end{table}

Extending \textsc{MaxEntropy} to \textsc{FairMaxEntropy} imposes additional challenges. 
At each iteration of the algorithm, it (1)~computes the weighted counts with the dynamic program, and (2)~samples enough panels to estimate selection probabilities.
For (1), a simple but powerful observation is that the non-zero states of the dynamic programming are the same for every (positive) weight vector $\mu$.
Thus, steps past the first immediately have access to a perfectly pruned dynamic program, which speeds up the reweighting considerably.
For (2), as the optimization routine progresses, weights yielded by the dual solution increase in magnitude, which increases the computational cost of sampling and would risk numeric instability with floating-point numbers.
We solve this by rounding the weights to integers in the range $[1, 10^{12}]$ and performing the computations with exact integer precision.

Finally, we use ADAM~\cite{KingmaB15} as the underlying solver for practical performance.
In \cref{tab:running_times_fme}, we present selected running times for \textsc{FairMaxEntropy} on \textsc{LexiMin} selection probabilities for those instances from \cref{tab:running_times} for which it does not time out.
In total, we were able to solve $48$ instances with \textsc{LexiMin} selection probabilities and $43$ with those of \textsc{Goldilocks}.
Since \textsc{Goldilocks} is, in general, slower than \textsc{LexiMin}, 
the column generation timed out more often, which caused some distributions to be missing for \textsc{FairMaxEntropy} to optimize on. (See \cref{tab:instances} for more information).

\subsection{Fairness in Selection Probabilities}
\label{sec:fairness}
We start our discussion of empirical properties by comparing the fairness of selection probabilities across the selection algorithms. 
In general, there are no fairness guarantees for \textsc{MaxEntropy}, as the selection probabilities rely on the structure of the pool,
and even if there is a distribution that equalizes all selection probabilities, \textsc{MaxEntropy} might not sample from it.

We give two representative examples in \cref{fig:selec_prob}.
Regarding minimum selection probabilities, no algorithm can perform better than \textsc{Leximin}, since it optimizes that measure. \textsc{Goldilocks} is often quite close, but might loosen the lower probabilities in favor of reducing the upper probabilities.
Both \textsc{MaxEntropy} and \textsc{Legacy} tend to have some pool members with very low selection probabilities, with \textsc{Legacy} substantially closer to zero~\cite{FGG+21}.
Regarding the maximum selection probability, \textsc{MaxEntropy}
sits halfway between the best performer \textsc{Goldilocks} and the worst, \textsc{Leximin}, which has the tendency to assign very high selection probabilities to some pool members.
As illustrated in the plot for sf\_c\_44, whenever there is a pool member that is contained in all panels, \textsc{Goldilocks} tends to assign high selection probabilities to multiple members, whereas \textsc{MaxEntropy} does not.

We use the \emph{Gini inequality coefficient} and \emph{geometric mean} as global measures of fairness that go beyond minimum and maximum selection probabilities, the results are contained in \cref{tab:fairness_avg_med}.
On those metrics, \textsc{MaxEntropy} sits again between the unfairer \textsc{Legacy} and fairer column-generation optimizing a notion of fairness. %
\begin{figure}[tbh]
  \begin{subfigure}[t]{0.49\textwidth}
    \centering
    \includegraphics[width=\linewidth]{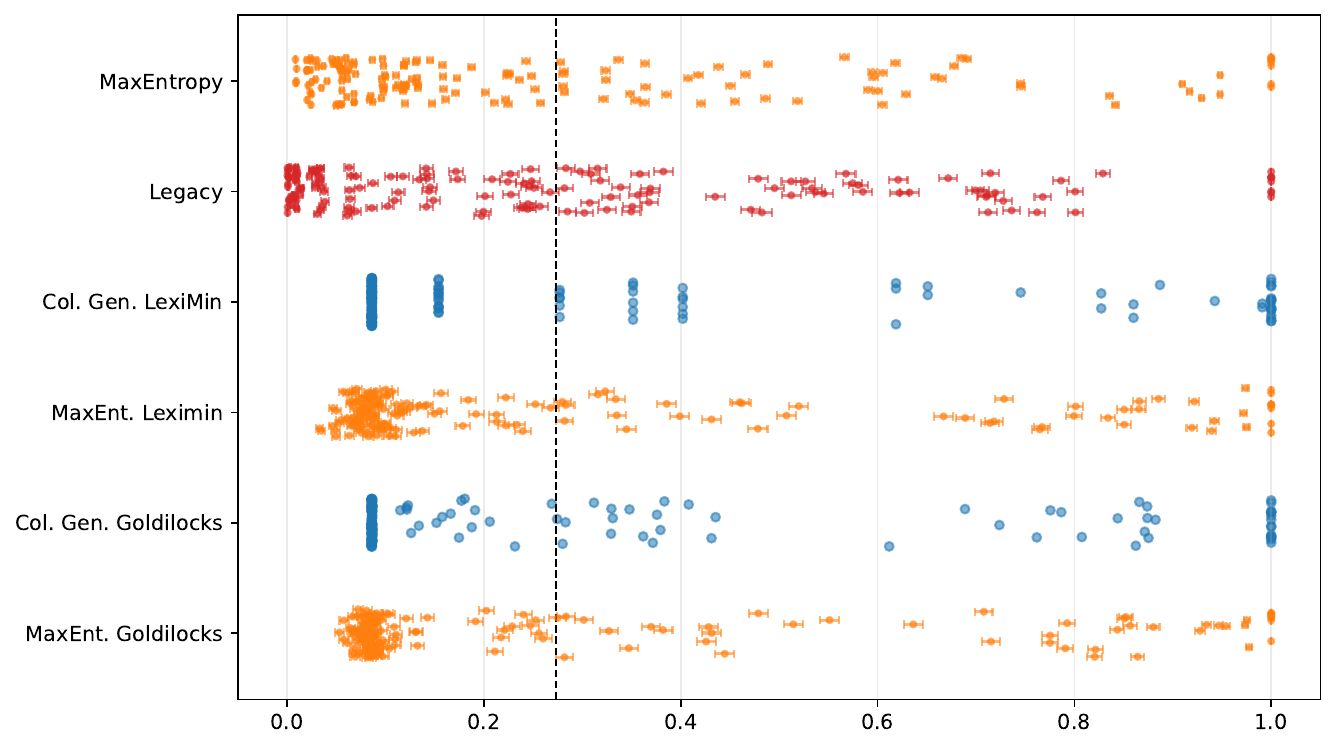}
    \caption{Selection probabilities for sf\_c\_44.}
    \label{fig:selec_prob_c}
  \end{subfigure}
  \hfill
  \begin{subfigure}[t]{0.49\textwidth}
    \centering
    \includegraphics[width=\linewidth]{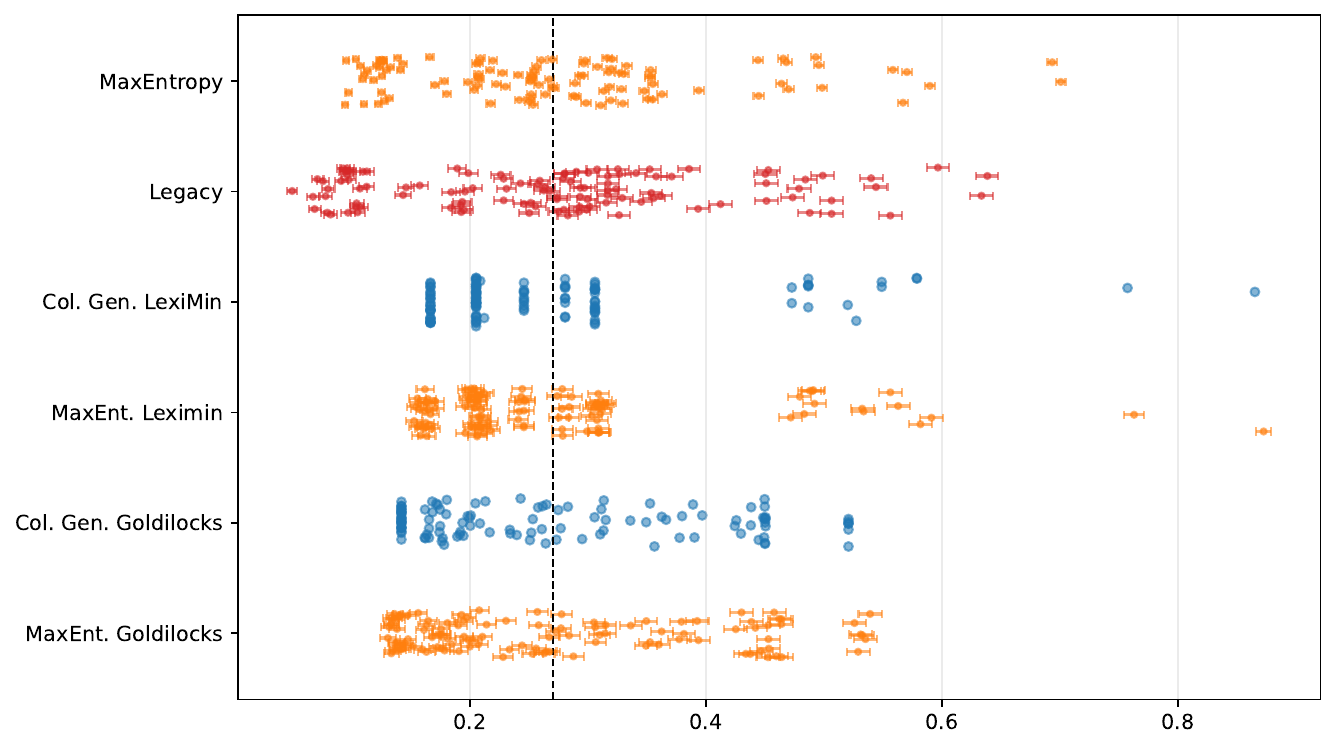}
    \caption{Selection probabilities for sf\_k\_30.}
   \end{subfigure}
  \caption{Strip plots for the selection probabilities, each point represents a pool member and its position represents the selection probability. 
  Vertical lines mark the equalized selection probabilities $(\frac{k}{n}, \dots, \frac{k}{n})$.}
  \label{fig:selec_prob}
\end{figure}
\begin{table}[tbh]
\setlength{\tabcolsep}{4pt}
\centering
\footnotesize
\caption{Aggregate fairness metrics (average and median) across all 47 common instances. Metrics are based on sample averages for algorithms other than column generation. “$\blacktriangle$” means “higher is better”.}
\begin{tabular}{r*{4}{cc}}
\toprule
& \multicolumn{2}{c}{\smash{\textsc{MaxEntropy}}} & \multicolumn{2}{c}{\smash{\textsc{Legacy}}} & \multicolumn{2}{c}{\smash{\textsc{Leximin}}} & \multicolumn{2}{c}{\smash{\textsc{Goldilocks}}} \\
\cmidrule(lr){2-3} \cmidrule(lr){4-5} \cmidrule(lr){6-7} \cmidrule(lr){8-9}
& Avg. & Med. & Avg. & Med. & Avg. & Med. & Avg. & Med. \\
\midrule
Gini ($\blacktriangledown$) & 0.478 & 0.482 & 0.505 & 0.512 & 0.441 & 0.448 & 0.443 & 0.448 \\
Geometric mean ($\blacktriangle$) & 0.122 & 0.077 & 0.096 & 0.062 & 0.129 & 0.083 & 0.129 & 0.083 \\
Minimum probability ($\blacktriangle$) & 0.034 & 0.008 & 0.021 & 0.001 & 0.080 & 0.052 & 0.077 & 0.050 \\
Maximum probability ($\blacktriangledown$) & 0.909 & 1.000 & 0.876 & 1.000 & 0.975 & 1.000 & 0.958 & 1.000 \\
\bottomrule
\end{tabular}
\label{tab:fairness_avg_med}
\end{table}

The \textsc{FairMaxEntropy} distributions in \cref{fig:selec_prob_c} only completed two iterations of gradient descent in the allocated time.
Even though the distributions have clearly not converged to their column-generation targets in this time, they have made significant progress in that direction from their starting point, i.e., the \textsc{MaxEntropy} distribution, in particular in terms of the minimum probability.
This is encouraging, since it shows that even a few iterations of gradient descent may be worthwhile to avoid low selection probabilities.

\subsection{Intersectional Diversity}
\label{sec:diversity}
One of the central ideals of sortition is \emph{representativeness}: if the panel was a truly uniform sample from the population, the selected panel approximates the population composition across all dimensions of interest \cite{Engelstad89,CM99}.
The quotas imposed by practitioners enforce representation across the quota-protected features, but
they do not directly
constrain the representation of \emph{joint categories} such as \emph{Education $\times$ Income $\times$ Region}.
This limitation is a central concern in the literature on intersectionality, which emphasizes that single-axis categories can obscure systematic disadvantages and patterns of underrepresentation that arise at their intersections~\cite{Wojciechowska19,GillL22,OECD20}.
Imposing quotas on these intersectional groups is generally not possible, since there are so many of them, and since the population share of complex intersectional groups is often not known.

Due to these obstacles, we focus on intersectional diversity rather than representation~\cite{SBJ+20}, by which we mean that it is desirable to see a variety of distinct intersectional groups and a lack of unnecessary correlations between features within the same panel.
We summarize intersectional outcomes using three metrics: two focusing on all features at once, and one focusing on pairs of features in isolation.
We will call \emph{vector count} the expected number of distinct feature-value vectors present in a sampled panel, and apply the information-theoretic measure of \emph{total correlation}~\cite{Watanabe60} to quantify how much a panel's feature-value vector distribution differs from a product distribution over all features.
Since much of the focus is on intersections of two features, we measure, for each pair of features, the \emph{normalized mutual information}.
That is, knowing the panel and drawing a random person from it, how much does revealing this person's value for $f_1$ reduce the information needed to communicate their value for $f_2$? (Larger values mean more correlation.)
The global metrics are contained in \cref{tab:diversity} and the plots for the pairwise mutual information comparisons are in \cref{apx:diveristy}.
All values are normalized to facilitate the comparison between algorithms. We restricted the analysis to the set of 39 instances  where all selection algorithms where able to produce a distribution.

\begin{table}[tbh]
\footnotesize
\caption{Average of expected vector count, average of total correlation, and median normalized mutual information for each selection algorithm. “$\blacktriangle$” means “higher is better”.}
\label{tab:diversity}
\begin{tabular}{rcccccc}
\toprule
                  &  &  & \multicolumn{2}{c}{\smash{\textsc{LexiMin} fairness}} & \multicolumn{2}{c}{\smash{\textsc{Goldilocks} fairness}} \\ 
                  \cmidrule(lr){4-5} \cmidrule(lr){6-7}
                  & \smash{\textsc{MaxEntropy}}  &  \smash{\textsc{Legacy}} & col.\ generation & max.\ entropy & col.\ generation & max.\ entropy \\ 
\midrule
Vector Count ($\blacktriangle$)         & 0.914  & 0.908    & 0.823   & 0.910 ($10.4\%\uparrow$)      & 0.830      & 0.908 \hphantom{1}($9.1\%\uparrow$)\\
Total Corr.\ ($\blacktriangledown$) & 0.599  & 0.602    & 0.655   & 0.601 \hphantom{1}($8.2\%\downarrow$)     & 0.651      & 0.602 \hphantom{1}($7.5\%\downarrow$)     \\ 
Median NMI ($\blacktriangledown$)       & 0.143  & 0.146    & 0.168   & 0.145 ($13.6\%\downarrow$)     & 0.166      & 0.144 ($13.2\%\downarrow$)     \\ 
\bottomrule
\end{tabular}
\end{table}

\cref{tab:diversity} shows that, out of the box, maximum-entropy produces the most diverse panels according to all three measures, whereas column generation performs worst, and similarly bad for both fairness measures.
Given that \textsc{FairMaxEntropy} achieves almost the same diversity as pure \textsc{MaxEntropy}, this is not a consequence of the fairness measure.
We conjecture that the pricing problem, used by column generation to generate new panels, tend to have optimal solutions that have below-average diversity.
\textsc{MaxEntropy} and \textsc{Legacy} show similar results across all metrics, which also supports the assumption that low diversity is an artifact of column generation. %
If one wants to generate panels with even higher vector count, this could be easily incorporated in the dynamic program to sample with maximum entropy from only maximal vector-count panels.

\subsection{Generalization to Unprotected Features}
\label{sec:generalization}
An often-cited ideal~\cite{Fishkin09,VanReybrouck16} is that the citizens' assembly should be ``in miniature an exact portrait of the people at large'' (in the words of John Adams).
No matter which features assembly organizers choose to set quotas on, there will invariably be many more, relevant to the deliberation, not protected by quotas.
In this section, we compare selection algorithms based on their ability to satisfy an unknown feature, that is, one for which there are no quotas; we denote this by the \emph{generalization probability}.
\citet{flanigan2023minipublic} have argued that selection algorithms with fairer selection probabilities may tend be more representative in this situation.

To test this scenario with the available data, we leave out quotas.
That is, for each instance and feature, we pretend that this feature were not part of the instance.
Then, we run each selection algorithm and compute its probability of satisfying the held-out feature.
Since the total number of instances in this experiment was 554, we imposed a stricter timeout for the selection algorithms (10 minutes for column-generation algorithms, 15 minutes for \textsc{MaxEntropy} and \textsc{Legacy} with $10^5$ samples, and 25 minutes for \textsc{FairMaxEntropy} with $10^4$ samples).

\begin{figure}[tbh]
  \centering
  \begin{subfigure}[t]{0.45\textwidth}
    \centering
    \includegraphics[width=\linewidth]{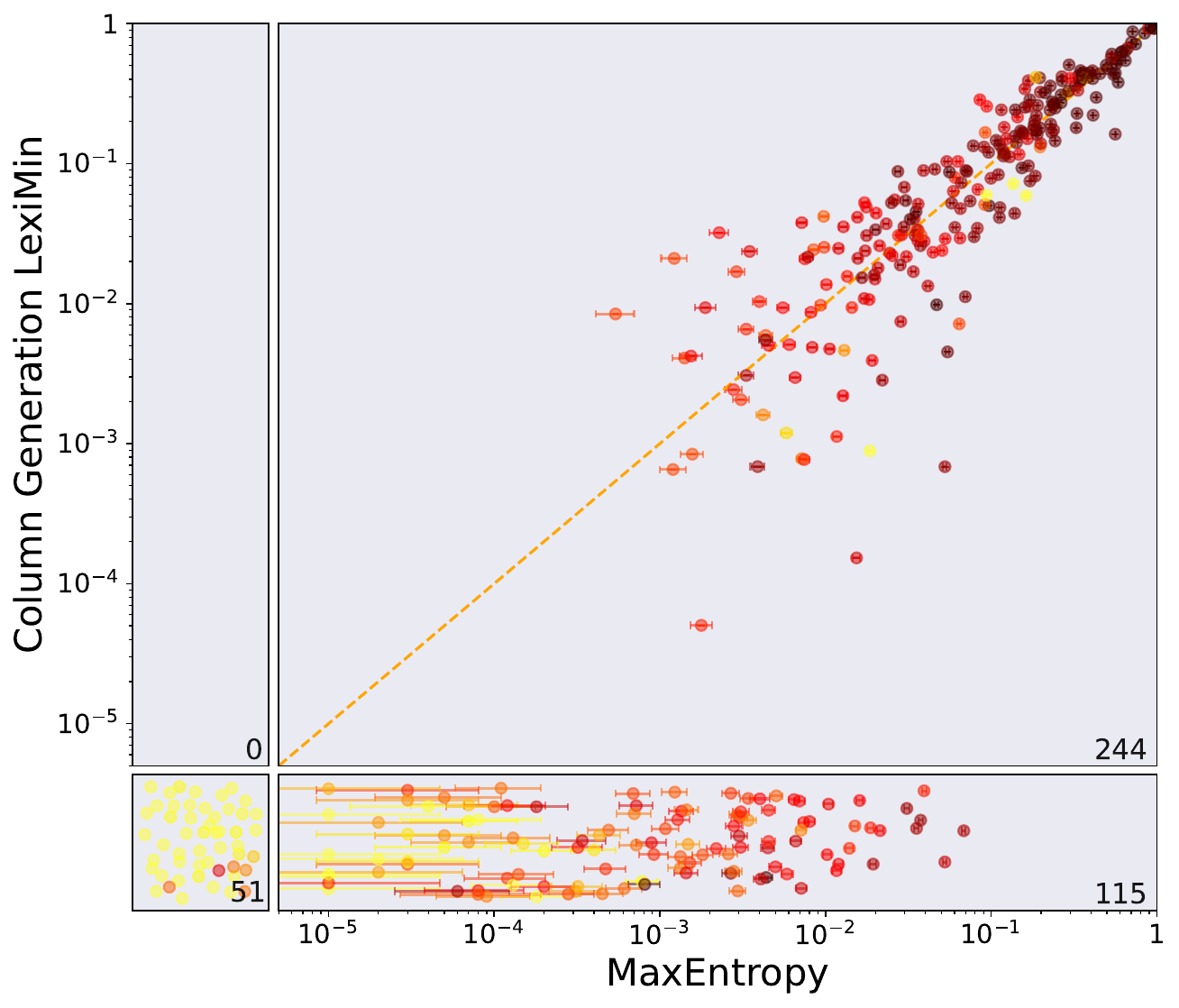}
    \label{fig:left}
  \end{subfigure}\hfill
  \begin{subfigure}[t]{0.535\textwidth}
    \centering
    \includegraphics[width=\linewidth]{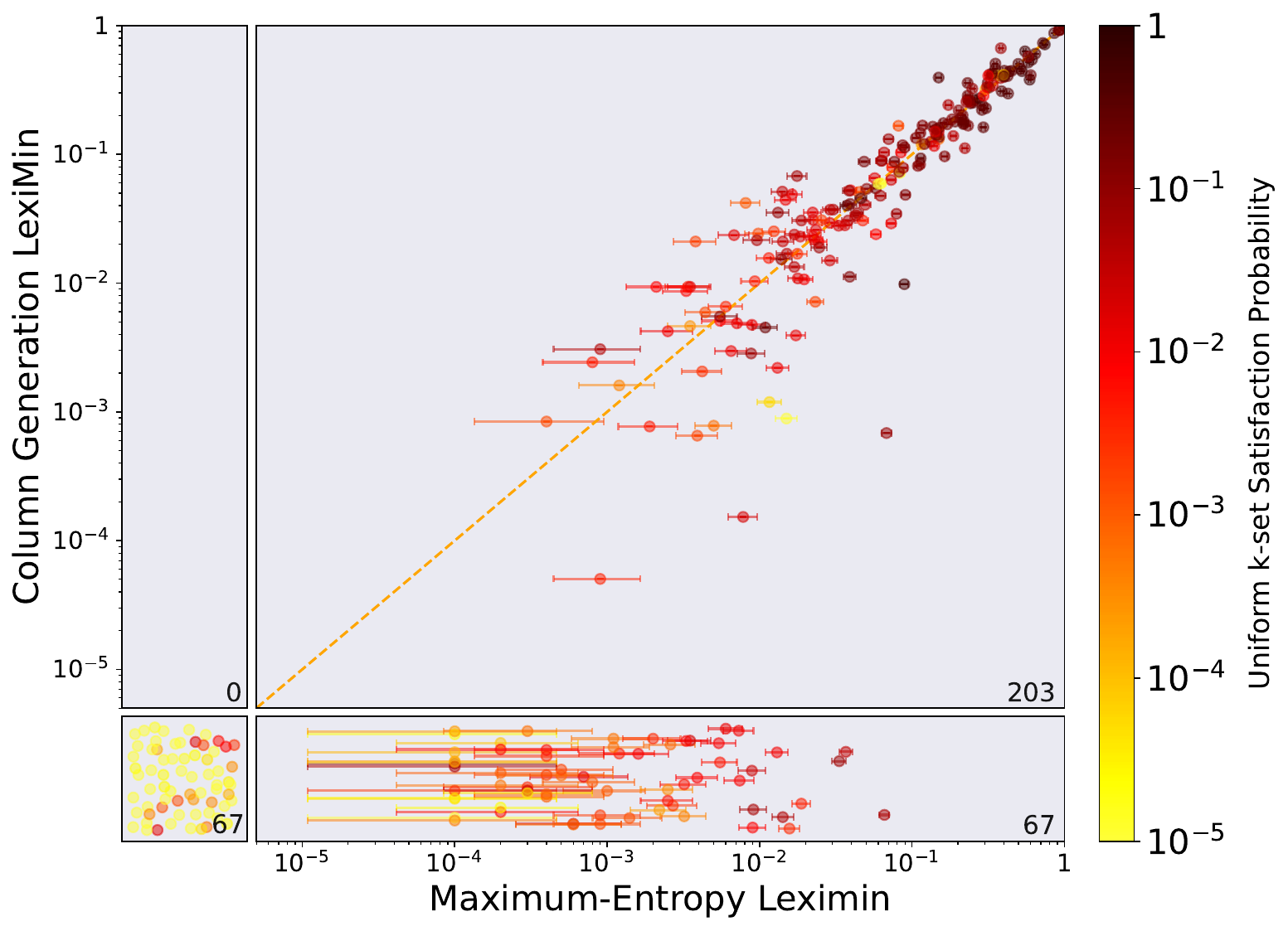}
    \label{fig:right}
  \end{subfigure}
  \caption{Generalization probabilities. Points are colored according to the probability that a uniform set of $k$ pool members satisfies the held-out feature. Points with probabilities/empirical means less than $5 \cdot 10^{-6}$ are moved into side boxes, see footnote.}
  \label{fig:generalization}
\end{figure}

We display two representative comparisons in \cref{fig:generalization} and defer the others to \cref{apx:generalization}.
Both cases illustrate the general finding that column generation has lower satisfaction probabilities for the held-out feature than other approaches.
While many feature-instance pairs have similar satisfaction probabilities in, say, \textsc{MaxEntropy} and column generation \textsc{LexiMin}, there are many that are satisfied with zero probability by the column generation algorithm.\footnote{Since column generation produces an explicit representation, we verify that these probabilities are indeed equal to zero, not just smaller than the cutoff.
For \textsc{MaxEntropy}, the probability of satisfying the held-out feature is computed as a sample mean. Since there are panels satisfying the left-out feature and our algorithms put positive mass on all panels, these probabilities can never be zero, and the probabilities in the lower left corner are simply too small to show up in our samples.}

This is not a simple artifact of the fairness measure, as maximum-entropy \textsc{LexiMin} obtains higher generalization probabilities.
In both plots, when column-generation matches the maximum-entropy, it typically happens on the instance-feature combinations that are already easily satisfiable by sampling $k$ pool members uniformly from the pool, which suggests that those quotas are easy to satisfy.
The pattern of zero generalization probabilities for column generation is not limited to any specific organization's instances, as we show in \cref{fig:generalization_org}.

In our view, the better generalization probabilities of our algorithms validate Jaynes'~\cite{Jaynes57} justification of the maximum-entropy distribution.
By restricting their support to a small subset of panels, column generation algorithms introduce a “bias” that arbitrarily commits to never representing some features in their correct ranges.

\section{Practical Deployment}
\label{sec:panelot}
Since, as we have argued, maximum-entropy algorithms has a range of desirable properties, we are making it available to practitioners.
We release our implementation as open source on Github\footnote{https://github.com/gabrielmorete/maximally-random-sortition}.
We believe that self-hosting our algorithm will be attractive to privacy-concerned organizations, which will be easier since our algorithm does not rely on proprietary optimization libraries.

To facilitate the algorithm's uptake by practitioners, we will also make it available as one of the algorithmic choices on \href{https://panelot.org}{\emph{panelot.org}}.
The left panel \cref{fig:panelot} shows the planned screen in which practitioners can choose between algorithms and learn about their distinct properties.
The figure's right panel shows the execution page, which will update statistics from the run and pool in real time, providing a progress estimate and an illustration of the current state of the dynamic program.
During execution, the interface presents how each successive feature shrinks the count of panels, and which remaining features are hard to satisfy.
This may help practitioners become familiar with the algorithm and help them adapt their quotas if they cannot be solved by \textsc{MaxEntropy}. %
To complete, the algorithm will sample $10^4$ panels to plot the selection probabilities and provide a physical lottery.
Based on practitioner interests and improvements in algorithm performance, we might also make versions of \textsc{FairMaxEntropy} available in the future.

\begin{figure}[tbh]

  \begin{subfigure}[t]{0.64\textwidth}
    \centering
    \fbox{\includegraphics[width=\linewidth]{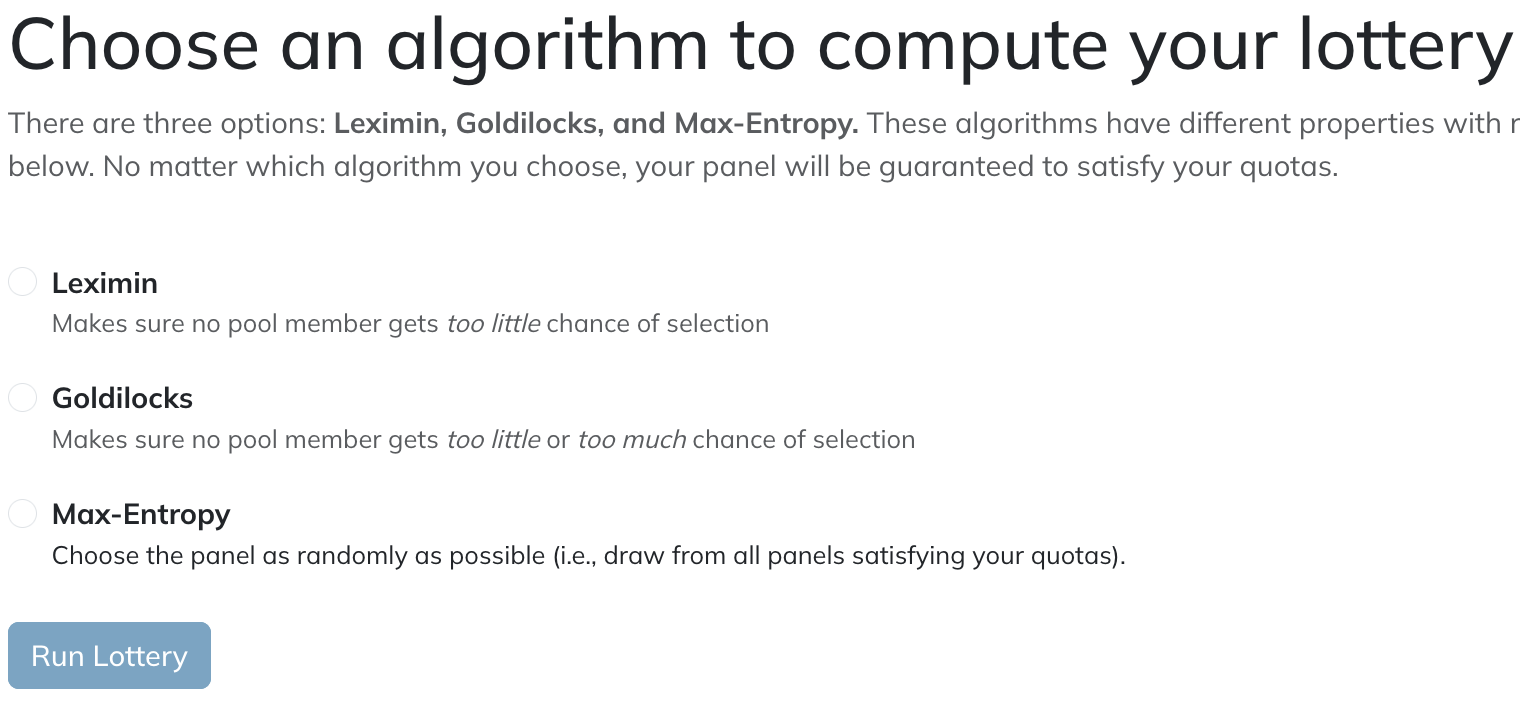}}
  \end{subfigure}
  \hfill
  \begin{subfigure}[t]{0.322\textwidth}
    \centering
    \fbox{\includegraphics[width=\linewidth]{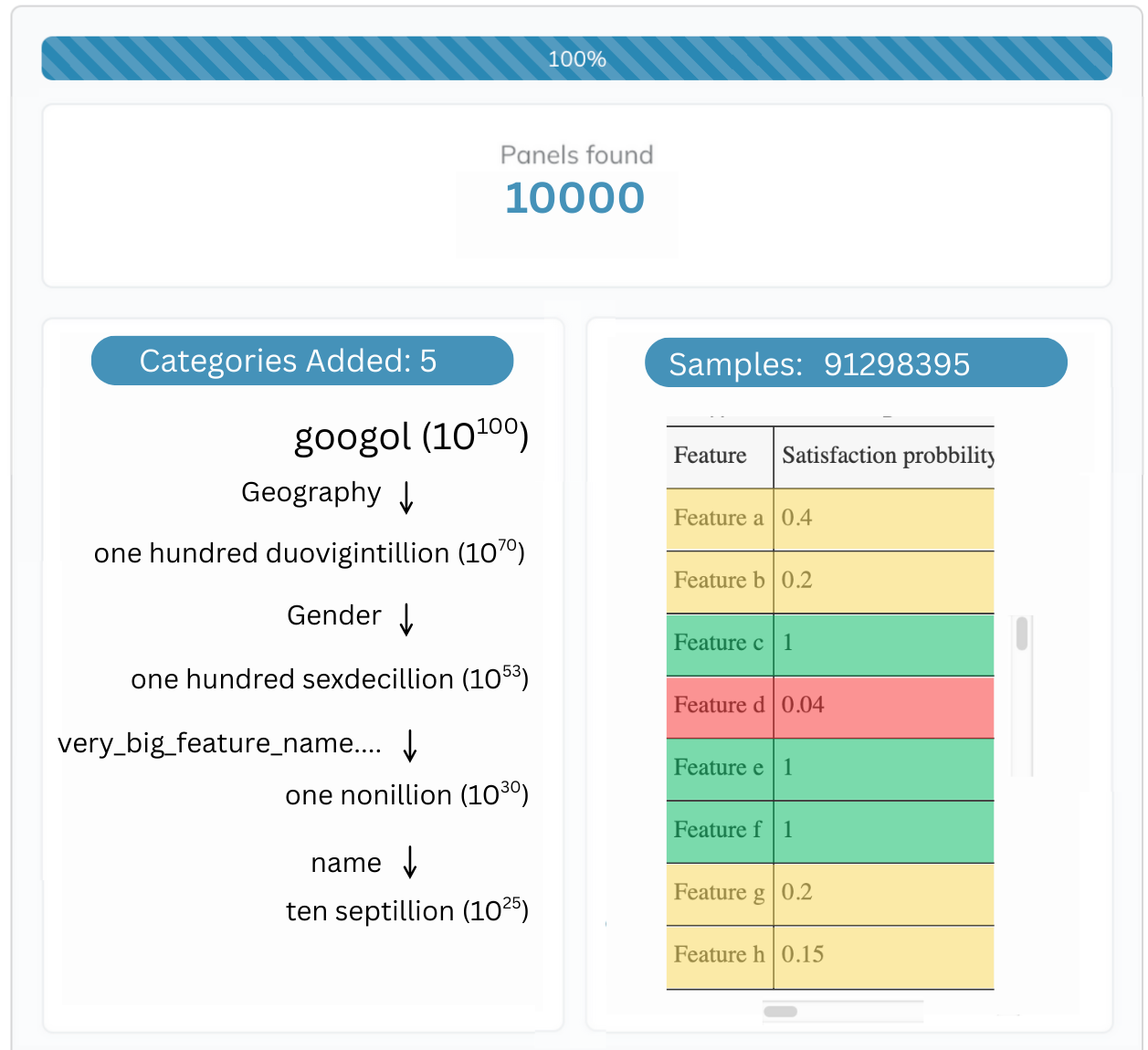}}
   \end{subfigure}
  \caption{Planned interface of Panelot for choosing the selection algorithm (left) and visualization of the state of \textsc{MaxEntropy} (right).}
  \label{fig:panelot}
\end{figure}

\section{Conclusion}
\label{sec:conclusion}
By providing a new selection algorithm with distinct advantages, we hope to contribute to practitioners' efforts to increase public trust in citizens' assemblies.
We believe that this is timely, since higher-stakes uses for assemblies will likely lead to higher scrutiny as well.

Our empirical evaluation points to drawbacks of column generation algorithms, in terms of the diversity of their panels and their generalization to unknown features.
At the same time, it is very encouraging to see that the \textsc{FairMaxEntropy} seemed not subject to these limitations despite converging to the same, fairness-optimal selection probabilities.
These findings suggest that, when possible to compute, maximizing entropy is a worthwhile complementary objective even when the primary objective is fairness.

\section*{Acknowledgments}
We thank Soroosh Shafiee and Noah Stephens-Davidowitz for helpful technical conversations. We thank the Center for Climate Assemblies, Healthy Democracy, MASS LBP, New Democracy, Nexus Institut, Of by For, and the Sortition Foundation for providing the sortition instances for our empirical analysis, and Carmel Baharav and Bailey Flanigan for sharing cleaned versions of the newDemocracy data. We thank Carmel Baharav, Anthony Bucci, and Bailey Flanigan for their contributions in implementing \textsc{MaxEntropy} in the upcoming release of Panelot.

\bibliographystyle{ACM-Reference-Format}
\bibliography{bibliography,paulzotero}

\clearpage

\appendix
\crefalias{section}{appendix}
\crefalias{subsection}{subappendix}      %
\crefalias{subsubsection}{subsubappendix} %

\section*{\Large APPENDIX}
\section{Deferred Details for \nameref{sec:me}}
\label{apx:me}

\subsection{The Generalized Counting Problem}
\label{apx:gcp}
In this appendix, we provide proofs and generalize the dynamic programming approach from \cref{sec:me}.
Throughout this section, fix a sortition instance $\calI$ and weights $\mu\in\R^N_{>0}$.
For $\calU\subseteq\calP$, define the \emph{weighted count} of panels as
\begin{align*}
\Count{\mu}{\calU}\coloneqq
  \sum_{P\in\calU}\prod_{i\in P}\mu_i.
\end{align*}
In particular, $\card{\calP}=\Count{\ones}{\calP}$.
The \emph{generalized counting problem} consists of computing $\Count{\mu}{\calP}$.

Fix an ordering $1,\dots,n$ of pool members. 
We define the \emph{weighted counting function} as follows:
\begin{align*}
\phi_\mu(i, z)\coloneqq \sum_{\substack{U\subseteq \{i,\dots,n\}:\\ z+w(U)\in\overline{\calZ}}} \prod_{t\in U}\mu_t.
\end{align*}
Note that
\begin{equation}
    \label{eqn:phi_count}
    \phi_\mu(1,0)
    = \sum_{\substack{U\subseteq N:\\ w(U)\in\overline{\calZ}}}
    \prod_{t\in U}\mu_t
    = \sum_{P\in\calP}\prod_{t\in P}\mu_t
    = \Count{\mu}{\calP},
\end{equation}
where the second equality uses that $U\subseteq N$ is a panel if and only if $w(U)\in\overline{\calZ}$.

\paragraph{Proof of \cref{prp:phi_heuristic}}
We restate \cref{prp:phi_heuristic} in weighted form.
\begin{proposition}
    \label{prp:phi_heuristic_wgt}
    For all $i \in N$, $z \in \N^{\FV}$, and $F' \subseteq F$, we have that
    $\phi_\mu(i, z) \leq \phi_\mu'(i, \restr{z}{F'})$.
\end{proposition}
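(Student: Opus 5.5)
The plan is to compare the two sums term by term. Both $\phi_\mu$ and $\phi'_\mu$ are sums of the same nonnegative weights $\prod_{t\in U}\mu_t$ over subsets $U\subseteq\{i,\dots,n\}$; they differ only in which $U$ are admitted. So it suffices to show an inclusion of index sets. Write $\overline{\calZ}'$ for the set of quota-compliant profiles of the restricted instance with features $F'$. The reference feature $f_0$ in its definition is chosen inside $F'$. This is allowed because, for every feature $f$ and every $U$, $\sum_{v\in V_f} w(U)_{f,v}=\card{U}$, so the size constraint does not depend on which feature carries it.

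The key step is the following inclusion:
\[ \setst*{U\subseteq\{i,\dots,n\}}{z+w(U)\in\overline{\calZ}} \subseteq \setst*{U\subseteq\{i,\dots,n\}}{\restr{z}{F'}+w'(U)\in\overline{\calZ}'}, \]
where $w'$ is the restricted feature-value map. Take $U$ in the left set and put $y=z+w(U)$. Then $y$ satisfies $\ell_{f,v}\le y_{f,v}\le u_{f,v}$ for all $(f,v)\in\FV$, and in particular for all pairs with $f\in F'$. Restriction commutes with addition, $\restr{(z+w(U))}{F'}=\restr{z}{F'}+w'(U)$, since $w'(t)=\restr{w(t)}{F'}$ coordinatewise. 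For the size constraint, $\sum_{v\in V_{f_0}}y_{f_0,v}=k$ holds for the fixed $f_0$ of $\overline{\calZ}$. To transfer it to a feature in $F'$, use $\sum_v w(U)_{f,v}=\card{U}$ for every $f$. The only extra care is that $z$ itself also has equal coordinate sums across features, which holds for the profiles actually reached from $(1,\vec 0)$. Alternatively, and more cleanly, pick the same $f_0\in F'$ in both definitions. Either way, $U$ lies in the right set.

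Summing the nonnegative weights $\prod_{t\in U}\mu_t$, with $\mu>0$, over the smaller set gives a value no larger than summing over the larger set, which yields $\phi_\mu(i,z)\le\phi'_\mu(i,\restr{z}{F'})$. As a sanity check, taking $\mu=\ones$ recovers \cref{prp:phi_heuristic}.

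I expect the main obstacle to be purely definitional: the size constraint in $\overline{\calZ}$ is phrased through one fixed feature $f_0$, and this must be reconciled with the restricted instance, especially when $F'$ may not contain $f_0$. I would resolve this by fixing $f_0\in F'$ once and for all, interpreting $F'$ as nonempty. The remaining steps are immediate.
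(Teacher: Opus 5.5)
Your proposal is correct and follows the paper's proof. Like the paper, you show that the valid extensions satisfy $\calU(i,z)\subseteq\calU'(i,\restr{z}{F'})$, because restriction to $F'$ commutes with adding $w(U)$ and $\overline{\calZ}'$ imposes only a subset of the constraints of $\overline{\calZ}$; you then sum the positive weights $\prod_{t\in U}\mu_t$ over the smaller set. Your extra care with the size-constraint feature $f_0$ (placing it in $F'$, or restricting to reachable $z$ with equal coordinate sums) makes explicit an assumption the paper's one-line ``obtained by restricting the constraints'' leaves implicit: for arbitrary $z$ and $f_0\notin F'$, the inclusion can actually fail.
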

\begin{proof}
Let $z'\coloneqq \restr{z}{F'}$. Define the \emph{set of valid extensions} as follows
\begin{align*}
    \calU(i, z) \coloneqq \setst{U \subseteq \curly{i, \dots, n}}{z + w(U) \in \overline{\calZ}}
    \quad \text{and} \quad
    \calU'(i, z') \coloneqq \setst{U \subseteq \curly{i, \dots, n}}{z' + w'(U) \in\overline{\calZ}'}.
\end{align*}
If $U \in \calU(i, z)$, then $z + w(U) \in \overline{\calZ}$. 
Thus, $\restr{\paren*{z + w(U)}}{F'} = z' + w'(U) \in \overline{\calZ}'$.
Since $\overline{\calZ}'$ is obtained by restricting the constraints of $\overline{\calZ}$. 
Hence, $\calU(i, z) \subseteq \calU'(i,z')$.
Finally, since $\mu > 0$, we have that
\begin{equation*}
    \phi_\mu(i, z) = \sum_{U \in \calU(i,z)}\prod_{t\in U} \mu_t
    \leq \sum_{U \in \calU'(i, z')}\prod_{t \in U} \mu_t
    = \phi_\mu'(i, z'). \qedhere
\end{equation*}
\end{proof}

\paragraph{Proof of  \cref{prp:unif_dp}}
Here, we consider more natural to state the DP in terms of the feature-vector vectors instead of pool members.
Fix an ordering $\omega_1,\dots,\omega_{n_\calW}$ of the distinct feature-value vectors in the pool.
For each $j \in [n_\calW]$ and $d = 0, \dots, n_{\omega_j}$, denote the \emph{weighted combinations} of $d$ members with feature-value vector $\omega_j$ by
\begin{equation}
    \label{eqn:weighted_convolution}
    C_\mu(j, d)\coloneqq \sum_{\substack{S\subseteq \restr{N}{\omega_j}:\\ \card{S}=d}} \prod_{i\in S}\mu_i.    
\end{equation}
Equivalently, $C_\mu(j,d)$ is the coefficient of $x^d$ in the polynomial $\prod_{i\in \restr{N}{\omega_j}}(1+\mu_i x)$.
We restate \cref{prp:unif_dp} in weighted form.

\begin{theorem}
\label{thm:wgt_dp}
    For every $j \in [n_\calW]$ and $z \in \N^{\FV}$, we have that
    \begin{align*}
        \phi_\mu(j,z) = \sum_{d=0}^{n_{\omega_j}}\,  C_\mu(j,d)\; \phi_\mu(j+1, z + d\ \omega_j),
    \end{align*}
    with base case $\phi_\mu(n_\calW + 1, z) = \Iverson{z \in \overline{\calZ}}$.
\end{theorem}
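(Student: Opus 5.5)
Here $\phi_\mu(j,z)$ is read as the block-indexed counting function. We order the pool so that the members with vector $\omega_1$ come first, then those with $\omega_2$, and so on. Then $\phi_\mu(j,z)$ is the weighted sum over all $U \subseteq \restr{N}{\{\omega_j,\dots,\omega_{n_\calW}\}}$ with $z + w(U) \in \overline{\calZ}$. This equals $\phi_\mu(i_j, z)$ from the per-member definition, where $i_j$ is the first pool member with vector $\omega_j$. The proof is then a direct partition argument, which is the weighted analogue of the one-step recursion in \cref{prp:unif_dp}.

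First we split each admissible $U$ uniquely as $U = S \dotcup U'$. Here $S = U \cap \restr{N}{\omega_j}$ and $U'$ lies in the later blocks $j+1,\dots,n_\calW$. Set $d = \card{S}$, which ranges over $0,\dots,n_{\omega_j}$. Every member of $S$ has feature-value vector $\omega_j$, so $w(S) = d\,\omega_j$ and $z + w(U) = (z + d\,\omega_j) + w(U')$. Hence $U$ is admissible for $(j,z)$ if and only if $U'$ is admissible for $(j+1, z + d\,\omega_j)$. Importantly, this condition does not depend on which $d$-subset $S$ was chosen. The weight is multiplicative, $\prod_{t\in U}\mu_t = \prod_{t \in S}\mu_t \cdot \prod_{t\in U'}\mu_t$.

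Next we group the sum defining $\phi_\mu(j,z)$ by $d$, then by $S$, then by $U'$. Since the admissibility of $U'$ does not depend on $S$, the sum factorises. The factor coming from $S$ is $\sum_{\card{S}=d}\prod_{t\in S}\mu_t = C_\mu(j,d)$ by \eqref{eqn:weighted_convolution}. The factor coming from $U'$ is exactly $\phi_\mu(j+1, z + d\,\omega_j)$. Summing over $d$ gives the claimed recursion.

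For the base case $j = n_\calW + 1$, the only possible $U$ is $\emptyset$, whose weight is the empty product $1$. It is admissible exactly when $z \in \overline{\calZ}$, which gives $\Iverson{z\in\overline{\calZ}}$.

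The only real subtlety is bookkeeping. We must make the block indexing consistent with the per-member $\phi_\mu$ via the ordering. We must also note that the decomposition $U \mapsto (d, S, U')$ is a bijection onto $\{(d,S,U') : \card{S}=d,\ U' \text{ admissible for } (j+1, z+d\,\omega_j)\}$, so that no subset is counted twice. As a sanity check, with $\mu \equiv 1$ the coefficient $C_\mu(j,d)$ reduces to $\binom{n_{\omega_j}}{d}$. This recovers the grouped recursion stated earlier, and with singleton blocks it recovers \cref{prp:unif_dp}.
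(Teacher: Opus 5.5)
Your proof is correct and takes the same route as the paper: split each $U$ uniquely as $S \dotcup T$ with $S$ in the $\omega_j$-block and $T$ in the later blocks, use $w(S) = |S|\,\omega_j$ to group by $d = |S|$, and factor out $C_\mu(j,d)\,\phi_\mu(j+1, z + d\,\omega_j)$. You handle the base case (only $U = \varnothing$, with empty-product weight $1$) and the reconciliation with the per-member indexing more carefully than the paper, which simply appeals to the definition of a panel.
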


\begin{proof}
We prove the statement by induction.
The base case follows from the definition of panel.

For each $j = 1, \dots, n_\calW$, let $N_j$ be the set of pool members with feature-value vectors $\omega_j, \dots, \omega_{n_\calW}$.
Fix $j\in[n_\calW]$ and $z\in \N^{\FV}$. 
Every $U \subseteq N_j$ decomposes uniquely as a disjoint union $U = S \dotcup{} T$, where $S \subseteq \restr{N}{\omega_j}$ and $T \subseteq N_{j + 1}$.
For all $i \in S$, since $w(i)=\omega_j$, we have $w(S) = \card{S} \omega_j$. Thus,
\begin{align*}
\phi_\mu(j,z) & = \sum_{S\subseteq \restr{N}{\omega_j}}
    \prod_{i\in S} \mu_i
    \sum_{\substack{T \subseteq N_{j + 1}:\\
    z + w(S) + w(T) \in \overline{\calZ}}} \prod_{i\in T} \mu_i \\
    & = \sum_{d=0}^{n_{\omega_j}}
    \sum_{\substack{S \subseteq \restr{N}{\omega_j}:\\ \card{S} = d}}
    \prod_{i \in S}\mu_i \phi_\mu(j + 1, z + d\omega_j) \\
    & = \sum_{d=0}^{n_{\omega_j}} C_\mu(j,d)
    \phi_\mu(j + 1, z + d\omega_j). \qedhere
\end{align*}
\end{proof}

As in the uniform case, the coefficients $C_\mu(j,d)$ can be precomputed for all $j$ and all $d\le k$ in total time $O(nk)$, yielding the same complexity.

\subsection{Sampling from Product Distributions}
\label{apx:sampling}
In this section, we give a formal description of the generalized sampling algorithm. 

Throughout this section, fix a sortition instance $\calI$ and weights $\mu\in\R^N_{>0}$. We prove our statements for the profile aggregated version of the counting function $\phi_\mu$.
For each $j \in [n_\calW]$, $z \in \Z_{\geq 0}^{\FV}$, and $d \in \curly*{0, \dots,n_{\omega_j}}$ let $C_\mu(j, d)$ be as defined in \eqref{eqn:weighted_convolution}.
Define the following probability distributions:
\begin{align*}
    p_{j,z}(d) \coloneqq \frac{C_\mu(j,d) 
    \phi_\mu(j + 1, z + d\omega_j)}{\phi_\mu(j,z)},
    \quad \text{ and } \quad
    q_{j,d}(S)\coloneqq \frac{\prod_{i\in S}\mu_i}{C_\mu(j,d)},
    \quad S\subseteq \restr{N}{\omega_j}, \card{S} = d.
\end{align*}
The distribution $p_{j, z}$ describes how to move to the next state and $q_{j, d}$ defines a distribution of subsets of $\restr{N}{\omega_j}$ of size $d$. 

\noindent
\begin{minipage}[t]{.49\textwidth}
\vspace{0pt}
\begin{algorithm}[H]
\SetAlgoLined
\KwIn{A sortition instance $\calI'$
and weights $\mu \in \R^{N}_{>0}$.}
\KwOut{A panel $P \sim \lambda_\mu'$.}
Compute $\phi_\mu'$\;
$(j,z)\gets(1,0)$\;
$P\gets\varnothing$\;

\While{$j\le n_\calW$}{
  Sample $d$ according to $p_{j,z}$\;
  Sample $S\subseteq \restr{N}{\omega_j}$ according to $q_{j,d}$\;
  $P \gets P \cup S$\;
  $(j, z)\gets (j + 1, z + d\omega_j)$\;
}
\Return $P$\;
\caption{Exact Sampling}
\label{alg:sampling}
\end{algorithm}
\end{minipage}\hfill
\begin{minipage}[t]{.49\textwidth}
\vspace{0pt}
\begin{algorithm}[H]
\SetAlgoLined
\KwIn{A sortition instance $\calI$ and weights $\mu\in\R^N_{>0}$.}
\KwOut{A panel $P\sim\lambda_\mu$.}
Let $F' \subseteq F$ be a set of features\;
\Repeat{$P\in\calP$}{
  Sample $P\sim \lambda'_{\mu}$ using \cref{alg:sampling}\;
}
\Return $P$\;
\caption{\textsc{MaxEntropy}}
\label{alg:max_entropy}
\end{algorithm}
\end{minipage}

Fix $F'\subseteq F$ and let $w' \coloneqq \restr{w}{F'}$, and $\overline{\calZ}'$ be the quota-compliant profiles restricted to $F'$. Let
\begin{align*}
\calP'\coloneqq\setst*{P\subseteq N}{w'(P)\in\overline{\calZ}'}
\end{align*}
be the set of panels feasible with respect to $F'$. 
Define the distribution $\lambda'_\mu \in \Delta(\calP')$ by
\begin{align*}
    \lambda'_\mu(P) \coloneqq
    \frac{\prod_{i\in P'} \mu_i }{\Count{\mu}{\calP'}},
    \quad \text{for all } P' \in \calP'.
\end{align*}

\begin{proposition}
    \label{prop:sampling}
    For every panel $P' \in \calP'$, it holds that
    \begin{align*}
        \Pr\sqbrac{\text{\cref{alg:sampling} outputs } P'} = \frac{\prod_{i\in P'}\mu_i}{\Count{\mu}{\calP'}}.
    \end{align*}
    
\end{proposition}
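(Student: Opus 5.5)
The plan is to show that the algorithm's output probability factors as a telescoping product along the unique path of DP states that generates $P'$. Fix $P' \in \calP'$ and, for each $j \in [n_\calW]$, set $S_j \coloneqq P' \cap \restr{N}{\omega_j}$ and $d_j \coloneqq \card{S_j}$. Define the partial profiles $z_1 \coloneqq 0$ and $z_{j+1} \coloneqq z_j + d_j\,\omega_j$, where $\omega_j$ is the restricted vector with respect to $F'$ (i.e., the DP is the one for $\calI'$ with $w'$). The algorithm outputs $P'$ if and only if, at every step $j$, it draws exactly $d = d_j$ and exactly $S = S_j$. This is because the sets $\restr{N}{\omega_j}$ partition $N$, so the output $P = \dotcup_j S$ determines every choice, and the state sequence is then forced to be $(j, z_j)$.

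By the chain rule, the output probability is
\begin{align*}
\prod_{j=1}^{n_\calW} p_{j,z_j}(d_j)\, q_{j,d_j}(S_j)
= \prod_{j=1}^{n_\calW} \frac{C_\mu(j,d_j)\,\phi'_\mu(j+1,z_{j+1})}{\phi'_\mu(j,z_j)}\cdot\frac{\prod_{i\in S_j}\mu_i}{C_\mu(j,d_j)}.
\end{align*}
The $C_\mu$ factors cancel and the $\phi'_\mu$ factors telescope, leaving
\begin{align*}
\frac{\phi'_\mu(n_\calW+1,z_{n_\calW+1})}{\phi'_\mu(1,0)}\prod_{i\in P'}\mu_i .
\end{align*}
Since $z_{n_\calW+1} = w'(P') \in \overline{\calZ}'$, the base case of \cref{thm:wgt_dp} gives numerator $1$. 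By \eqref{eqn:phi_count} applied to $\calI'$, the denominator equals $\Count{\mu}{\calP'}$. This yields the claim.

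The main obstacle is well-definedness: every division and every sampling step must make sense along the path. I would show that each visited state satisfies $\phi'_\mu(j,z_j) > 0$, which holds because the completion $\dotcup_{t\ge j} S_t$ contributes a positive term (here $\mu > 0$ is used). I would also check that $p_{j,z}$ and $q_{j,d}$ are genuine probability distributions. For $p_{j,z}$, this follows from the recursion of \cref{thm:wgt_dp} together with $C_\mu \ge 0$. For $q_{j,d}$, it follows from definition \eqref{eqn:weighted_convolution}, with $C_\mu(j,d_j) > 0$ since $S_j$ is a valid set. Finally, for general runs, I would note that the algorithm only ever moves to states with $p_{j,z}(d) > 0$, hence positive $\phi'_\mu$, so it never gets stuck and always terminates with a set in $\calP'$.
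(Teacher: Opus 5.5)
Your proposal is correct and follows the paper's proof essentially step for step. It uses the same decomposition of $P'$ into $S_j$, $d_j$ and the forced states $(j,z_j)$, the chain rule, cancellation of the $C_\mu$ factors, telescoping of the counting function, and the base case together with \eqref{eqn:phi_count}. Your extra well-definedness checks (positivity of $\phi'_\mu$ along the path via $\mu>0$, and that $p_{j,z}$ and $q_{j,d}$ are genuine distributions) are details the paper leaves implicit, and you are more careful than the paper in consistently using the restricted quantities $w'$ and $\phi'_\mu$.
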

\begin{proof}
    Fix $P'\in\calP'$. For each $j\in[n_\omega]$, let
    $S_j \coloneqq P'\cap \restr{N}{\omega_j}$ and 
    $d_j \coloneqq \card{S_j}$.
    Define $z_1 \coloneqq 0$ and $z_{j+1}\coloneqq z_j + 
    d_j\omega_j$. 
    Then $z_{{n_\calW} + 1} = w(P') \in \overline{\calZ'}$. 
    The algorithm outputs $P$ if and only if it selects $d_j$ at state $(j,z_j)$ and then selects $S_j$ at layer $j$.
    Thus, we have the following
    \begin{align*}
    \Pr\sqbrac{\text{Alg.\ outputs }P'}
    &= \prod_{j=1}^{n_\calW} \Pr\sqbrac{d = d_j \mid (j, z_j)}
       \prod_{j=1}^{n_\calW}
       \Pr\sqbrac{S = S_j \mid d_j, (j, z_j)}\\
    &= \prod_{j=1}^{n_\calW} p_{j, z_j}(d_j)
       \prod_{j=1}^{n_\calW} q_{j, d_j}(S_j).
    \intertext{Substituting the definitions of $p_{j,z}(d)$ and $q_{j,d}(S)$ gives}
    \Pr\sqbrac{\text{Alg.\ outputs }P'}
    &= \prod_{j=1}^{n_\calW}
      \frac{C_\mu(j, d_j) \phi_\mu(j + 1, z_{j+1})}{\phi_\mu(j, z_j)}
      \prod_{j=1}^{n_\calW} \frac{\prod_{i\in S_j}\mu_i}{C_\mu(j, d_j)}\\
    & = \frac{\phi_\mu(n_\calW + 1,z_{{n_\calW} + 1})}{\phi_\mu(1, 0)}
      \prod_{j=1}^{n_\calW} \prod_{i\in S_j}\mu_i.\\
    &= \frac{\prod_{i \in P'} \mu_i}{\Count{\mu}{\calP'}}
    \end{align*}
    Where the last equality follows from \cref{eqn:phi_count}.
\end{proof}

\begin{theorem}
    \label{thm:rejaction_sampling}
    Given an instance $\calI$ of a sortition problem and weights $\mu_{>0} \in \R^N$.
    \textsc{MaxEntropy} samples a panel according to $\lambda_\mu$, that is,
    \begin{align*}
        \Pr\sqbrac{\text{\cref{alg:max_entropy} outputs } P} =
        \frac{\prod_{i\in P}\mu_i}{\Count{\mu}{\calP}},
        \quad \text{ for all } P \in \calP.
    \end{align*}
\end{theorem}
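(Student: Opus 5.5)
The plan is to treat \cref{alg:max_entropy} as standard rejection sampling, with \cref{prop:sampling} as the proposal-distribution oracle. First I will check that $\calP \subseteq \calP'$. A panel $P \in \calP$ satisfies $w(P) \in \overline{\calZ}$, and restricting to the features $F'$ gives $w'(P) = \restr{w(P)}{F'} \in \overline{\calZ}'$, exactly as in the proof of \cref{prp:phi_heuristic_wgt}. It follows that $\Count{\mu}{\calP} \le \Count{\mu}{\calP'}$, and that every iteration of the repeat loop draws from $\lambda'_\mu$, which is supported on a superset of $\calP$. If $\calP$ is empty there is nothing to prove (the loop never terminates), so I assume $\calP \neq \varnothing$.

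Next I will compute the output distribution. The iterations are independent, and by \cref{prop:sampling} each one draws $P'$ with probability $\prod_{i\in P'}\mu_i/\Count{\mu}{\calP'}$. Hence a single iteration is accepted with probability
\[ \alpha \coloneqq \sum_{P\in\calP}\frac{\prod_{i\in P}\mu_i}{\Count{\mu}{\calP'}} = \frac{\Count{\mu}{\calP}}{\Count{\mu}{\calP'}} > 0, \]
which is positive because $\mu > 0$ and $\calP \neq \varnothing$. The loop therefore terminates almost surely, and the number of iterations is geometric with parameter $\alpha$. For a fixed $P \in \calP$, the event that the algorithm outputs $P$ is the disjoint union over $t \ge 1$ of the events that the first $t-1$ draws fall outside $\calP$ and the $t$-th draw equals $P$. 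Summing the resulting geometric series gives
\[ \Pr[\text{output } P] = \sum_{t\ge1}(1-\alpha)^{t-1}\frac{\prod_{i\in P}\mu_i}{\Count{\mu}{\calP'}} = \frac{1}{\alpha}\cdot\frac{\prod_{i\in P}\mu_i}{\Count{\mu}{\calP'}} = \frac{\prod_{i\in P}\mu_i}{\Count{\mu}{\calP}}. \]
Equivalently, the output is distributed as $\lambda'_\mu$ conditioned on the event $\calP$, and the ratio of weights is preserved because the normalizer cancels.

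The main care point, rather than a real obstacle, is that \cref{prop:sampling} must apply to the restricted DP $\phi'_\mu$. Its proof goes through verbatim with $w'$ and $\overline{\calZ}'$ in place of $w$ and $\overline{\calZ}$. I also need to confirm that sets in $\calP'$ have size exactly $k$, which holds because $\overline{\calZ}'$ retains the cardinality constraint through $f_0$ (choosing $f_0 \in F'$, or keeping the size constraint explicitly). Finally, taking $\mu = \ones$ recovers \cref{thm:me_alg}: the output is uniform over $\calP$.
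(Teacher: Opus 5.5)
Your proposal is correct and follows the paper's proof essentially step for step. Both treat \cref{alg:max_entropy} as rejection sampling with proposal distribution $\lambda'_\mu$ from \cref{prop:sampling}, compute the acceptance probability $\alpha = \Count{\mu}{\calP}/\Count{\mu}{\calP'}$, and sum the geometric series over the stopping time to get $\lambda'_\mu(P)/\alpha = \lambda_\mu(P)$; your explicit check that $\calP \subseteq \calP'$ is a useful addition, since the paper uses it without comment when writing $\alpha = \sum_{P\in\calP}\lambda'_\mu(P)$.
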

\begin{proof}
    Let $P'_1, P'_2 \dots$ be the sequence of i.i.d samples obtained by sampling using \cref{alg:sampling} on the restricted feature
    set $F'$.
    We define the acceptance probability
    \begin{equation}
        \label{eqn:rejaction_prob}
        \alpha \coloneqq \Pr_{P' \sim \lambda'_\mu}\sqbrac{P' \in \calP}
        = \sum_{P \in \calP} \lambda'_\mu(P) 
        = \frac{\Count{\mu}{\calP}}{\Count{\mu}{\calP'}}
    \end{equation}
    We assume that $\calP \neq \varnothing$ so $\alpha > 0$ and
    the algorithm terminates with high probability.
    Let $T \coloneqq \min\setst*{t \ge 1}{P'_t \in \calP}$ be the
    stopping time, then, \cref{alg:max_entropy} outputs $P'_T$.
    
    Fix $P \in \calP$. Then, since the samples are i.i.d, we have 
    that
    \begin{align*}
        \Pr\sqbrac{P'_T = P} &= \sum_{t \geq 1}
        \Pr\sqbrac{P'_T = P \wedge T = t}\\
        &= \sum_{t \geq 1}\Pr\sqbrac{
        P'_t =P,  P'_1 \notin \calP, \dots, P'_{t-1} \notin \calP},
        \intertext{since the samples are i.i.d, we have that}
        &= \sum_{t \geq 1} \Pr\sqbrac{P'_t = P}
        \prod_{s = 1}^{t - 1}\Pr\sqbrac{P'_s \notin \calP}\\
        &= \lambda'_\mu(P)\sum_{t\geq 1}(1 - \alpha)^{t-1}.
    \end{align*}
    By the convergence of a geometric series,
    $\sum_{t\geq 1}(1 - \alpha)^{t-1} = 1/\alpha$. 
    Thus, bu \eqref{eqn:rejaction_prob}, we obtain that $\Pr\sqbrac{P'_T = P} = \lambda_\mu(P)$.
\end{proof}

\section{Deferred Details for \nameref{sec:properties}}

\subsection{\nameref{sec:manip}}
\label{apx:manip}
On this section, we present the formal statements and deferred proofs from \cref{sec:manip}.

Fix a sortition instance $\calI$. A \emph{coalition} is a set $C \subseteq N$ of pool members who jointly and strategically misreport their feature-value vectors.
At all times, we assume that the coalition has full knowledge of the selection algorithm and pool composition. 
We denote the \emph{reported feature-value vectors} by $\tilde{w}$, where $\tilde{w}(i) = w(i)$ for all $i \in N \setminus C$ and $\tilde{w}(i) \in \curly{0, 1}^\FV$ for all $i \in C$.
The resulting \emph{manipulated instance} $\tilde{\calI}$  is characterized by $(\calI, C, \restr{\tilde{w}}{C})$.
Let $\overline{\calW}_C \in (\{0, 1\}^{\FV})^{C}$ be the set of all possible joint misreports for the coalition $C$. Given the size of the coalition $c \in \mathbb{N}$ and a sortition algorithm $\calA$,  we formally define the 3 measures of manipulability:
\begin{align*}
    \Manipi(\calI, c, \mathcal{})
    &\coloneqq
    \max_{\substack{C \subseteq N\\ \card{C} = c}}
    \max_{\restr{\tilde{w}}{C} \in \FVspace_C}
    \max_{i \in C}
    \tilde{\pi}(i) - \pi(i),\\
    \Manipe(\calI, c, \calA)
    &\coloneqq
    \max_{\substack{C \subseteq N\\ \card{C} = c}}
    \max_{\restr{\tilde{w}}{C} \in \FVspace_C}
    \max_{i \notin C}
    \pi(i) - \tilde{\pi}(i), \text{ and }\\
    \Manipc(\calI, c, \calA)
    &\coloneqq
    \max_{\substack{C \subseteq N\\ \card{C} = c}}
    \max_{\restr{\tilde{w}}{C} \in \FVspace_C}
    \max_{f, v \in \FV}
    \sum_{\substack{i \in N\\ f(i) = v}}
    {\tilde{\pi}(i) - \pi(i)},
\end{align*}
where $\pi$ denotes the selection probabilities for feature vectors $w$ and $\tilde{\pi}$ those for $\tilde{w}$ under $\calA$, in our case \textsc{MaxEntropy}.

Without additional assumptions, there is no meaningful bound on manipulation. Consider $n$ pool members and a single feature \emph{Region = \{Urban, Rural\}}, where the quotas require $k - 1$ urban panel members and $1$ rural one.
If the pool contains $k$ urban pool members (and $n-k$ rural ones), any algorithm that satisfies equal treatment of equals will assign selection probability $1 - 1/k$ to urban pool members and $1/(n - k)$ to rural ones.
As $n$ grows, the probability of a rural pool member being selected goes to zero. Thus, if any rural pool members misreport their value, they increase their selection probability by a constant arbitrarily close to 1.

To rule out such trivial manipulability, \citet{FlaniganLPW24} say that an instance is $\kappa$-\emph{rich} if for some constant
$\kappa > 0$ if there exists a subset $W^*$ of the feature-value vectors $\calW$ present in the pool such that
\begin{enumerate}[i)]
    \item $\big|\restr{N}{w}\big| \geq \kappa n + k$ for all $w \in W^*$; and 
    \item there exists a panel  $P_0 \in \restr{\calP}{W^*}$,
\end{enumerate}

where $\restr{\calP}{W^*}$ is the set of panels only containing members in $\restr{N}{W^*}$. Under this assumption, we have the following bound on the selection probabilities

\begin{lemma}
    \label{lem:manip_rich_prb_ub}
    If $\calI$ is $\kappa$-rich, then for all $i \in N$, we have that $\pi_{\unifD}(i) \leq \frac{k}{\kappa^k n}$.
\end{lemma}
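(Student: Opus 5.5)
The plan is to bound the ratio $\pi_{\unifD}(i) = \card{\calP(i)}/\card{\calP}$ directly. I will give a crude upper bound on the numerator and a lower bound on the denominator that uses only the panels guaranteed by $\kappa$-richness. This avoids any case distinction on whether $w(i)$ lies in $W^*$.

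\emph{Numerator.} Every panel in $\calP(i)$ is determined by its other $k-1$ members, which form a $(k-1)$-subset of $N\setminus\{i\}$. Hence
\[ \card{\calP(i)} \le \binom{n}{k-1} \le \frac{n^{k-1}}{(k-1)!}. \]

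\emph{Denominator.} Let $P_0 \in \restr{\calP}{W^*}$ be the panel given by condition (ii). For each $\omega \in W^*$, set $d_\omega \coloneqq \card{P_0 \cap \restr{N}{\omega}}$, so that $\sum_{\omega\in W^*} d_\omega = k$. Now pick, independently for each $\omega \in W^*$, an arbitrary $d_\omega$-subset of $\restr{N}{\omega}$, and take the union of these subsets.
\begin{itemize}
\item The union has the same profile $w(P_0) \in \overline{\calZ}$, so it is a panel.
\item The sets $\restr{N}{\omega}$ are pairwise disjoint, so distinct choices give distinct panels.
\end{itemize}
By condition (i), and since $d_\omega \le k$,
\[ \binom{\card{\restr{N}{\omega}}}{d_\omega} \ge \binom{\kappa n + k}{d_\omega} \ge \frac{(\kappa n)^{d_\omega}}{d_\omega!}. \]
The second inequality holds because each of the $d_\omega$ factors $\kappa n + k - t$, for $t < d_\omega \le k$, is at least $\kappa n$. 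Multiplying over $\omega \in W^*$ and using $\prod_\omega d_\omega! \le k!$ (a multinomial coefficient is at least $1$), I get
\[ \card{\calP} \ge \prod_{\omega\in W^*}\frac{(\kappa n)^{d_\omega}}{d_\omega!} \ge \frac{(\kappa n)^k}{k!}. \]

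\emph{Combining.} Dividing the two bounds gives
\[ \pi_{\unifD}(i) \le \frac{n^{k-1}\,k!}{(k-1)!\,(\kappa n)^k} = \frac{k}{\kappa^k n}. \]

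The only step needing care is the denominator. There I must make sure that the per-$\omega$ choices really preserve quota compliance, which holds because the profile is unchanged, and that they yield distinct panels, which holds by the disjointness of the sets $\restr{N}{\omega}$. I must also use the slack $+k$ in condition (i) to get the clean bound $(\kappa n)^{d_\omega}$. The rest is routine arithmetic.
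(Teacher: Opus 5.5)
Your proof is correct and follows the same route as the paper. You upper-bound $\card{\calP(i)}$ by a binomial coefficient of at most $n^{k-1}/(k-1)!$, lower-bound $\card{\calP}$ by the $\prod_{\omega\in W^*}\binom{\kappa n+k}{d_\omega} \ge (\kappa n)^k/k!$ panels obtained by re-choosing $P_0$'s members within each $\restr{N}{\omega}$, and divide; the differences from the paper are cosmetic (you use the looser $\binom{n}{k-1}$ in place of $\binom{n-1}{k-1}$, and you spell out why the re-chosen sets are distinct valid panels, which the paper leaves implicit).
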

\begin{proof}
    We first lower bound the number of feasible panels exclusively containing members with feature-value vectors in $W^*$.
    Since $\calI$ is $\kappa$-rich, there exists $P_0 \in \restr{\calP}{W^*}$.
    For each $w \in W^*$, let $p_0(w)$ be the number of members of $P_0$ with feature-value vector $w$, so $\sum_{w \in W^*} p_0(w)=k$.
    Since $\card{\restr{N}{w}} \geq \kappa n + k$, we obtain
    \begin{align}
        \label{eqn:kappa_rich_sets}
        \card{\restr{\calP}{W^*}}
        \geq \prod_{w \in W^*}
        \binom{\kappa n + k}{p_0(w)}
        = \prod_{w \in W^*}
        \frac{\prod_{i = 0}^{p_0(w) - 1}(\kappa n + k - i)}{p_0(w)!}
        \geq \frac{(\kappa n)^k}{\prod_{w \in W^*} p_0(w)!}
        \geq \frac{(\kappa n)^k}{k!},
    \end{align}
    where the last inequality follows from $(a + b)! \geq a!b!$ for $a, b \in \Z_{\geq 0}$.
    For each $i \in N$, the number of panels containing $i$ is at most $\binom{n-1}{k-1}$. 
    By \cref{eqn:kappa_rich_sets}, we have that
    \begin{equation*}
        \pi_{\unifD}(i)= \frac{\card{\calP(i)}}{\card{\calP}}
        \leq \frac{\binom{n-1}{k-1}}
        {\card{\restr{\calP}{W^*}}}
        \leq \frac{n^{k-1}}{(k-1)!}\frac{k!}{(\kappa n)^k}
        = \frac{k}{\kappa^k n}. \qedhere
    \end{equation*}
\end{proof}    
To bound $\Manipc$, we use the following observation due to \citet{BF24}.

\begin{lemma}
    \label{lem:manipc_bound}
    Fix a coalition $C \subseteq N$ of size $c$.
    Then, for any $f, v \in \FV$ and $\lambda \in \Delta(\calP)$, we have that
    \begin{align*}
        \sum_{\substack{i \in N:\\ f(i) = v}} \tilde{\pi}_\lambda(i) 
        - \pi_\lambda(i) \leq \paren*{u_{f,v} - \ell_{f,v}}
        + c \max\paren*{\tilde{\pi}_\lambda}.
    \end{align*}
\end{lemma}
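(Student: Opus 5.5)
Fix $f,v\in\FV$ and let $N_{f,v}\coloneqq\setst{i\in N}{f(i)=v}$ be the set of pool members whose \emph{true} value for $f$ is $v$. Write $\tilde N_{f,v}\coloneqq\setst{i\in N}{\tilde w(i)_{f,v}=1}$ for the set of pool members who \emph{report} value $v$ in the manipulated instance. Both $\pi_\lambda$ and $\tilde\pi_\lambda$ are selection probabilities of panel distributions, the first for the original instance and the second for the manipulated one. The plan is to compare each of the two sums $\sum_{i\in N_{f,v}}\pi(i)$ and $\sum_{i\in N_{f,v}}\tilde\pi(i)$ with the quotas, using that every panel in the respective instance satisfies them ex post.

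First, a lower bound for the original instance. Every panel $P$ in the support of the original distribution is quota-compliant, so $\card{P\cap N_{f,v}}\ge \ell_{f,v}$. Taking expectations gives
\[
\sum_{i\in N_{f,v}}\pi(i)=\mathbb E\big[\card{P\cap N_{f,v}}\big]\ge \ell_{f,v}.
\]

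Second, an upper bound for the manipulated instance. Every panel $\tilde P$ drawn under $\tilde\pi$ satisfies the quotas with respect to the reported vectors, so $\card{\tilde P\cap\tilde N_{f,v}}\le u_{f,v}$. Agents outside the coalition report truthfully, which gives $N_{f,v}\setminus C\subseteq\tilde N_{f,v}$ and hence
\[
N_{f,v}\subseteq\tilde N_{f,v}\cup (C\cap N_{f,v}).
\]
It follows that
\[
\card{\tilde P\cap N_{f,v}}\le \card{\tilde P\cap \tilde N_{f,v}}+\card{\tilde P\cap C\cap N_{f,v}}\le u_{f,v}+\card{\tilde P\cap C}.
\]
Taking expectations yields
\[
\sum_{i\in N_{f,v}}\tilde\pi(i)\le u_{f,v}+\sum_{i\in C}\tilde\pi(i)\le u_{f,v}+c\max(\tilde\pi).
\]

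Subtracting the lower bound from the upper bound gives the claim. There is no real obstacle here. The only care needed is to interpret $\pi_\lambda$ and $\tilde\pi_\lambda$ as the selection probabilities of (possibly different) distributions over the panels of the original and the manipulated instance, respectively. The statement's notation with a single $\lambda$ is loose, but the argument uses only ex-post quota compliance, so it holds for any pair of distributions and in particular for \textsc{MaxEntropy} on both instances. Attention is also needed with the containment of $N_{f,v}$, since coalition members may drop or add value $v$; this is exactly what the $c\max(\tilde\pi)$ term absorbs.
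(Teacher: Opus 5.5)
Your proof is correct and follows essentially the same route as the paper. Both arguments lower-bound $\sum_{i\in N_{f,v}}\pi(i)$ by $\ell_{f,v}$ via ex-post quota compliance, bound the members who still report $v$ by $u_{f,v}$ in the manipulated instance, and absorb the rest by $c\max(\tilde\pi)$. The only differences are that the paper splits off $D_{f,v}$ (members of $N_{f,v}$ who stop reporting $v$) rather than all of $C\cap N_{f,v}$, and that your explicit containment $N_{f,v}\setminus C\subseteq\tilde N_{f,v}$ states precisely a step the paper phrases somewhat loosely.
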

\begin{proof}
    Fix a feature-value pair $f, v \in \FV$.
    Define $N_{f,v} \coloneqq \setst{i \in N}{f(i) = v}$ and $D_{f, v} \coloneqq \setst{i \in N_{f, v}}{\tilde{f}(i) \neq v}$.
    By feasibility, for any panel $P \in \calP$, we have that, $\ell_{f, v} \leq \card*{N_{f, v} \cap P} \leq u_{f, v}$. 
    Thus, 
    \begin{align*}
        \ell_{f, v} \leq \sum_{i \in N_{f, v}} \pi(i) = 
        \sum_{P \in \calP} \lambda(P) \card{N_{f, v} \cap P} \leq u_{f,v}.
    \end{align*}
    The analogous result holds for $\tilde{\pi}$ over the set of reported feature-value vectors $N_{f,v} \setminus D_{f, v}$ , as the manipulated instance is defined over the same quotas.
    We split the total post manipulation expected difference in assigned seats to truthful members and misreported members.
    By the expression above, we have that
    \begin{align*}
        \sum_{i \in N_{f, v}}\tilde{\pi}(i)-\pi(i)
        = \sum_{i \in N_{f, v} \setminus D_{f, v}}
        \tilde{\pi}(i) +  
        \sum_{i \in D_{f, v}} \tilde{\pi}(i) - \sum_{i \in N_{f, v}} \pi(i)\\
        \leq u_{f, v} - \ell_{f, v} + \sum_{i \in D_{f, v}} 
        \tilde{\pi}(i) 
        \leq u_{f, v} - \ell_{f, v} + c \max{\tilde{\pi}}.
    \end{align*}
\end{proof}

This leads to the following theorem
\begin{theorem}[Resistance to manipulation, formal statement]
    \label{thm:manip_uniform}
    Let $\calI$ be a $\kappa$-rich instance and let
    $c \leq \kappa' n$ for some constant
    $\kappa' \in [0,\kappa)$. Then, \textsc{MaxEntropy} satisfies the
    following:
    \begin{align*}
        \Manipi(\calI, c) &= O\paren*{k/n},\\
        \Manipe(\calI, c) &= O\paren*{k/n},\text{ and }\\ 
        \Manipc(\calI, c) &= \max_{f, v \in \FV}\paren*{u_{f, v} -
        \ell_{f, v}} + O(ck/n).
    \end{align*}
\end{theorem}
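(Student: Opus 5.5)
The plan is to reduce all three bounds to Lemma~\ref{lem:manip_rich_prb_ub}, applied both to the truthful instance $\calI$ and to the manipulated instance $\tilde{\calI}$. The key observation is that the manipulated instance is still rich, with a slightly smaller constant. Take $W^*$ as in the definition of $\kappa$-richness for $\calI$. After the coalition $C$ misreports, every $w \in W^*$ retains at least $\card{\restr{N}{w}} - c \geq (\kappa - \kappa')n + k$ truthful members with vector $w$. The panel $P_0$ may use coalition members, so we replace it. The quotas depend only on profiles, and each $w \in W^*$ still has at least $k \geq p_0(w)$ truthful members. Swapping each coalition member of $P_0$ for a truthful member with the same vector yields a panel $\tilde{P}_0$ of $\tilde{\calI}$ that uses only vectors in $W^*$. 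Hence $\tilde{\calI}$ is $(\kappa-\kappa')$-rich with the same set $W^*$.

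With this in hand, Lemma~\ref{lem:manip_rich_prb_ub} gives $\max_i \pi(i) \leq k/(\kappa^k n)$ and $\max_i \tilde{\pi}(i) \leq k/((\kappa-\kappa')^k n)$. The bounds then follow.
\begin{itemize}
\item For internal manipulability, $\tilde{\pi}(i) - \pi(i) \leq \tilde{\pi}(i) = O(k/n)$.
\item For external manipulability, $\pi(i) - \tilde{\pi}(i) \leq \pi(i) = O(k/n)$.
\item For composition manipulability, Lemma~\ref{lem:manipc_bound} with $\lambda$ the uniform distribution on the respective panel sets gives $u_{f,v}-\ell_{f,v} + c\max\tilde{\pi} = u_{f,v}-\ell_{f,v} + O(ck/n)$. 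Maximizing over $f,v$ yields the claim.
\end{itemize}
Throughout, $k$, $\kappa$ and $\kappa'$ are treated as constants, so the factors $\kappa^{-k}$ and $(\kappa-\kappa')^{-k}$ are absorbed into the $O(\cdot)$. By Theorem~\ref{thm:me_alg}, \textsc{MaxEntropy} is exactly the uniform distribution $\unifD$ over the respective panel sets. So the selection probabilities are those bounded in the lemma, both before and after manipulation.

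The main obstacle is the richness-transfer step. Two points need care. First, Lemma~\ref{lem:manip_rich_prb_ub} is stated in terms of the pool's vectors and the panel count, and the manipulated pool has the same size $n$, so its proof goes through verbatim. Second, coalition members may report vectors outside $\calW$, but this only adds panels and so only helps the lower bound on $\card{\tilde{\calP}}$. I would also check that Lemma~\ref{lem:manipc_bound} is applied with $\pi$ and $\tilde{\pi}$ coming from different distributions (uniform over $\calP$ and uniform over $\tilde{\calP}$). Its proof only uses that each of $\pi$ and $\tilde{\pi}$ separately satisfies the quota sums in expectation, so this works even though the lemma is phrased for a single $\lambda$.
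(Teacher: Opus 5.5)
Your proposal is correct and follows the paper's proof essentially step for step. Like the paper, you transfer richness to the manipulated instance with constant $\kappa-\kappa'$ by choosing $p_0(w)$ members per $w \in W^*$. You then bound $\tilde{\pi}$ and $\pi$ by Lemma~\ref{lem:manip_rich_prb_ub} and the composition term by Lemma~\ref{lem:manipc_bound}, and your observation that the latter compares two different uniform distributions (over $\calP$ and $\tilde{\calP}$), needing only the quota sums for each separately, is a correct clarification of a point the paper glosses over.
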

\begin{proof}
    Fix any coalition $C$ with $\card{C} = c$ and reported feature-value vectors $\tilde{w}$.
    Let $\beta \coloneqq \kappa - \kappa'$. 
    Then, for any $w \in W^*$, we have that, $\card{\restr{\tilde{N}}{w}} \geq \card{\restr{N}{w}} - c \geq \beta n + k$.
    Let $P_0$ be the panel from the $\kappa$-rich definition and $p_0(w)$ be the number of members of $P_0$ with feature-value vector $w$. 
    Since $p_0(w) \leq k$ for all $w \in W^*$ and $\card{\restr{\tilde{N}}{w}} \geq k$, we can construct a panel $\tilde{P}_0$ in the manipulated instance by selecting $p_0(w)$ distinct members from each $\tilde{N}(w)$. 
    Thus, $\tilde{\calI}$ is $\beta$-rich.

    Now, by \cref{lem:manip_rich_prb_ub}, $\tilde{\pi}(i) \leq {k}/{\beta^k n}$ for all $i \in N$. Thus, for any $i \in C$, we have that 
    $\tilde{\pi}(i) - \pi(i) \leq \tilde{\pi}(i) 
    \leq {k}/{\beta^k n}$, which proves the stated bound for $\Manipi$.
    The bound for $\Manipe$ follows analogously from \cref{lem:manip_rich_prb_ub} by imposing that $\pi(i) \leq {k}/{\kappa^kn}.$

    Finally, the bound for $\Manipc$ follows from taking the maximum over all feature-value pairs $f, v \in \FV$ and applying \cref{lem:manip_rich_prb_ub,lem:manipc_bound}. 
\end{proof}
\cref{thm:manip_uniform}  is asymptotically tight when taking $n \to \infty$, as it matches the lower bound obtained in \citet[][Thm. 4.3]{FlaniganLPW24} for $\Manipi$ and $\Manipe$.
Their bound also implies the optimality of $\Manipc$ whenever the quotas are tight.

\section{Deferred Details for \nameref{sec:empirics}}

\subsection{Instances}
\label{apx:instances}

\begingroup
\scriptsize
\setlength{\tabcolsep}{3pt}

\begin{longtable}{lrrrccccccc}
\caption{List of instances in our dataset. The target panel size is the last subscript in the instance name.
The runs were done in parallel on a cloud machine with $64$ GB of memory and $32$ cores.
Runs marked with an * had a 90-minute timeout. “NO” means that the experiment timed out or failed.}\label{tab:instances}\\

\toprule
 &  &  &  &  & \multicolumn{2}{c}{\smash{\textsc{MaxEntropy}}} &
 \multicolumn{2}{c}{\smash{\textsc{LexiMin} fairness}} &
 \multicolumn{2}{c}{\smash{\textsc{Goldilocks} fairness}} \\
\cmidrule(lr){6-7} \cmidrule(lr){8-9} \cmidrule(lr){10-11}
\textbf{Instance} & pool & features & values & \smash{\textsc{Legacy}} &
single sample & distribution &
col.\ generation & max.\ entropy &
col.\ generation & max.\ entropy \\
\midrule
\endfirsthead

\toprule
 &  &  &  &  & \multicolumn{2}{c}{\smash{\textsc{MaxEntropy}}} &
 \multicolumn{2}{c}{\smash{\textsc{LexiMin} fairness}} &
 \multicolumn{2}{c}{\smash{\textsc{Goldilocks} fairness}} \\
\cmidrule(lr){6-7} \cmidrule(lr){8-9} \cmidrule(lr){10-11}
\textbf{Instance} & pool & features & values & \smash{\textsc{Legacy}} &
single sample & distribution &
col.\ generation & max.\ entropy &
col.\ generation & max.\ entropy \\
\midrule
\endhead

\bottomrule
\endfoot

\bottomrule
\endlastfoot

cca\_75 & 825 & 4 & 66 & \textsc{no} & \textsc{yes} & \textsc{no} & \textsc{yes} & \textsc{no} & \textsc{no} & \textsc{no} \\
hd\_30 & 239 & 7 & 33 & \textsc{yes} & \textsc{yes} & \textsc{yes} & \textsc{yes} & \textsc{no} & \textsc{yes} & \textsc{no} \\
mass\_a\_24 & 70 & 5 & 11 & \textsc{yes} & \textsc{yes} & \textsc{yes} & \textsc{yes} & \textsc{yes} & \textsc{yes} & \textsc{yes} \\
mass\_b\_37 & 349 & 8 & 32 & \textsc{yes} & \textsc{yes} & \textsc{yes} & \textsc{yes} & \textsc{yes} & \textsc{yes} & \textsc{yes} \\
mass\_c\_35 & 118 & 5 & 13 & \textsc{yes} & \textsc{yes} & \textsc{yes} & \textsc{yes} & \textsc{yes} & \textsc{yes} & \textsc{yes} \\
mass\_d\_31 & 168 & 9 & 36 & \textsc{no} & \textsc{yes} & \textsc{yes} & \textsc{yes} & \textsc{no} & \textsc{yes} & \textsc{no} \\
mass\_e\_36 & 322 & 4 & 13 & \textsc{yes} & \textsc{yes} & \textsc{yes} & \textsc{yes} & \textsc{yes} & \textsc{yes} & \textsc{yes} \\
mass\_f\_33 & 158 & 3 & 9 & \textsc{yes} & \textsc{yes} & \textsc{yes} & \textsc{yes} & \textsc{yes} & \textsc{yes} & \textsc{yes} \\
mass\_g\_35 & 379 & 8 & 33 & \textsc{yes} & \textsc{yes} & \textsc{yes} & \textsc{yes} & \textsc{yes} & \textsc{yes} & \textsc{yes} \\
mass\_h\_36 & 204 & 5 & 14 & \textsc{yes} & \textsc{yes} & \textsc{yes} & \textsc{yes} & \textsc{yes} & \textsc{yes} & \textsc{yes} \\
mass\_i\_36 & 106 & 8 & 30 & \textsc{yes} & \textsc{yes} & \textsc{yes} & \textsc{yes} & \textsc{yes} & \textsc{yes} & \textsc{yes} \\
mass\_j\_28 & 421 & 7 & 22 & \textsc{yes} & \textsc{yes} & \textsc{yes} & \textsc{yes} & \textsc{yes} & \textsc{yes} & \textsc{yes} \\
mass\_k\_36 & 57 & 7 & 17 & \textsc{yes} & \textsc{yes} & \textsc{yes} & \textsc{yes} & \textsc{yes} & \textsc{yes} & \textsc{yes} \\
mass\_l\_34 & 133 & 5 & 29 & \textsc{yes} & \textsc{yes} & \textsc{yes} & \textsc{yes} & \textsc{yes} & \textsc{yes} & \textsc{yes} \\
mass\_m\_26 & 72 & 13 & 33 & \textsc{no} & \textsc{yes} & \textsc{yes} & \textsc{yes} & \textsc{yes} & \textsc{yes} & \textsc{yes} \\
mass\_n\_32 & 1098 & 8 & 23 & \textsc{yes} & \textsc{yes} & \textsc{yes} & \textsc{yes} & \textsc{yes} & \textsc{no} & \textsc{no} \\
mass\_o\_36 & 186 & 6 & 27 & \textsc{yes} & \textsc{yes} & \textsc{yes} & \textsc{yes} & \textsc{yes} & \textsc{yes} & \textsc{yes} \\
mass\_p\_29 & 108 & 7 & 39 & \textsc{yes} & \textsc{yes} & \textsc{yes} & \textsc{yes} & \textsc{yes} & \textsc{yes} & \textsc{yes} \\
mass\_q\_37 & 543 & 10 & 37 & \textsc{yes} & \textsc{yes} & \textsc{no} & \textsc{yes} & \textsc{no} & \textsc{yes} & \textsc{no} \\
mass\_r\_34 & 117 & 10 & 22 & \textsc{yes} & \textsc{yes} & \textsc{yes} & \textsc{yes} & \textsc{yes} & \textsc{yes} & \textsc{yes} \\
mass\_s\_29 & 56 & 10 & 50 & \textsc{yes} & \textsc{yes} & \textsc{yes} & \textsc{yes} & \textsc{yes} & \textsc{yes} & \textsc{yes} \\
mass\_t\_31 & 121 & 6 & 19 & \textsc{yes} & \textsc{yes} & \textsc{yes} & \textsc{yes} & \textsc{yes} & \textsc{yes} & \textsc{yes} \\
mass\_u\_31 & 131 & 6 & 23 & \textsc{yes} & \textsc{yes} & \textsc{yes} & \textsc{yes} & \textsc{yes} & \textsc{yes} & \textsc{yes} \\
mass\_v\_49 & 369 & 7 & 28 & \textsc{yes} & \textsc{yes} & \textsc{yes} & \textsc{yes} & \textsc{yes} & \textsc{no} & \textsc{no} \\
mass\_w\_34 & 148 & 6 & 35 & \textsc{yes} & \textsc{yes} & \textsc{yes} & \textsc{yes} & \textsc{yes} & \textsc{yes} & \textsc{yes} \\
mass\_x\_27 & 233 & 6 & 21 & \textsc{yes} & \textsc{yes} & \textsc{yes} & \textsc{yes} & \textsc{yes} & \textsc{yes} & \textsc{yes} \\
mass\_y\_28 & 161 & 7 & 28 & \textsc{yes} & \textsc{yes} & \textsc{yes} & \textsc{yes} & \textsc{yes} & \textsc{yes} & \textsc{yes} \\
mass\_z\_48 & 498 & 9 & 21 & \textsc{yes} & \textsc{yes} & \textsc{yes} & \textsc{yes} & \textsc{yes} & \textsc{yes} & \textsc{yes} \\
nd\_a\_49 & 248 & 4 & 17 & \textsc{no} & \textsc{yes} & \textsc{yes} & \textsc{yes} & \textsc{yes} & \textsc{yes} & \textsc{yes} \\
nd\_b\_52 & 125 & 3 & 12 & \textsc{yes} & \textsc{yes} & \textsc{yes} & \textsc{yes} & \textsc{yes} & \textsc{yes} & \textsc{yes} \\
nd\_c\_40 & 133 & 5 & 22 & \textsc{yes} & \textsc{yes} & \textsc{yes} & \textsc{yes} & \textsc{yes} & \textsc{yes} & \textsc{yes} \\
nd\_d\_50 & 290 & 3 & 12 & \textsc{yes} & \textsc{yes} & \textsc{yes} & \textsc{yes} & \textsc{yes} & \textsc{no} & \textsc{no} \\
nd\_e\_50 & 417 & 5 & 28 & \textsc{yes} & \textsc{yes} & \textsc{yes} & \textsc{yes} & \textsc{no} & \textsc{yes} & \textsc{no} \\
nexus\_170 & 342 & 5 & 33 & \textsc{yes} & \textsc{yes} & \textsc{yes} & \textsc{yes} & \textsc{no} & \textsc{yes} & \textsc{no} \\
obf\_30 & 321 & 8 & 38 & \textsc{no} & \textsc{no} & \textsc{no} & \textsc{yes} & \textsc{no} & \textsc{no} & \textsc{no} \\
sf\_a\_35 & 312 & 6 & 22 & \textsc{yes} & \textsc{yes} & \textsc{yes} & \textsc{yes} & \textsc{yes} & \textsc{yes} & \textsc{yes} \\
sf\_aa\_13 & 65 & 6 & 30 & \textsc{yes} & \textsc{yes} & \textsc{yes} & \textsc{yes} & \textsc{yes} & \textsc{yes} & \textsc{yes} \\
sf\_ab\_13 & 37 & 6 & 30 & \textsc{yes} & \textsc{yes} & \textsc{yes} & \textsc{yes} & \textsc{yes} & \textsc{yes} & \textsc{yes} \\
sf\_ac\_28 & 268 & 6 & 22 & \textsc{yes} & \textsc{yes} & \textsc{yes} & \textsc{yes} & \textsc{yes} & \textsc{yes} & \textsc{yes} \\
sf\_ad\_28 & 303 & 6 & 22 & \textsc{yes} & \textsc{yes} & \textsc{yes} & \textsc{yes} & \textsc{yes} & \textsc{no} & \textsc{no} \\
sf\_ae\_32 & 163 & 8 & 40 & \textsc{yes} & \textsc{yes}* & \textsc{no} & \textsc{yes} & \textsc{no} & \textsc{yes} & \textsc{no} \\
sf\_af\_40 & 125 & 7 & 19 & \textsc{yes} & \textsc{yes} & \textsc{yes} & \textsc{yes} & \textsc{yes} & \textsc{yes} & \textsc{yes} \\
sf\_ag\_23 & 122 & 7 & 27 & \textsc{yes} & \textsc{yes} & \textsc{yes} & \textsc{yes} & \textsc{yes} & \textsc{yes} & \textsc{yes} \\
sf\_ah\_42 & 179 & 8 & 34 & \textsc{no} & \textsc{yes} & \textsc{no} & \textsc{yes} & \textsc{no} & \textsc{yes} & \textsc{no} \\
sf\_ai\_20 & 182 & 8 & 41 & \textsc{no} & \textsc{yes} & \textsc{yes} & \textsc{yes} & \textsc{yes} & \textsc{yes} & \textsc{yes} \\
sf\_aj\_25 & 205 & 7 & 33 & \textsc{yes} & \textsc{yes} & \textsc{yes} & \textsc{yes} & \textsc{no} & \textsc{yes} & \textsc{no} \\
sf\_ak\_15 & 58 & 6 & 22 & \textsc{yes} & \textsc{yes} & \textsc{yes} & \textsc{yes} & \textsc{yes} & \textsc{yes} & \textsc{yes} \\
sf\_al\_82 & 515 & 8 & 32 & \textsc{no} & \textsc{no} & \textsc{no} & \textsc{yes} & \textsc{no} & \textsc{no} & \textsc{no} \\
sf\_am\_82 & 533 & 8 & 32 & \textsc{yes} & \textsc{no} & \textsc{no} & \textsc{yes} & \textsc{no} & \textsc{yes} & \textsc{no} \\
sf\_an\_82 & 428 & 8 & 32 & \textsc{no} & \textsc{no} & \textsc{no} & \textsc{yes} & \textsc{no} & \textsc{yes} & \textsc{no} \\
sf\_ao\_82 & 616 & 8 & 32 & \textsc{yes} & \textsc{no} & \textsc{no} & \textsc{yes} & \textsc{no} & \textsc{no} & \textsc{no} \\
sf\_ap\_82 & 548 & 8 & 32 & \textsc{no} & \textsc{no} & \textsc{no} & \textsc{yes} & \textsc{no} & \textsc{no} & \textsc{no} \\
sf\_aq\_82 & 642 & 8 & 32 & \textsc{no} & \textsc{no} & \textsc{no} & \textsc{yes} & \textsc{no} & \textsc{no} & \textsc{no} \\
sf\_ar\_23 & 283 & 8 & 33 & \textsc{no} & \textsc{yes} & \textsc{no} & \textsc{yes} & \textsc{no} & \textsc{no} & \textsc{no} \\
sf\_as\_22 & 248 & 8 & 33 & \textsc{yes} & \textsc{yes} & \textsc{no} & \textsc{yes} & \textsc{no} & \textsc{yes} & \textsc{no} \\
sf\_at\_25 & 360 & 8 & 33 & \textsc{yes} & \textsc{yes} & \textsc{no} & \textsc{yes} & \textsc{no} & \textsc{no} & \textsc{no} \\
sf\_au\_20 & 231 & 8 & 33 & \textsc{yes} & \textsc{yes} & \textsc{no} & \textsc{yes} & \textsc{no} & \textsc{yes} & \textsc{no} \\
sf\_av\_22 & 133 & 8 & 34 & \textsc{yes} & \textsc{yes} & \textsc{yes} & \textsc{yes} & \textsc{no} & \textsc{yes} & \textsc{yes} \\
sf\_aw\_20 & 91 & 8 & 33 & \textsc{yes} & \textsc{yes} & \textsc{yes} & \textsc{yes} & \textsc{no} & \textsc{yes} & \textsc{no} \\
sf\_ax\_25 & 140 & 8 & 33 & \textsc{no} & \textsc{yes} & \textsc{yes} & \textsc{yes} & \textsc{no} & \textsc{yes} & \textsc{no} \\
sf\_ay\_23 & 153 & 8 & 34 & \textsc{yes} & \textsc{yes} & \textsc{no} & \textsc{yes} & \textsc{no} & \textsc{yes} & \textsc{no} \\
sf\_b\_20 & 250 & 6 & 20 & \textsc{yes} & \textsc{yes} & \textsc{yes} & \textsc{yes} & \textsc{yes} & \textsc{yes} & \textsc{yes} \\
sf\_c\_44 & 161 & 7 & 20 & \textsc{yes} & \textsc{yes} & \textsc{yes} & \textsc{yes} & \textsc{yes} & \textsc{yes} & \textsc{yes}* \\
sf\_d\_40 & 404 & 6 & 19 & \textsc{yes} & \textsc{yes} & \textsc{yes} & \textsc{yes} & \textsc{yes} & \textsc{yes} & \textsc{yes} \\
sf\_e\_110 & 1727 & 7 & 31 & \textsc{yes} & \textsc{yes}* & \textsc{no} & \textsc{no} & \textsc{no} & \textsc{no} & \textsc{no} \\
sf\_f\_21 & 126 & 5 & 27 & \textsc{yes} & \textsc{yes} & \textsc{yes} & \textsc{yes} & \textsc{yes} & \textsc{yes} & \textsc{yes} \\
sf\_g\_21 & 152 & 5 & 27 & \textsc{yes} & \textsc{yes} & \textsc{yes} & \textsc{yes} & \textsc{yes} & \textsc{yes} & \textsc{yes} \\
sf\_h\_20 & 129 & 5 & 20 & \textsc{yes} & \textsc{yes} & \textsc{yes} & \textsc{yes} & \textsc{yes} & \textsc{yes} & \textsc{yes} \\
sf\_i\_55 & 362 & 5 & 16 & \textsc{yes} & \textsc{yes} & \textsc{yes} & \textsc{yes} & \textsc{yes} & \textsc{no} & \textsc{no} \\
sf\_j\_40 & 302 & 5 & 21 & \textsc{yes} & \textsc{yes} & \textsc{yes} & \textsc{yes} & \textsc{yes} & \textsc{yes} & \textsc{yes} \\
sf\_k\_30 & 111 & 3 & 15 & \textsc{yes} & \textsc{yes} & \textsc{yes} & \textsc{yes} & \textsc{yes} & \textsc{yes} & \textsc{yes} \\
sf\_l\_23 & 256 & 8 & 34 & \textsc{yes} & \textsc{yes} & \textsc{yes} & \textsc{yes} & \textsc{no} & \textsc{no} & \textsc{no} \\
sf\_m\_20 & 197 & 8 & 34 & \textsc{yes} & \textsc{yes} & \textsc{yes} & \textsc{yes} & \textsc{no} & \textsc{yes} & \textsc{no} \\
sf\_n\_25 & 298 & 8 & 34 & \textsc{yes} & \textsc{yes} & \textsc{yes} & \textsc{yes} & \textsc{no} & \textsc{no} & \textsc{no} \\
sf\_o\_22 & 250 & 8 & 34 & \textsc{yes} & \textsc{yes} & \textsc{yes} & \textsc{yes} & \textsc{no} & \textsc{yes} & \textsc{no} \\
sf\_p\_30 & 85 & 8 & 34 & \textsc{no} & \textsc{yes} & \textsc{yes} & \textsc{yes} & \textsc{no} & \textsc{yes} & \textsc{no} \\
sf\_q\_30 & 151 & 10 & 38 & \textsc{no} & \textsc{yes} & \textsc{no} & \textsc{yes} & \textsc{no} & \textsc{yes} & \textsc{no} \\
sf\_r\_30 & 137 & 10 & 37 & \textsc{no} & \textsc{yes} & \textsc{yes} & \textsc{yes} & \textsc{no} & \textsc{yes} & \textsc{no} \\
sf\_s\_30 & 167 & 10 & 39 & \textsc{no} & \textsc{yes} & \textsc{no} & \textsc{yes} & \textsc{no} & \textsc{yes} & \textsc{no} \\
sf\_t\_30 & 131 & 9 & 36 & \textsc{no} & \textsc{yes} & \textsc{yes} & \textsc{no} & \textsc{no} & \textsc{yes} & \textsc{no} \\
sf\_u\_30 & 114 & 9 & 36 & \textsc{no} & \textsc{yes} & \textsc{yes} & \textsc{yes} & \textsc{yes} & \textsc{yes} & \textsc{yes} \\
sf\_v\_30 & 112 & 6 & 26 & \textsc{yes} & \textsc{yes} & \textsc{yes} & \textsc{yes} & \textsc{yes} & \textsc{yes} & \textsc{yes} \\
sf\_w\_30 & 143 & 7 & 31 & \textsc{no} & \textsc{yes} & \textsc{yes} & \textsc{yes} & \textsc{no} & \textsc{yes} & \textsc{no} \\
sf\_x\_30 & 113 & 10 & 38 & \textsc{no} & \textsc{yes} & \textsc{yes} & \textsc{yes} & \textsc{no} & \textsc{yes} & \textsc{no} \\
sf\_y\_30 & 98 & 9 & 34 & \textsc{no} & \textsc{yes} & \textsc{yes} & \textsc{yes} & \textsc{no} & \textsc{yes} & \textsc{no} \\
sf\_z\_30 & 152 & 10 & 53 & \textsc{no} & \textsc{no} & \textsc{no} & \textsc{yes} & \textsc{no} & \textsc{yes} & \textsc{no} \\
\end{longtable}
\endgroup

\subsection{Details on Timeouts}
\label{apx:timeouts}
The 30 minutes for \textsc{FairMaxEntropy} exclude the generation of marginals using column generation. 
After this, timeouts are handled as follows:
We first run the stage of \textsc{FairMaxEntropy} that coincides with \textsc{MaxEntropy}, i.e., building the dynamic program and drawing uniform samples.
Then, we start the iterative optimization.
If this optimization reaches an $\ell_2$ gradient norm below $0.001$, we treat the optimization as completed and move on to sampling.
If this threshold is not reached after 10 minutes of the iterative optimization, we check if at least two iterations have completed and, if so, move on to sampling, since nontrivial progress towards fairness has been made.
If two iterations have not completed after 10 minutes, or this entire process exceeds the threshold of 30 minutes, we treat the experiment as a timeout and remove the instance from our analysis.

\subsection{\nameref{sec:diversity}}
\label{apx:diveristy}
In this section, we compare the selection algorithms based on the pairwise mutual information between features.
Formally, given a sortition instance $\calI$ and a panel $P \in \calP$, for each pair of features $f, f'$, we define random variables $X_f$ and $X_{f'}$ distributed according to the frequencies of each value within the panel. Then, for each pair of features, we compute the normalized mutual information
\begin{align*}
    \frac{I(X_f; X_{f'})}{H(X_f) + H(X_{f'})}.
\end{align*}
To estimate measuring errors for the sampled selection algorithms, we compute a  $95\%$ confidence interval using the $t$-distribution.
The metrics were computed across the $42$ instances that were solved within the timeout for all selection algorithms (see \cref{tab:instances}).
Across the board, the plots reinforce the conclusions from \cref{sec:diversity}, namely that column-generation-based selection algorithms increase the correlation between features.

\begin{figure}[tbh]
  \centering
  \begin{subfigure}[t]{0.32\textwidth}
    \centering
    \includegraphics[width=\linewidth]{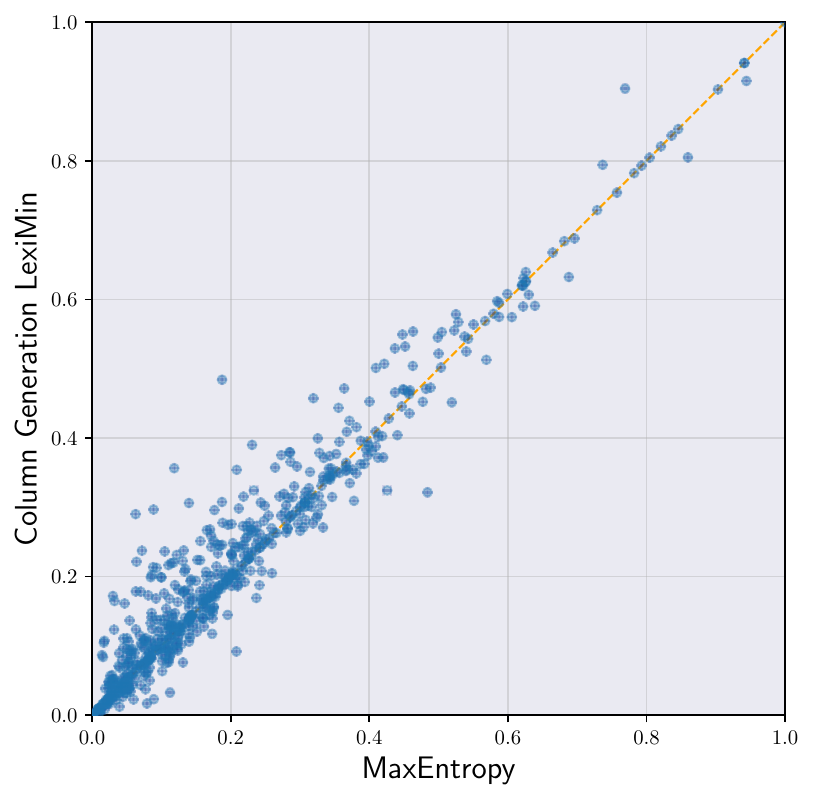}
  \end{subfigure}\hfill
  \begin{subfigure}[t]{0.32\textwidth}
    \centering
    \includegraphics[width=\linewidth]{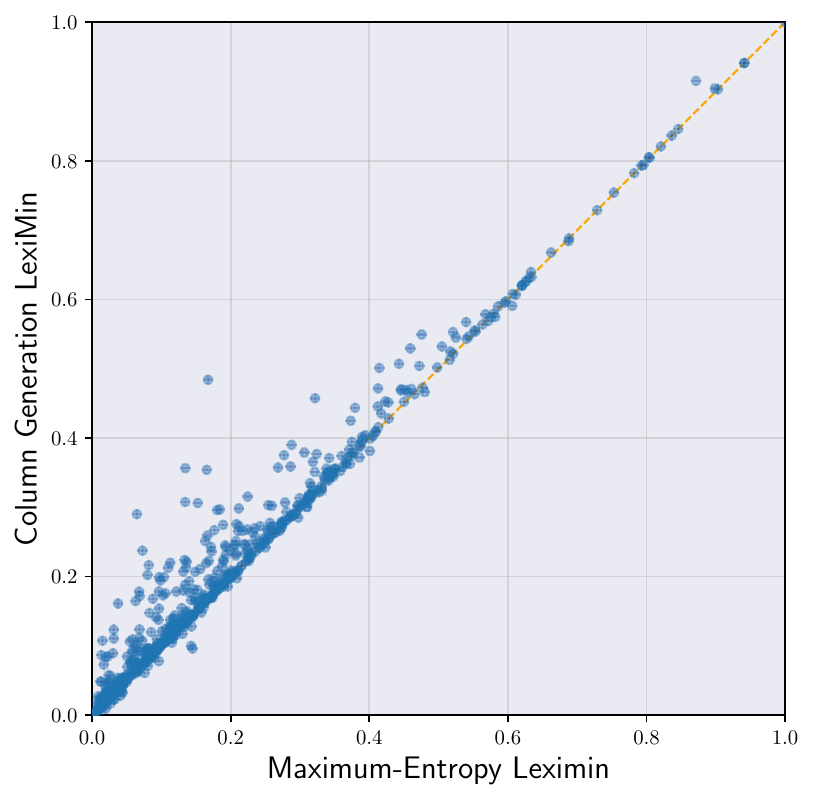}
  \end{subfigure}
  \begin{subfigure}[t]{0.32\textwidth}
    \centering
    \includegraphics[width=\linewidth]{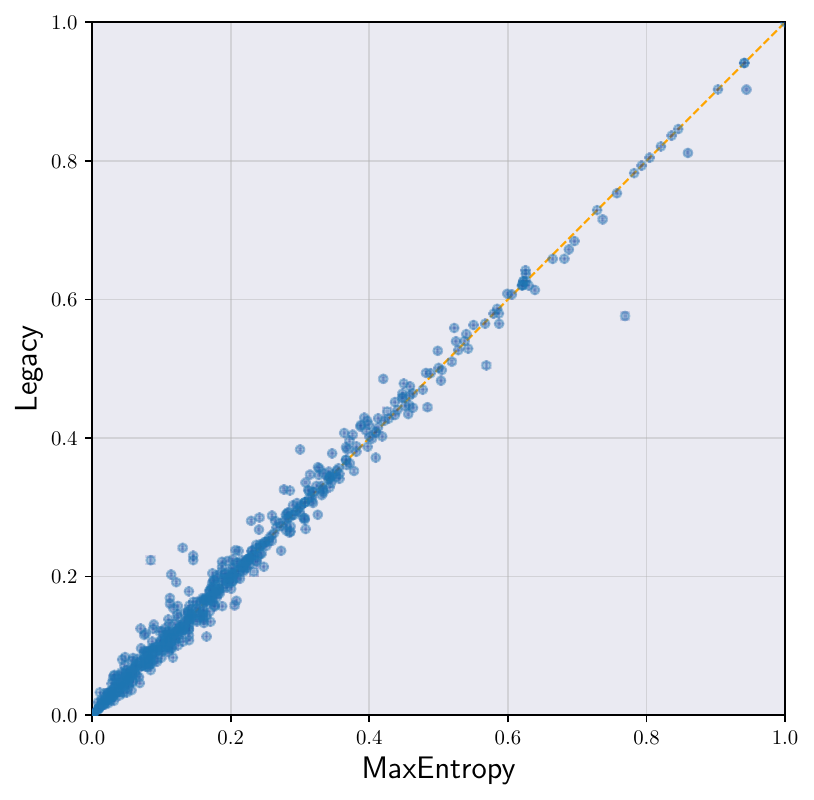}
   \end{subfigure}
\end{figure}

\begin{figure}[h!]
  \centering
  \begin{subfigure}[t]{0.32\textwidth}
    \centering
    \includegraphics[width=\linewidth]{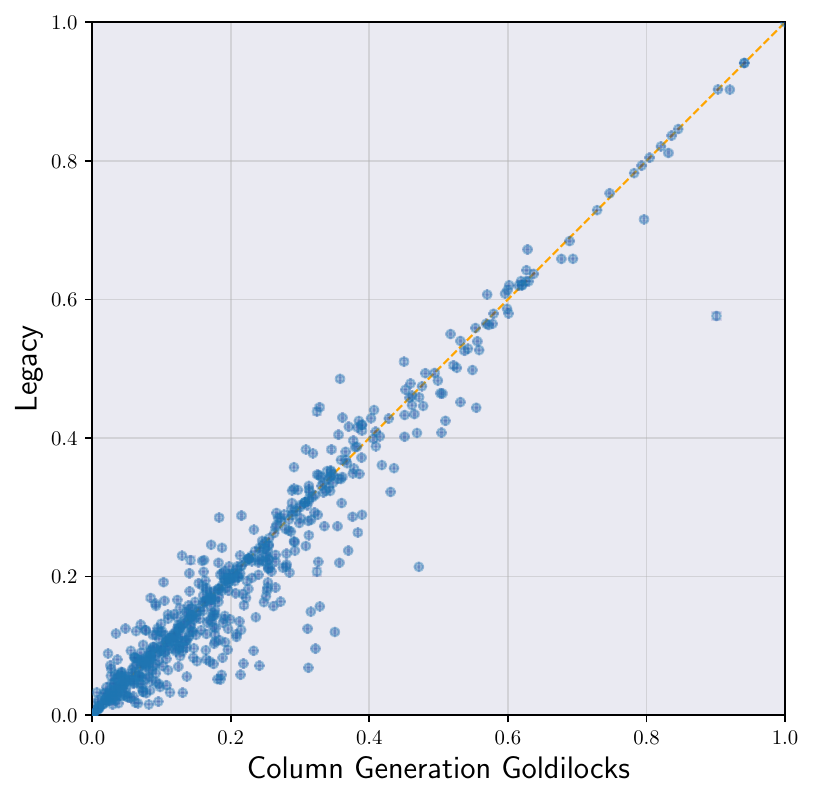}
  \end{subfigure}\hfill
  \begin{subfigure}[t]{0.32\textwidth}
    \centering
    \includegraphics[width=\linewidth]{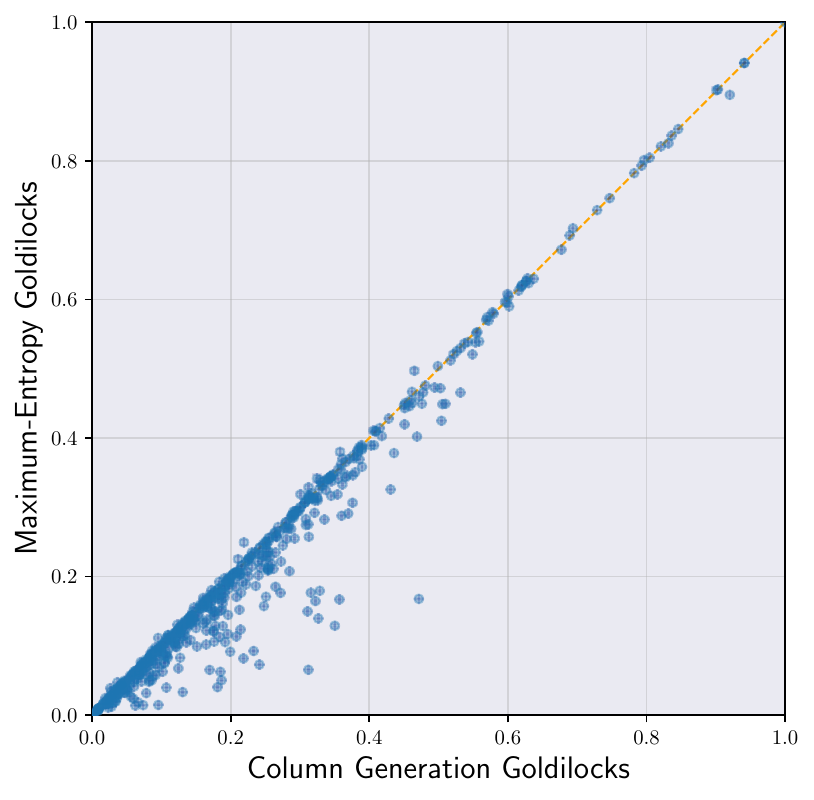}
  \end{subfigure}
  \begin{subfigure}[t]{0.32\textwidth}
    \centering
    \includegraphics[width=\linewidth]{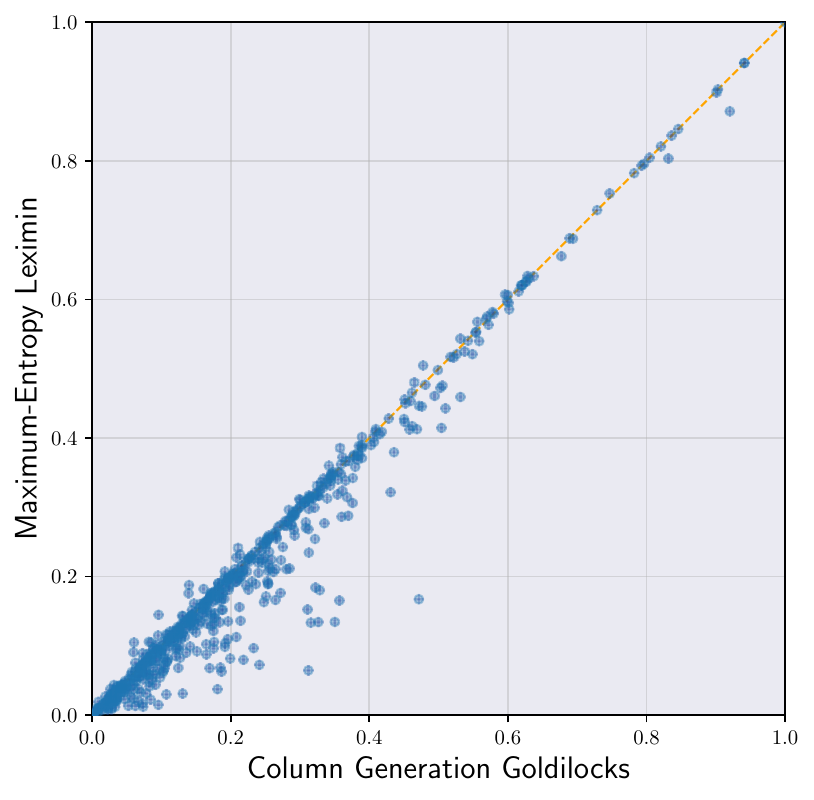}
   \end{subfigure}
\end{figure}

\begin{figure}[h!]
  \centering
  \begin{subfigure}[t]{0.32\textwidth}
    \centering
    \includegraphics[width=\linewidth]{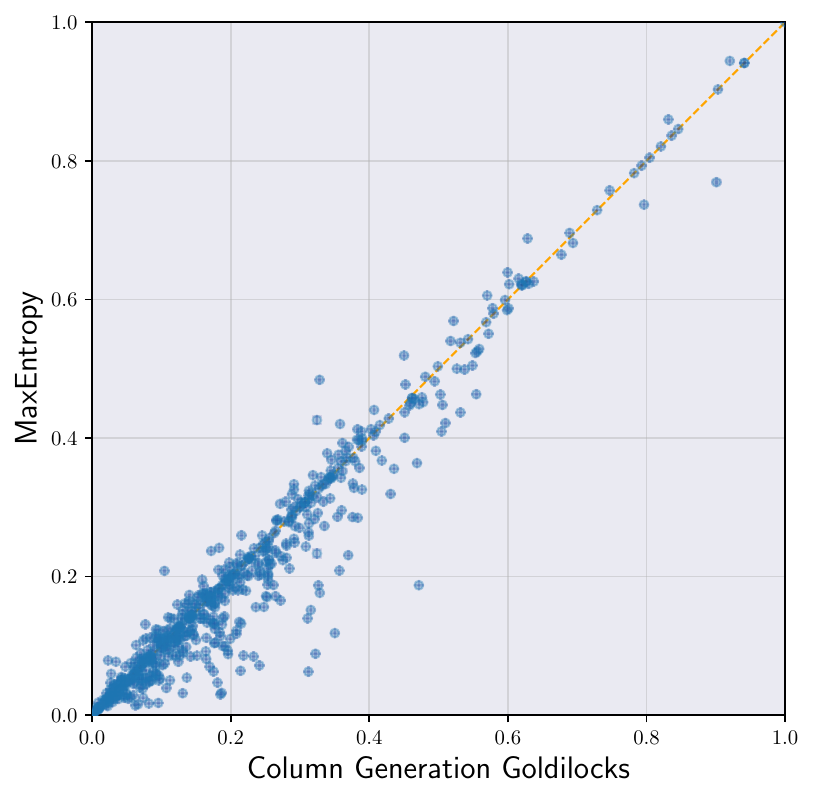}
  \end{subfigure}\hfill
  \begin{subfigure}[t]{0.32\textwidth}
    \centering
    \includegraphics[width=\linewidth]{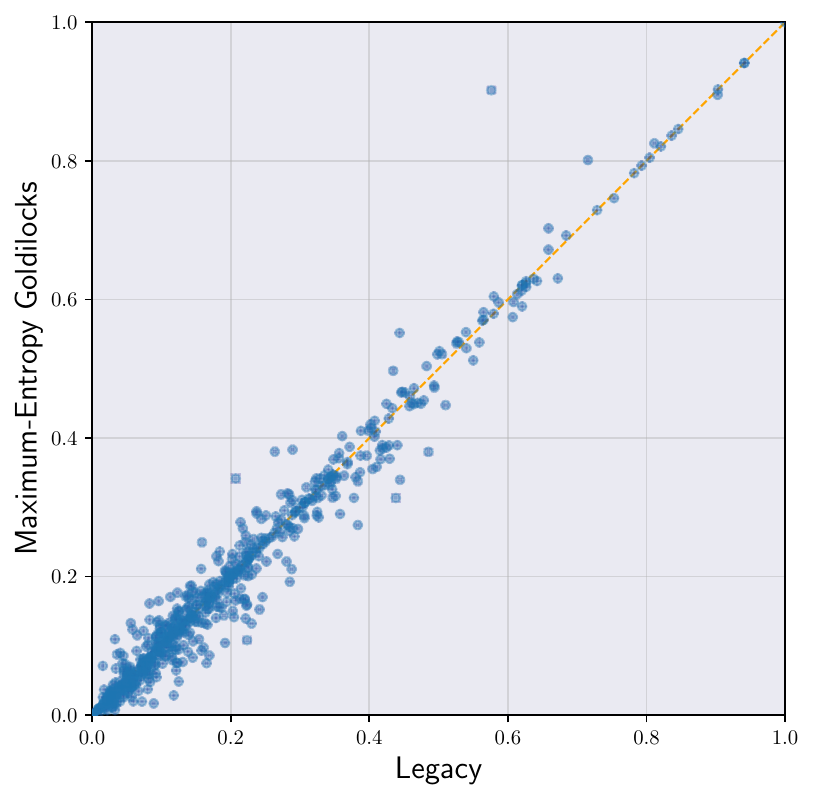}
  \end{subfigure}
  \begin{subfigure}[t]{0.32\textwidth}
    \centering
    \includegraphics[width=\linewidth]{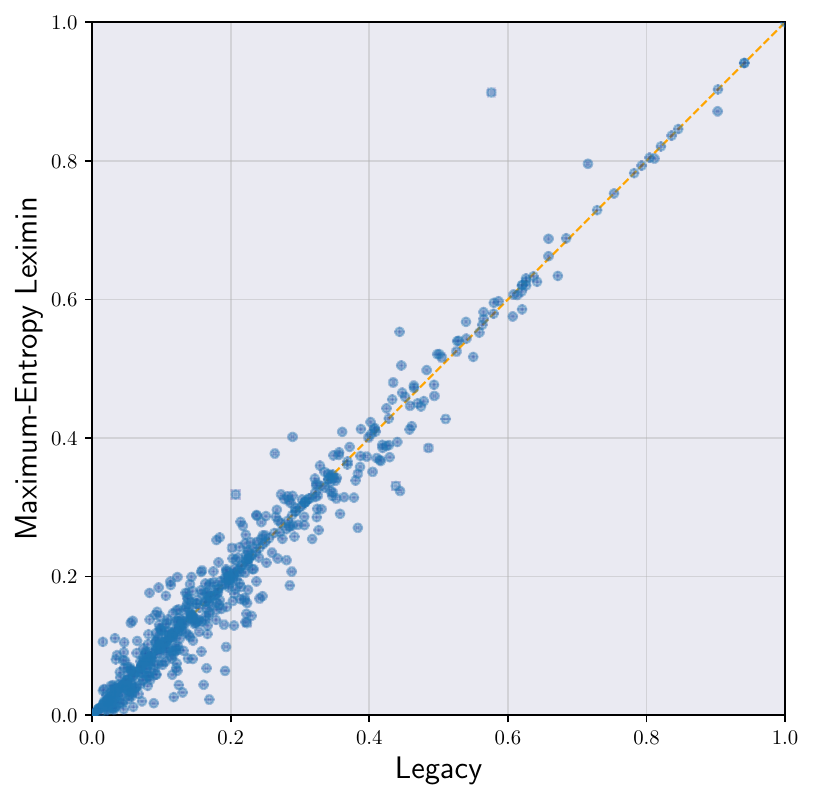}
   \end{subfigure}
\end{figure}

\begin{figure}[h!]
  \centering
  \begin{subfigure}[t]{0.32\textwidth}
    \centering
    \includegraphics[width=\linewidth]{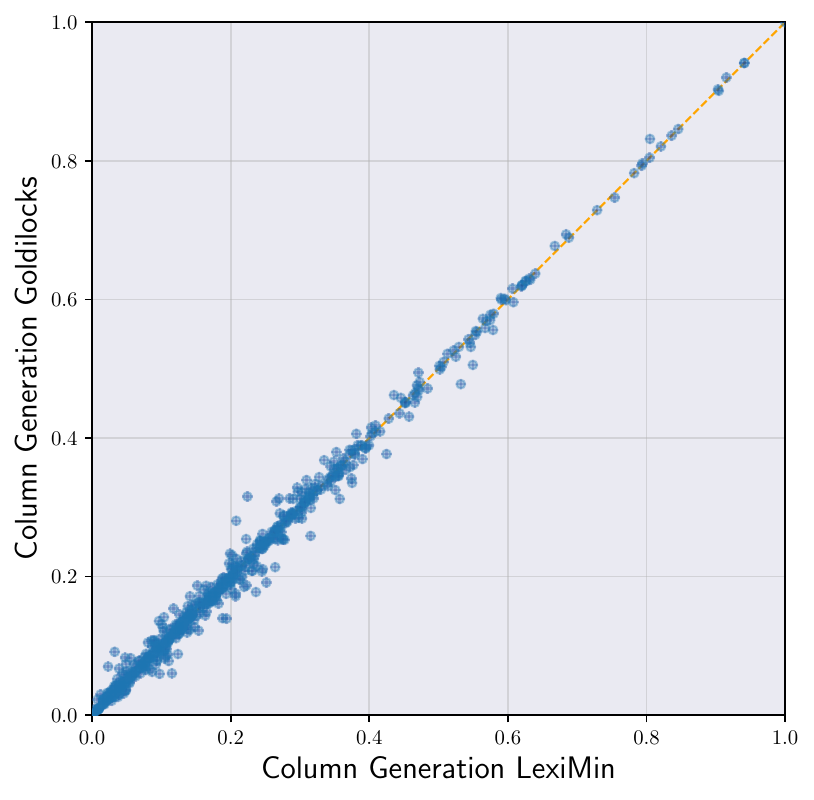}
  \end{subfigure}\hfill
  \begin{subfigure}[t]{0.32\textwidth}
    \centering
    \includegraphics[width=\linewidth]{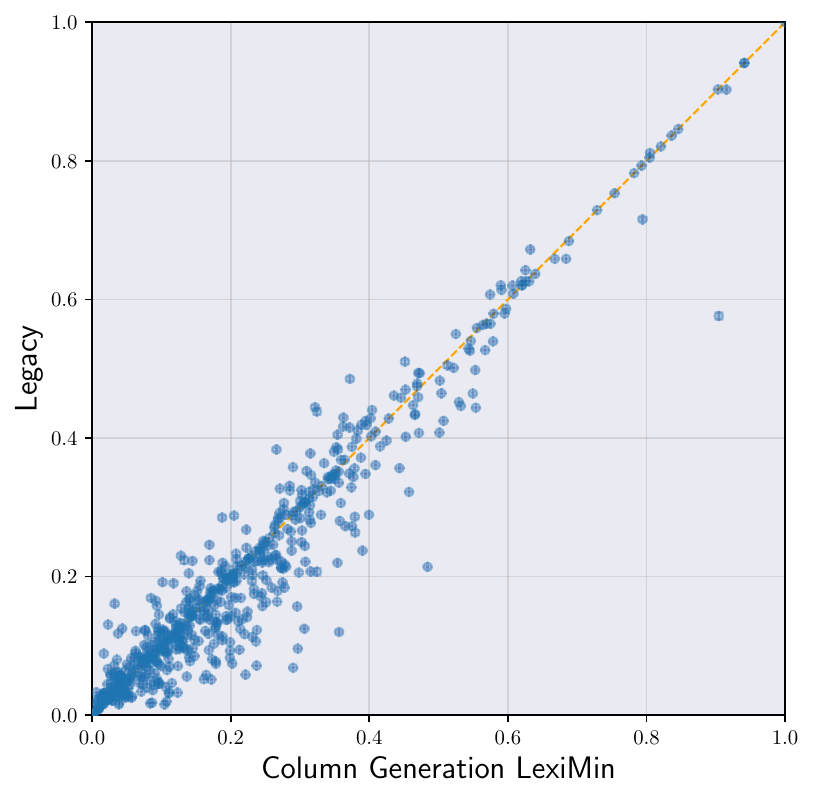}
  \end{subfigure}
  \begin{subfigure}[t]{0.32\textwidth}
    \centering
    \includegraphics[width=\linewidth]{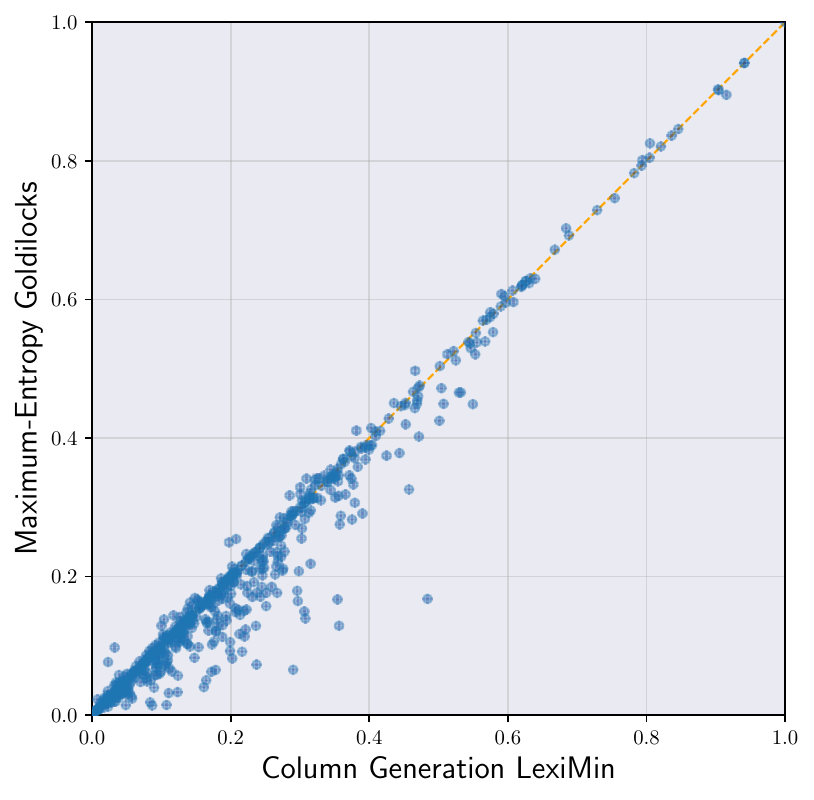}
   \end{subfigure}
\end{figure}

\begin{figure}[h!]
  \centering
  \begin{subfigure}[t]{0.32\textwidth}
    \centering
    \includegraphics[width=\linewidth]{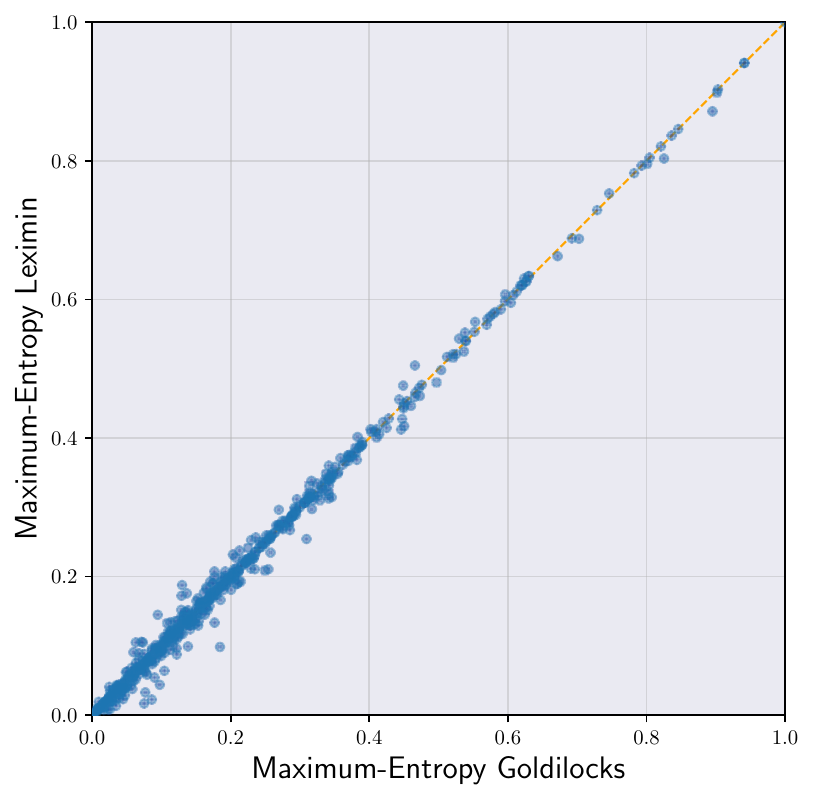}
  \end{subfigure}\hfill
  \begin{subfigure}[t]{0.32\textwidth}
    \centering
    \includegraphics[width=\linewidth]{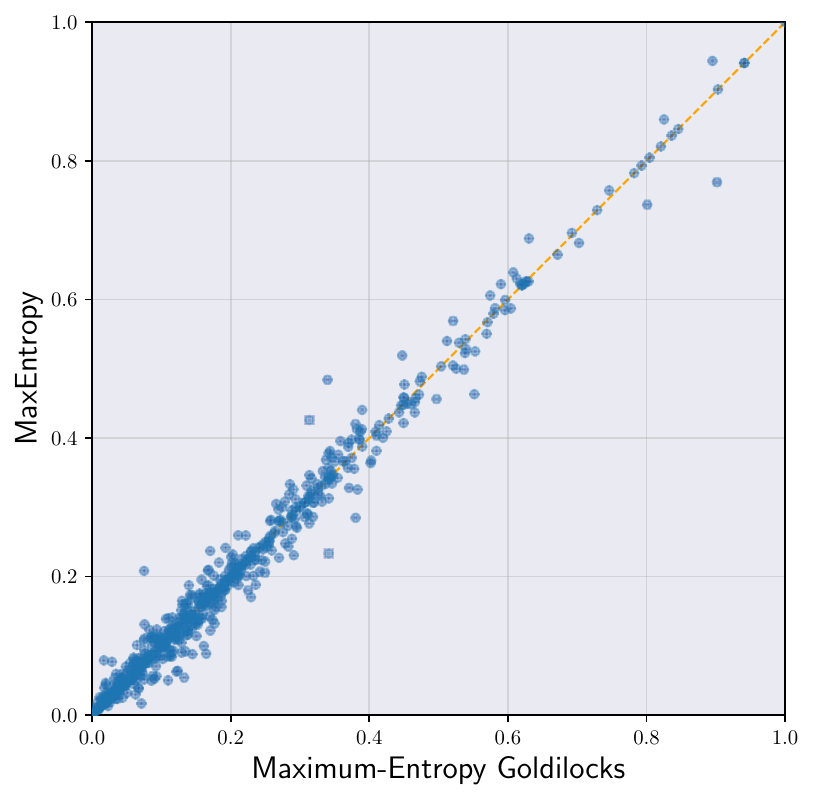}
  \end{subfigure}
  \begin{subfigure}[t]{0.32\textwidth}
    \centering
    \includegraphics[width=\linewidth]{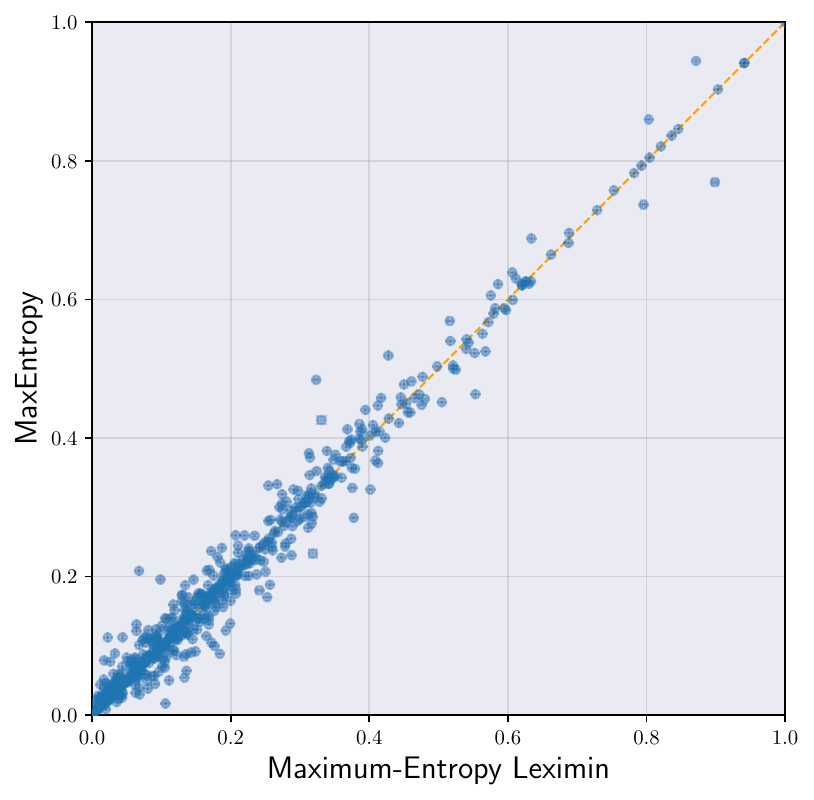}
   \end{subfigure}
\end{figure}

\clearpage

\subsection{\nameref{sec:generalization}}

\label{apx:generalization}
\begin{figure}[h!]
  \centering
  \begin{subfigure}[t]{0.49\textwidth}
    \centering
    \includegraphics[width=\linewidth]{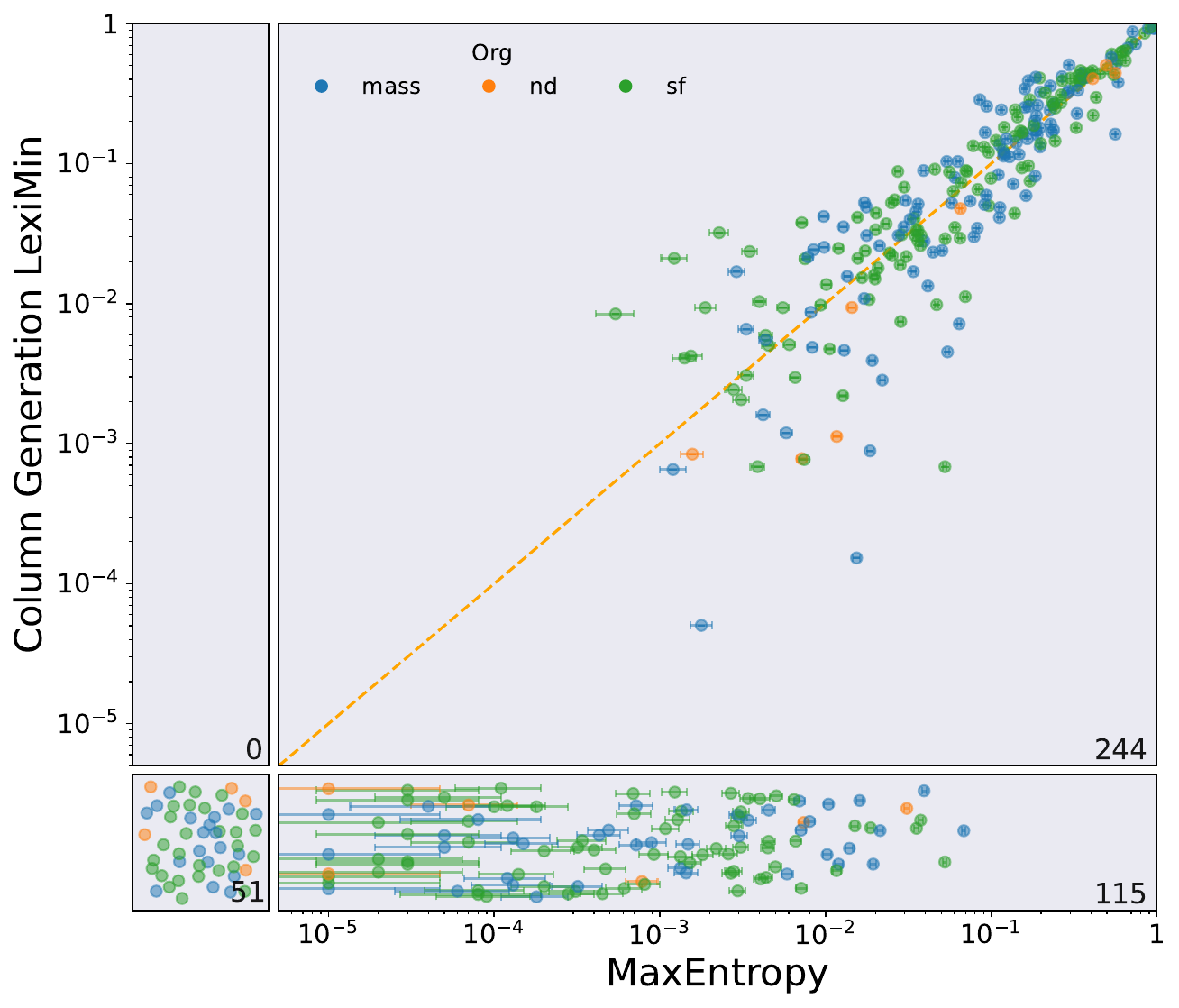}
    \label{fig:left}
  \end{subfigure}\hfill
  \begin{subfigure}[t]{0.49\textwidth}
    \centering
    \includegraphics[width=\linewidth]{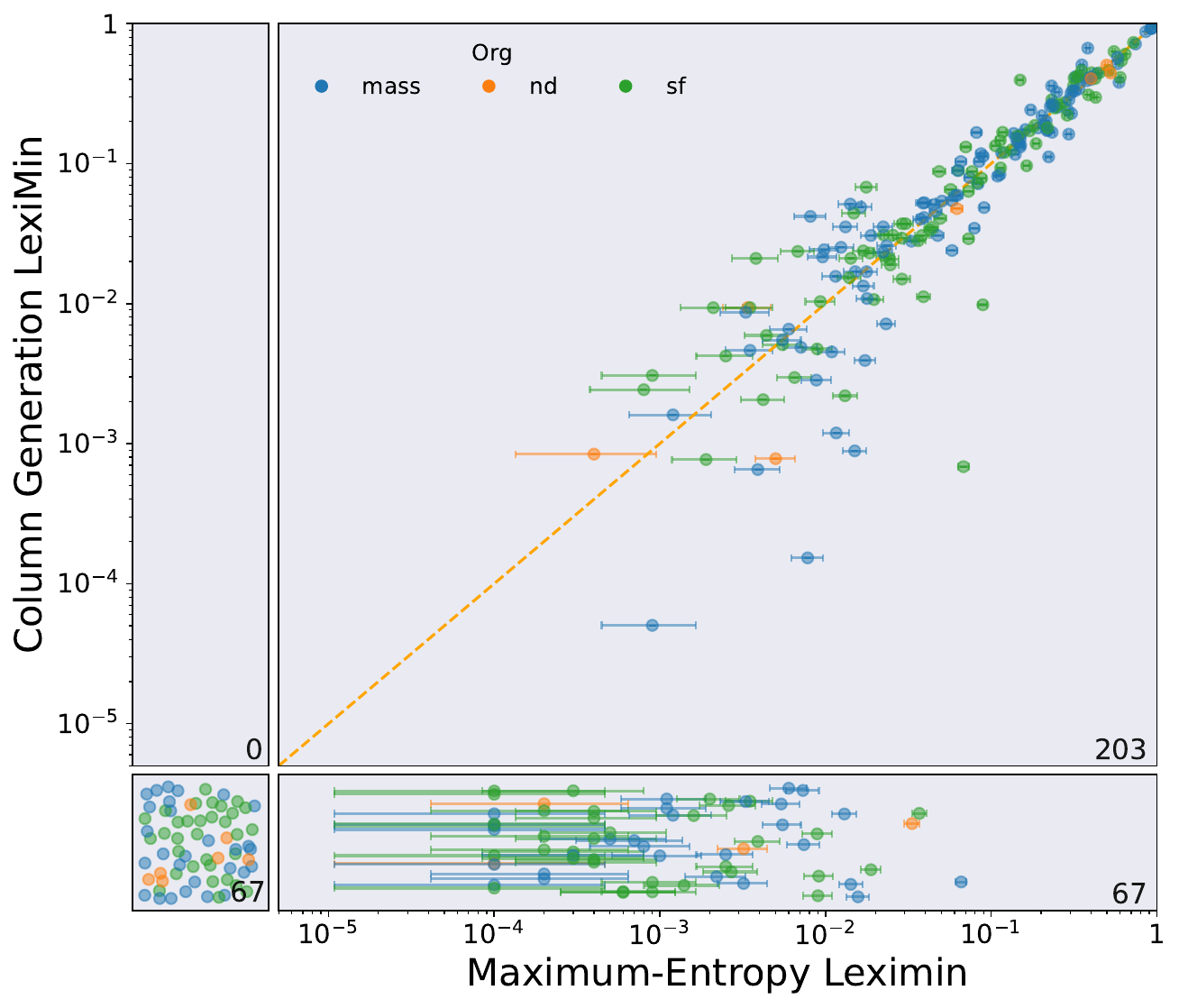}
    \label{fig:right}
  \end{subfigure}
  \caption{Generalization probabilities colored by organization.}
  \label{fig:generalization_org}
\end{figure}

\label{apx:generalization}

\begin{figure}[h!]
  \centering
  \begin{subfigure}[t]{0.49\textwidth}
    \centering
    \includegraphics[width=\linewidth]{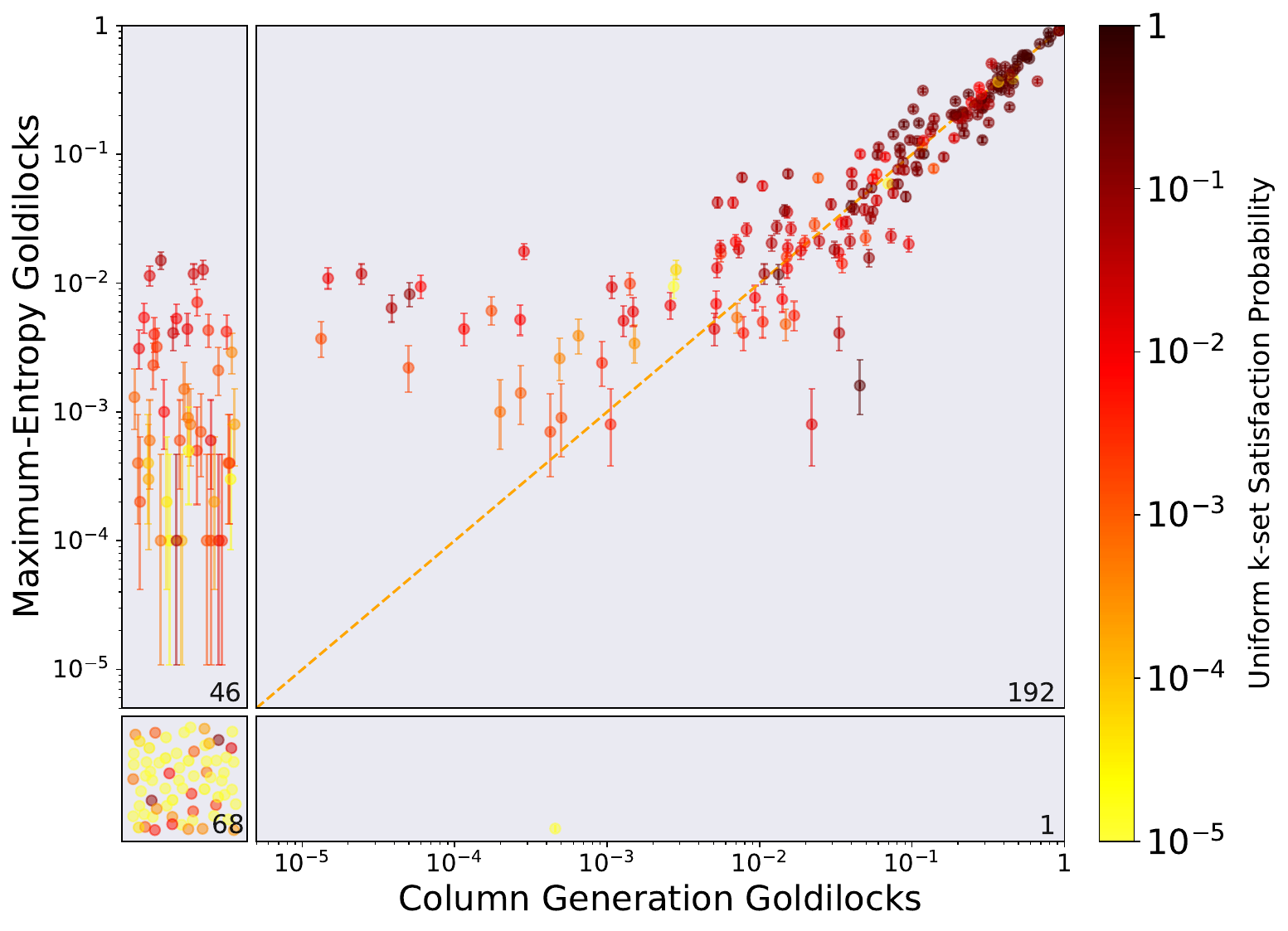}
  \end{subfigure}\hfill
  \begin{subfigure}[t]{0.49\textwidth}
    \centering
    \includegraphics[width=\linewidth]{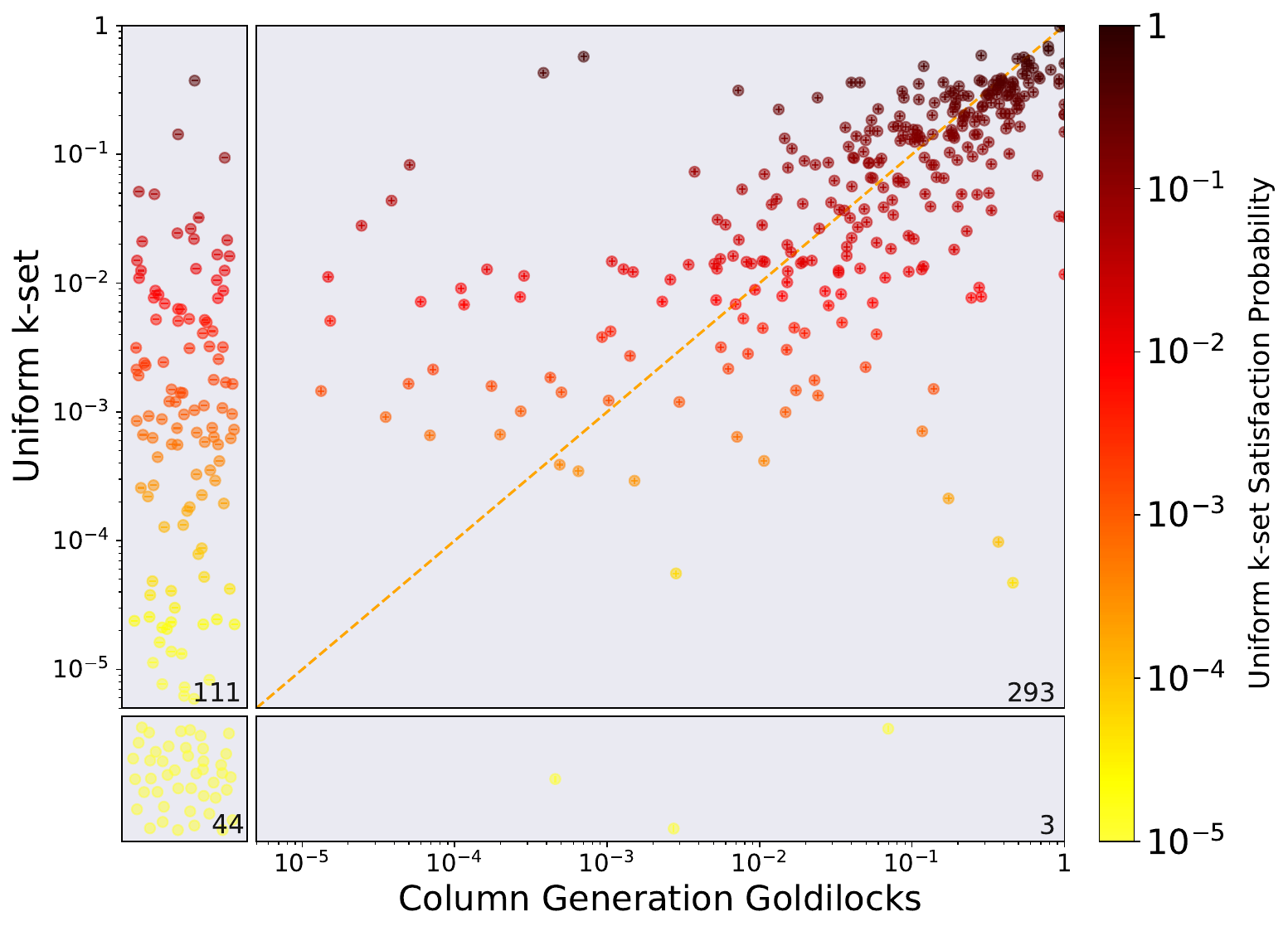}
  \end{subfigure}
\end{figure}

\begin{figure}[h!]
  \centering
  \begin{subfigure}[t]{0.49\textwidth}
    \centering
    \includegraphics[width=\linewidth]{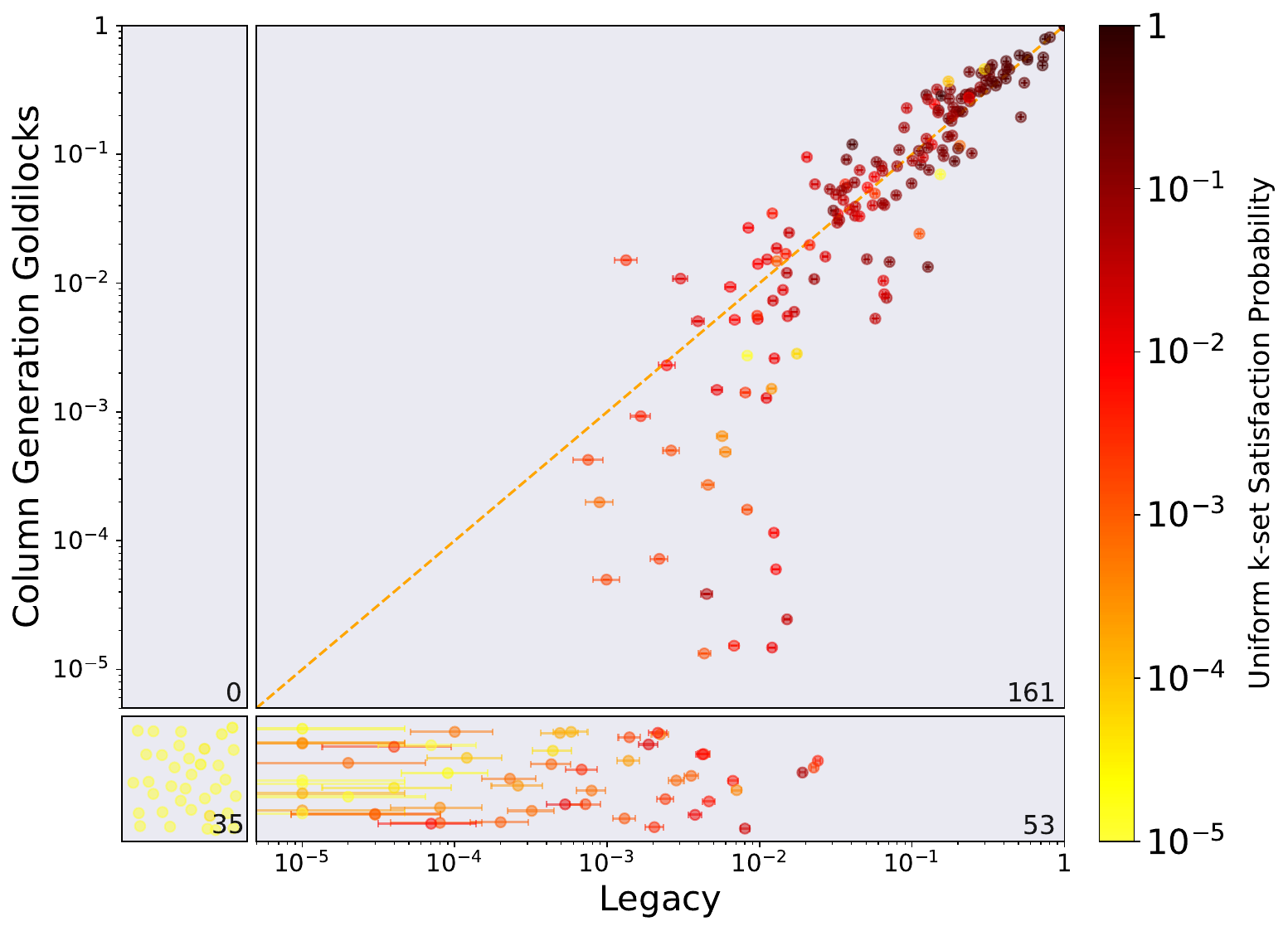}
  \end{subfigure}\hfill
  \begin{subfigure}[t]{0.49\textwidth}
    \centering
    \includegraphics[width=\linewidth]{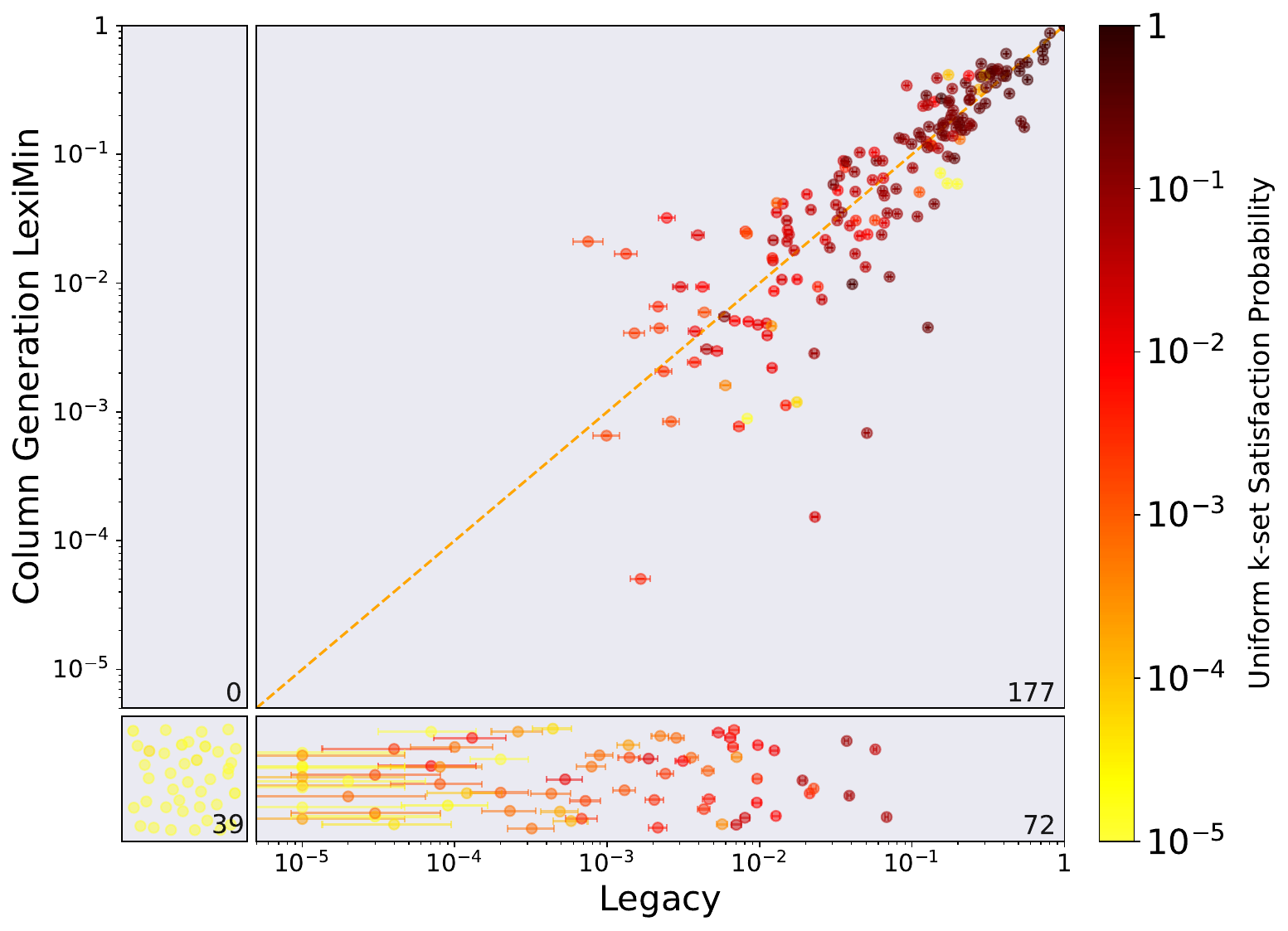}
  \end{subfigure}
\end{figure}

\begin{figure}[h!]
  \centering
  \begin{subfigure}[t]{0.49\textwidth}
    \centering
    \includegraphics[width=\linewidth]{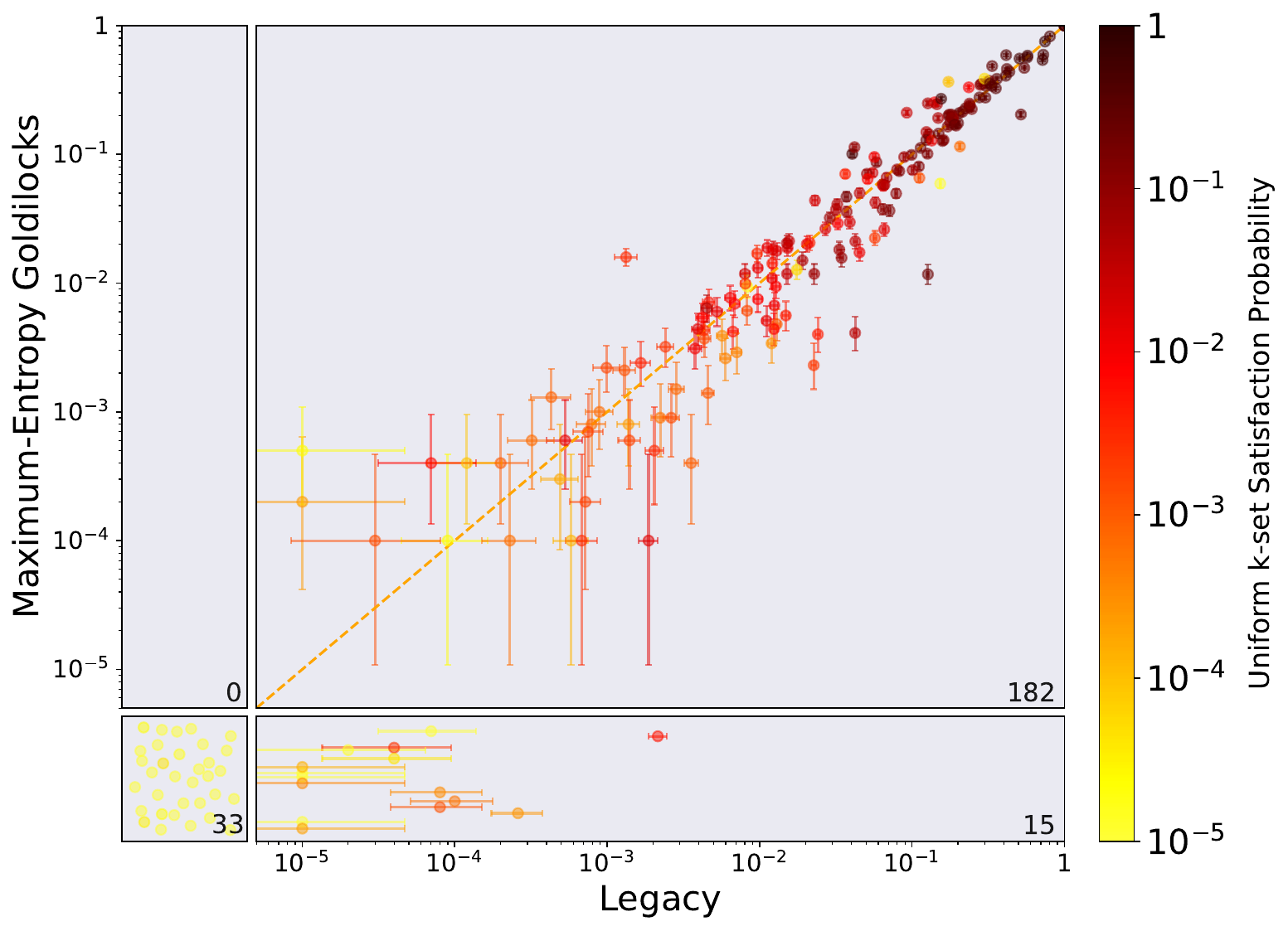}
  \end{subfigure}\hfill
  \begin{subfigure}[t]{0.49\textwidth}
    \centering
    \includegraphics[width=\linewidth]{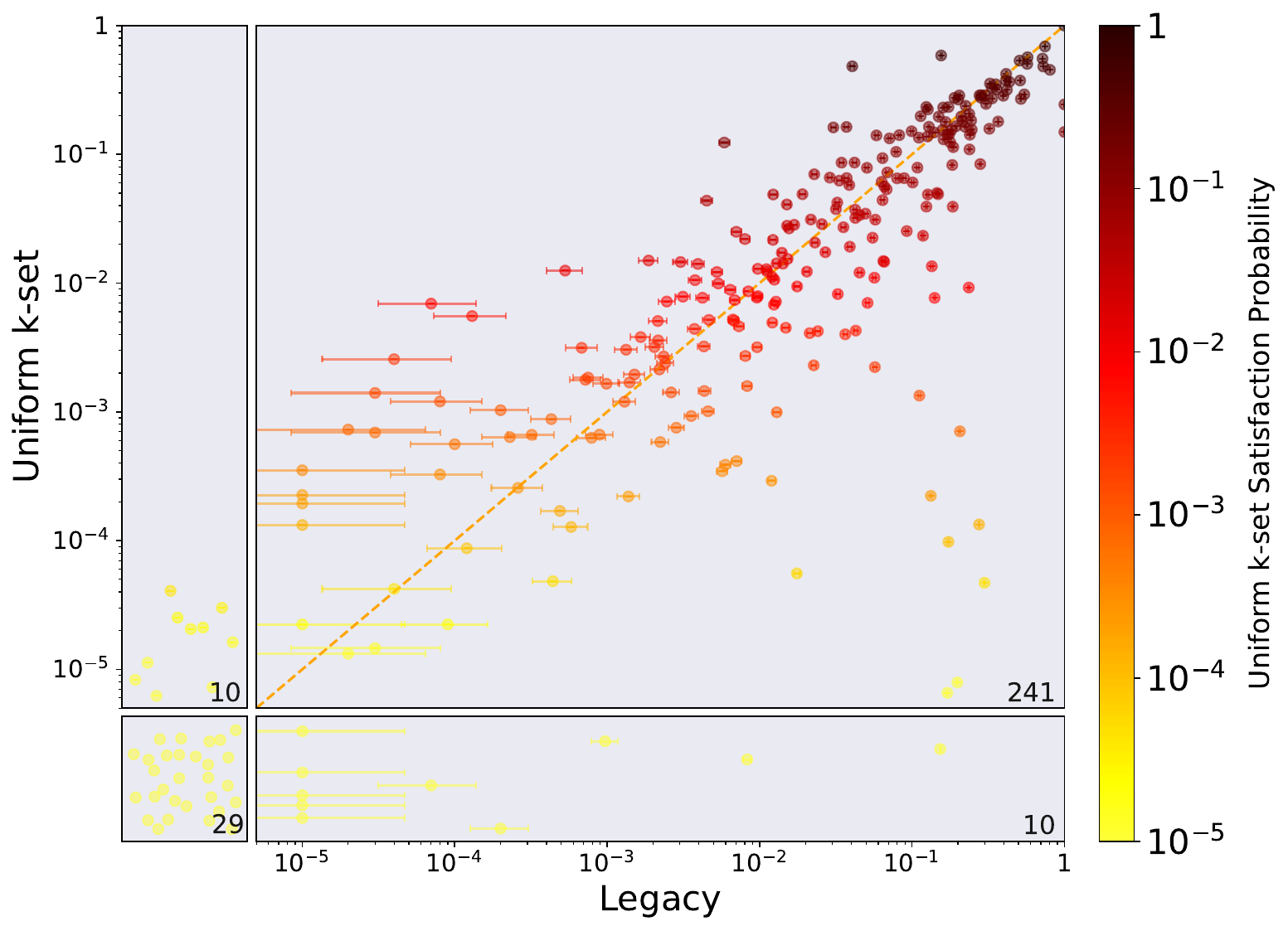}
  \end{subfigure}
\end{figure}

\begin{figure}[h!]
  \centering
  \begin{subfigure}[t]{0.49\textwidth}
    \centering
    \includegraphics[width=\linewidth]{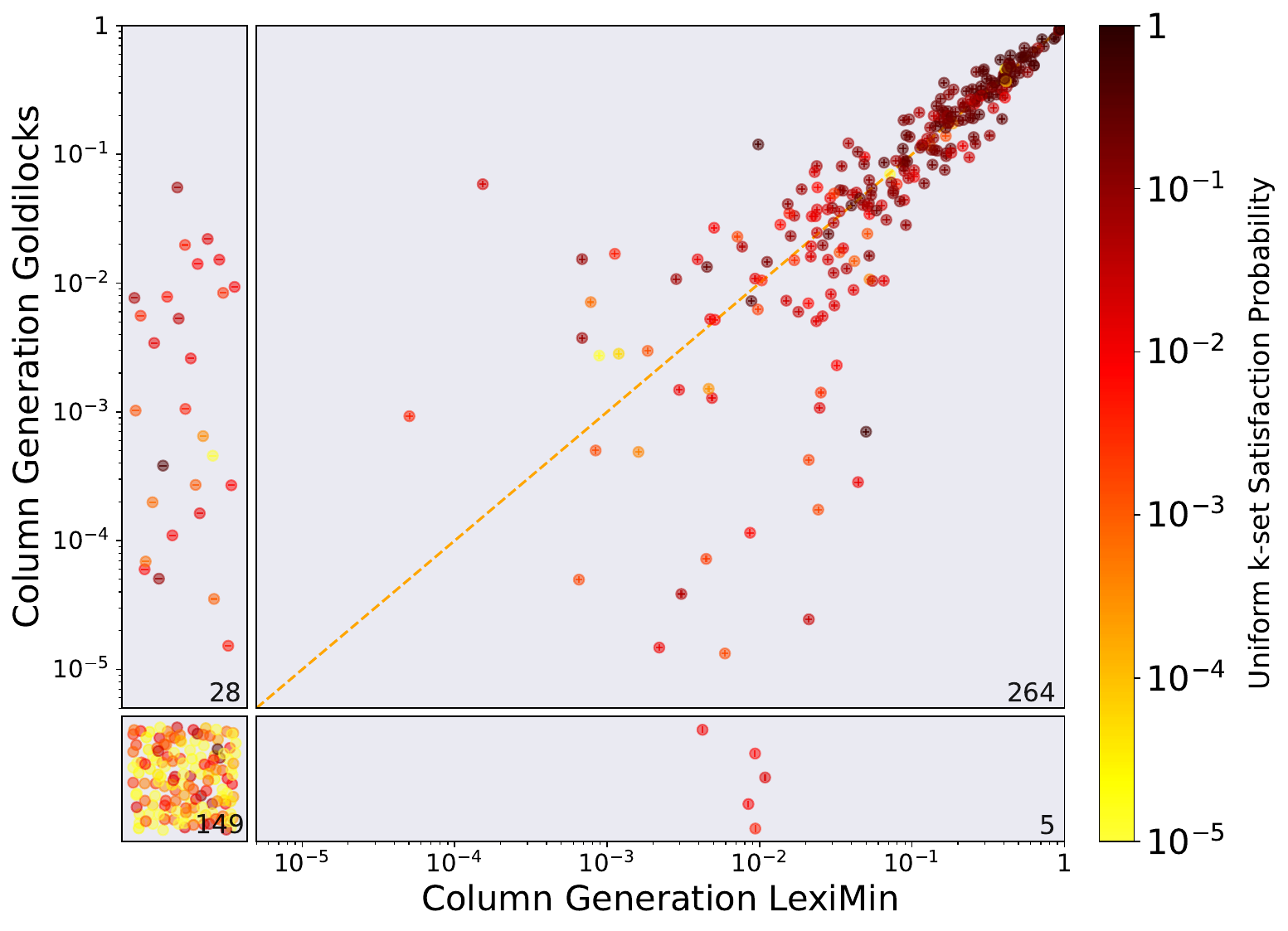}
  \end{subfigure}\hfill
  \begin{subfigure}[t]{0.49\textwidth}
    \centering
    \includegraphics[width=\linewidth]{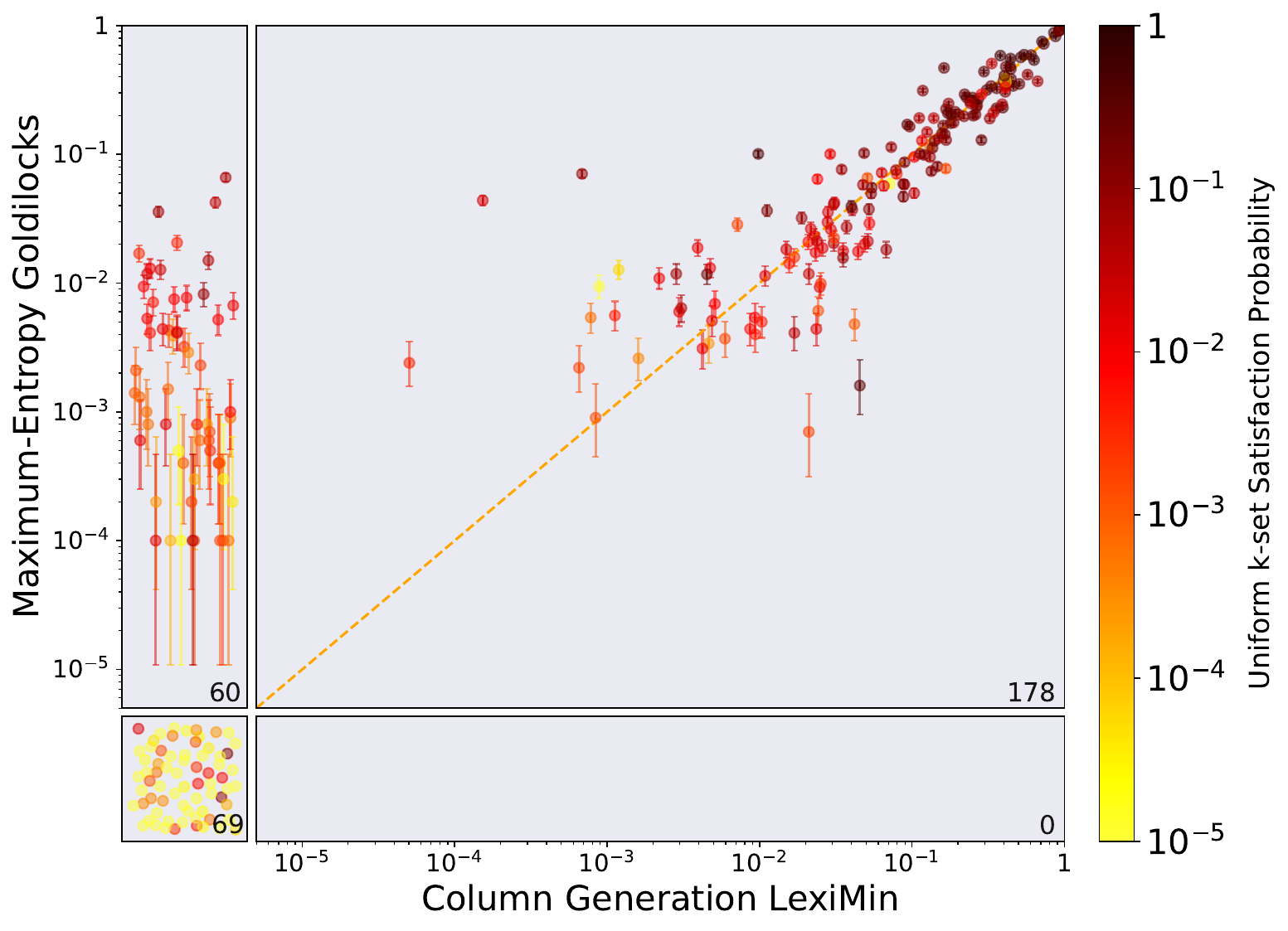}
  \end{subfigure}
\end{figure}

\begin{figure}[h!]
  \centering
  \begin{subfigure}[t]{0.49\textwidth}
    \centering
    \includegraphics[width=\linewidth]{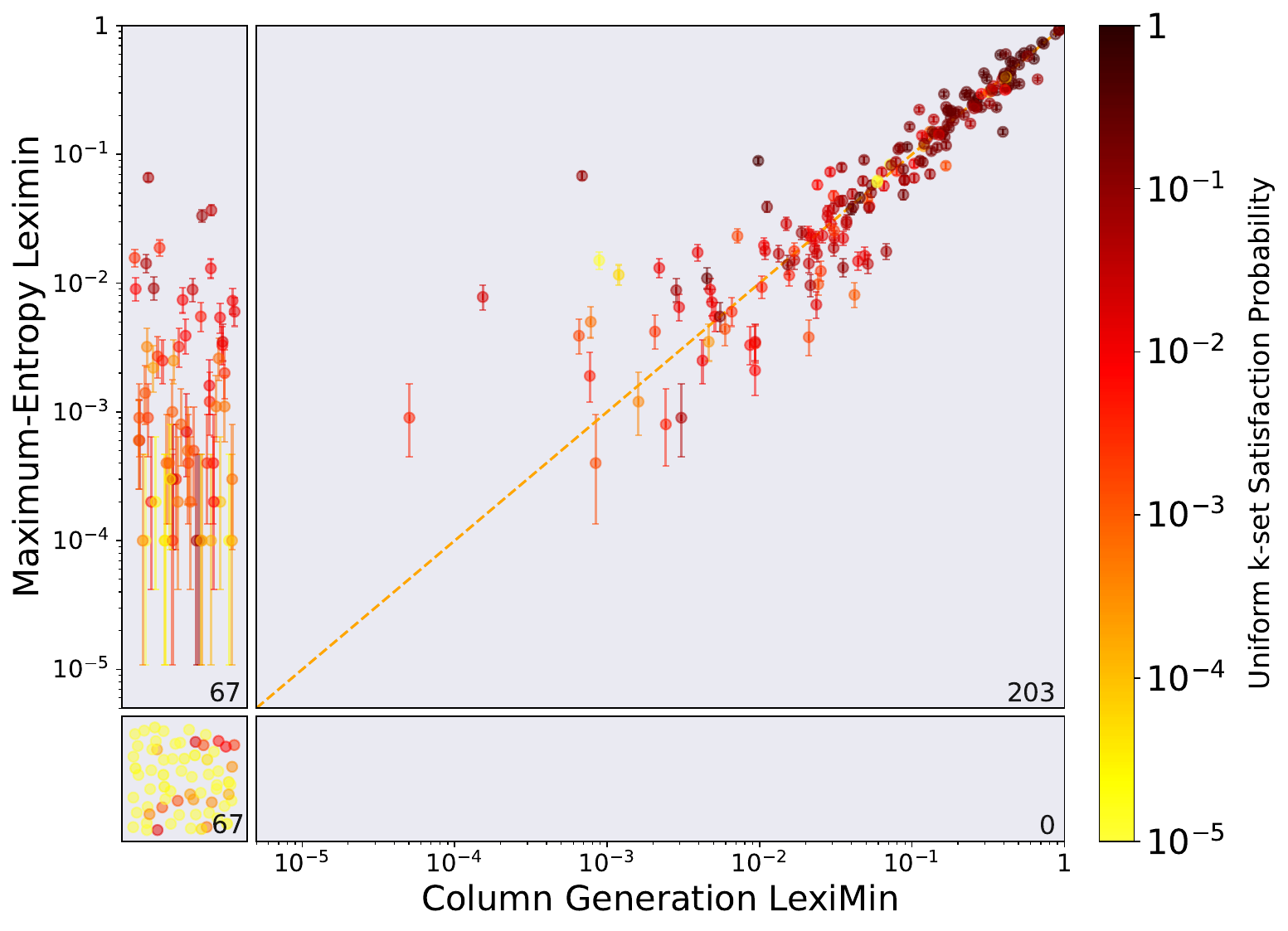}
  \end{subfigure}\hfill
  \begin{subfigure}[t]{0.49\textwidth}
    \centering
    \includegraphics[width=\linewidth]{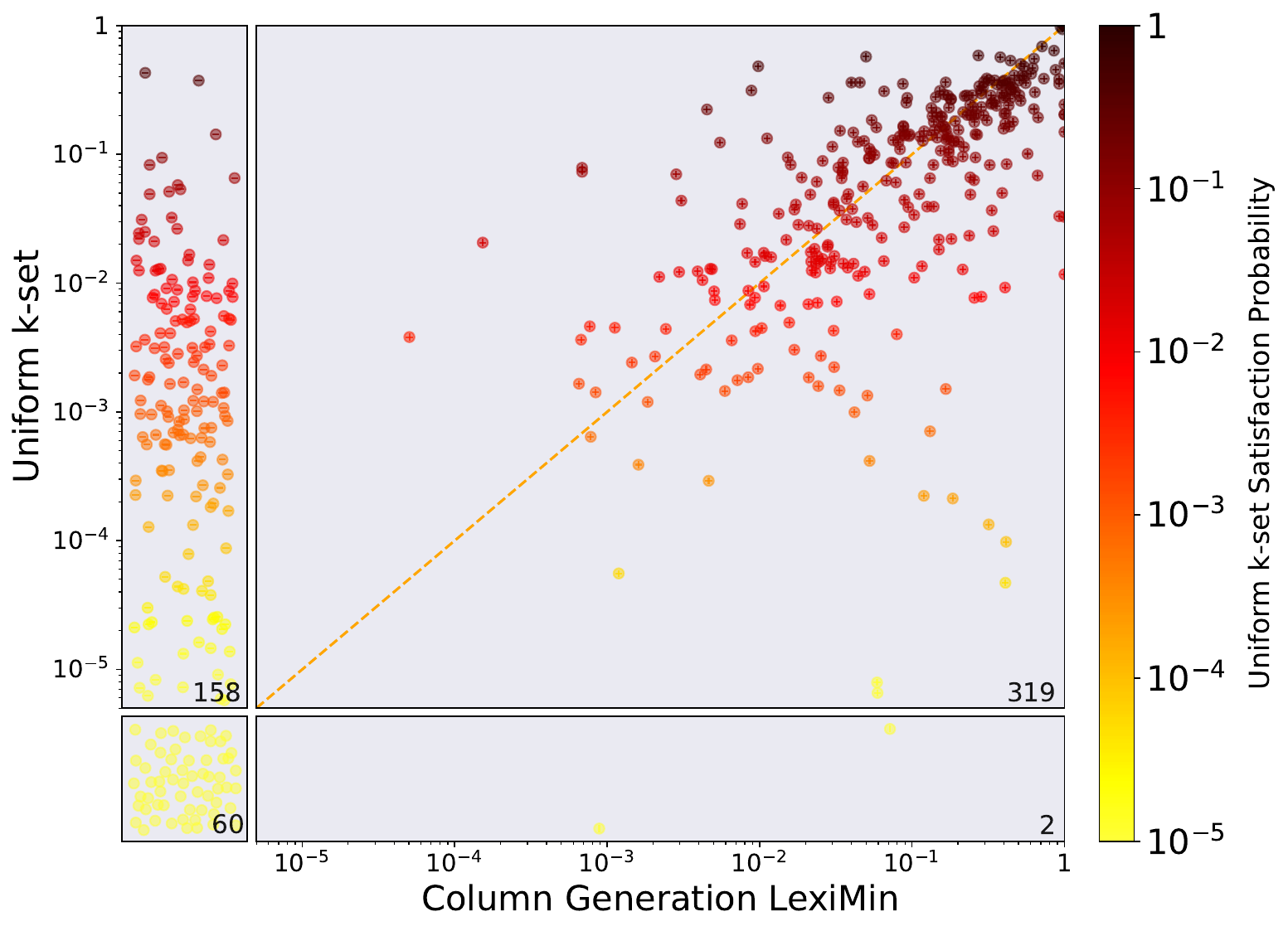}
  \end{subfigure}
\end{figure}

\begin{figure}[h!]
  \centering
  \begin{subfigure}[t]{0.49\textwidth}
    \centering
    \includegraphics[width=\linewidth]{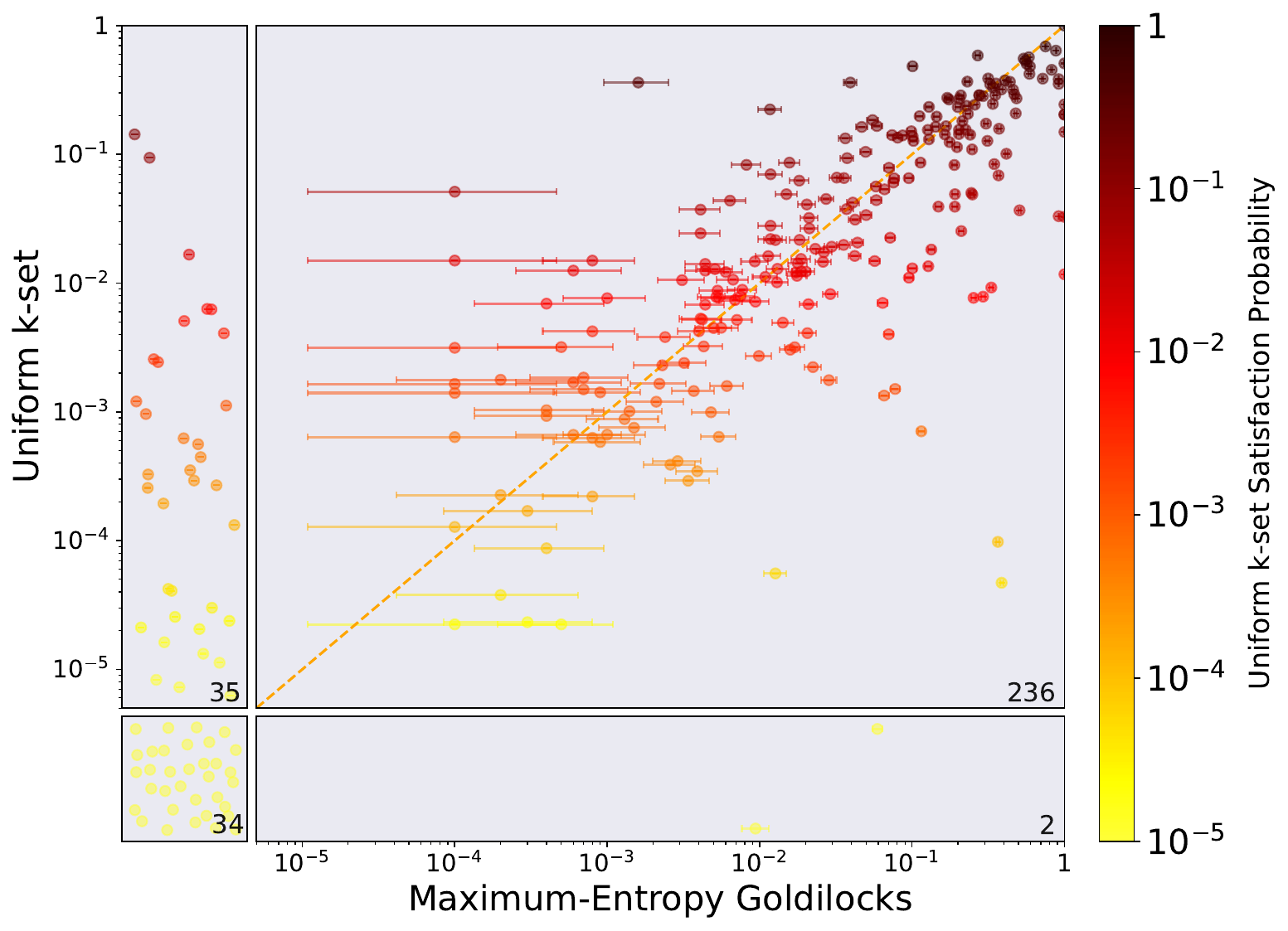}
  \end{subfigure}\hfill
  \begin{subfigure}[t]{0.49\textwidth}
    \centering
    \includegraphics[width=\linewidth]{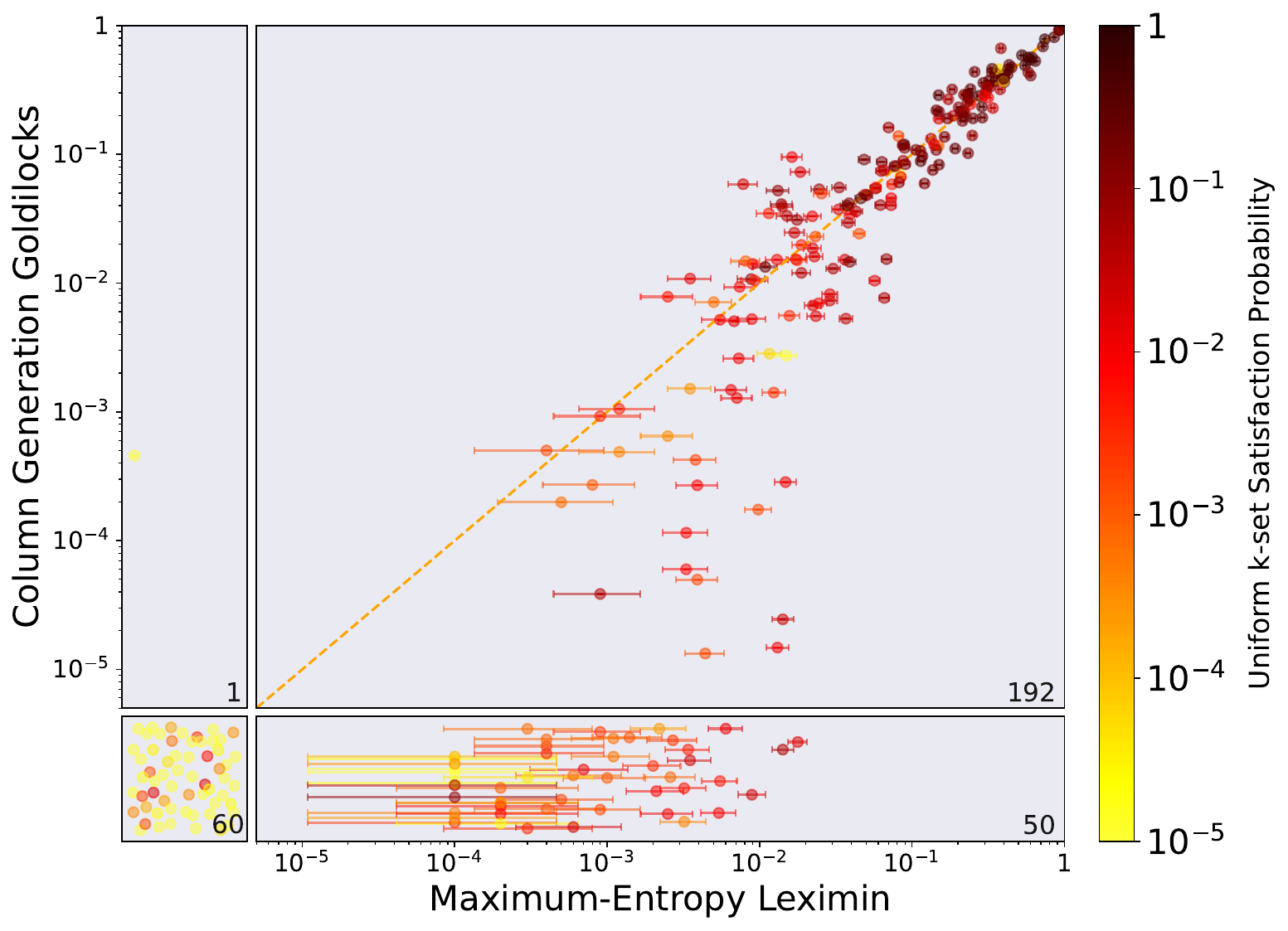}
  \end{subfigure}
\end{figure}

\begin{figure}[h!]
  \centering
  \begin{subfigure}[t]{0.49\textwidth}
    \centering
    \includegraphics[width=\linewidth]{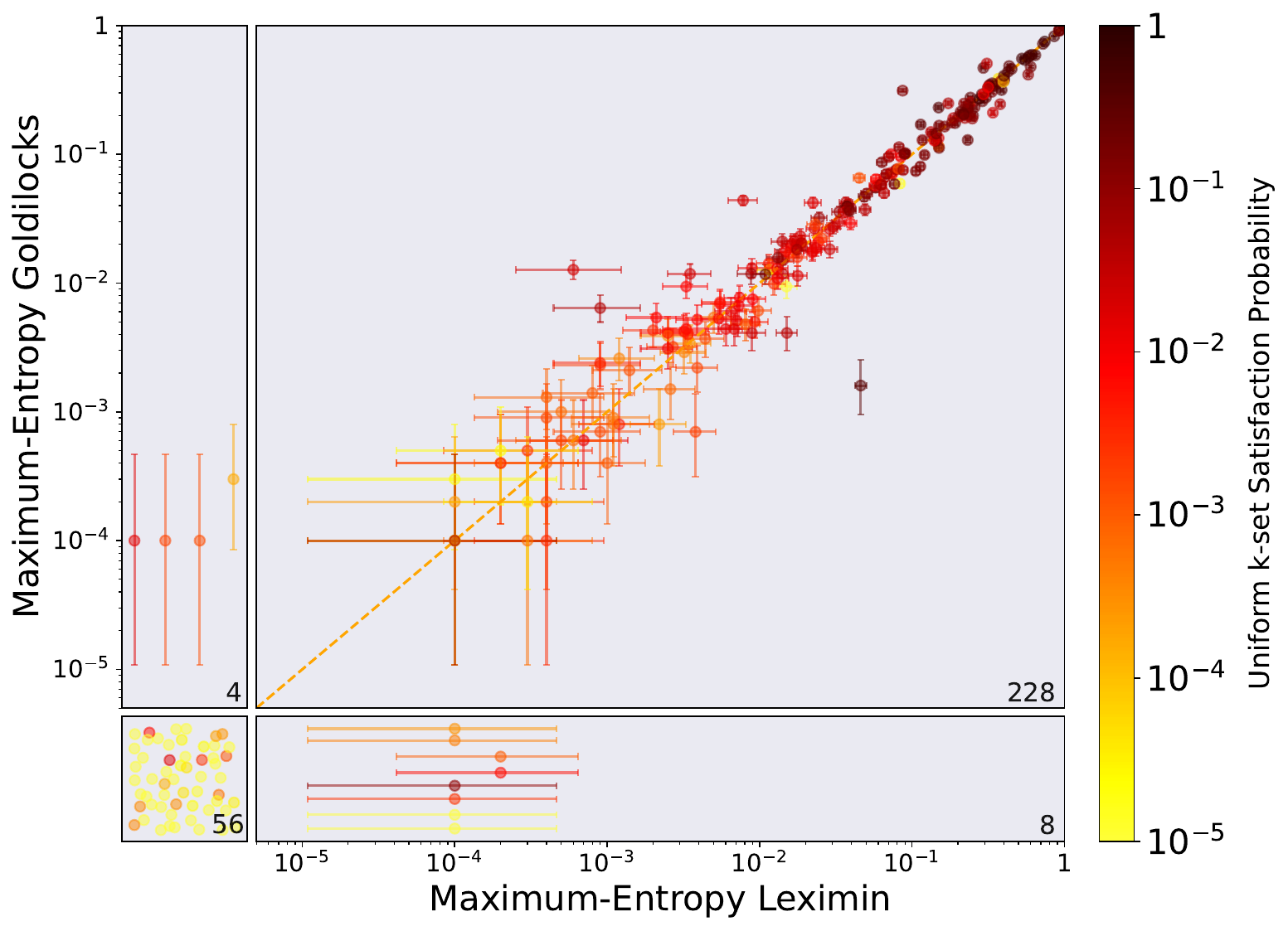}
  \end{subfigure}\hfill
  \begin{subfigure}[t]{0.49\textwidth}
    \centering
    \includegraphics[width=\linewidth]{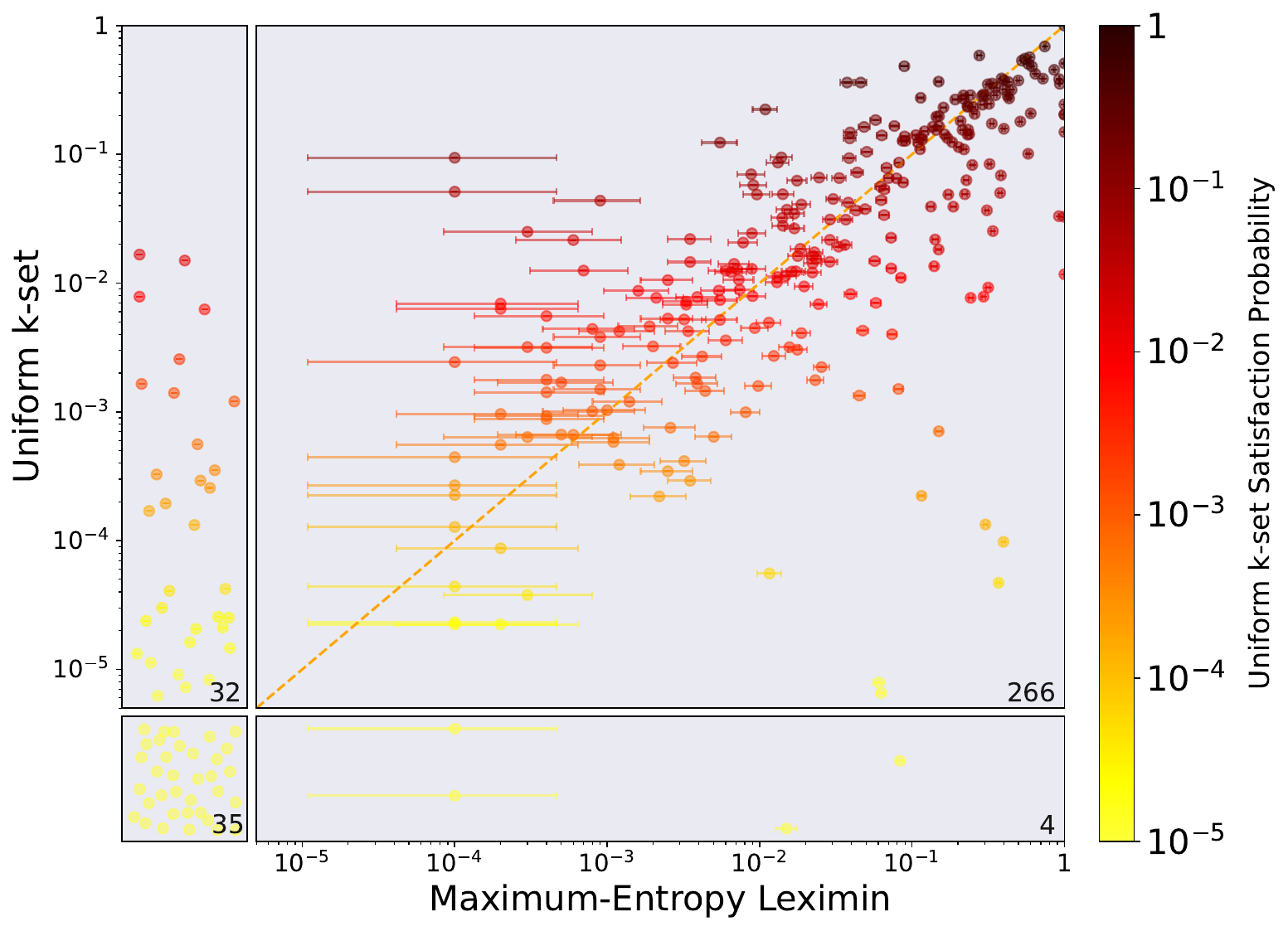}
  \end{subfigure}
\end{figure}

\begin{figure}[h!]
  \centering
  \begin{subfigure}[t]{0.49\textwidth}
    \centering
    \includegraphics[width=\linewidth]{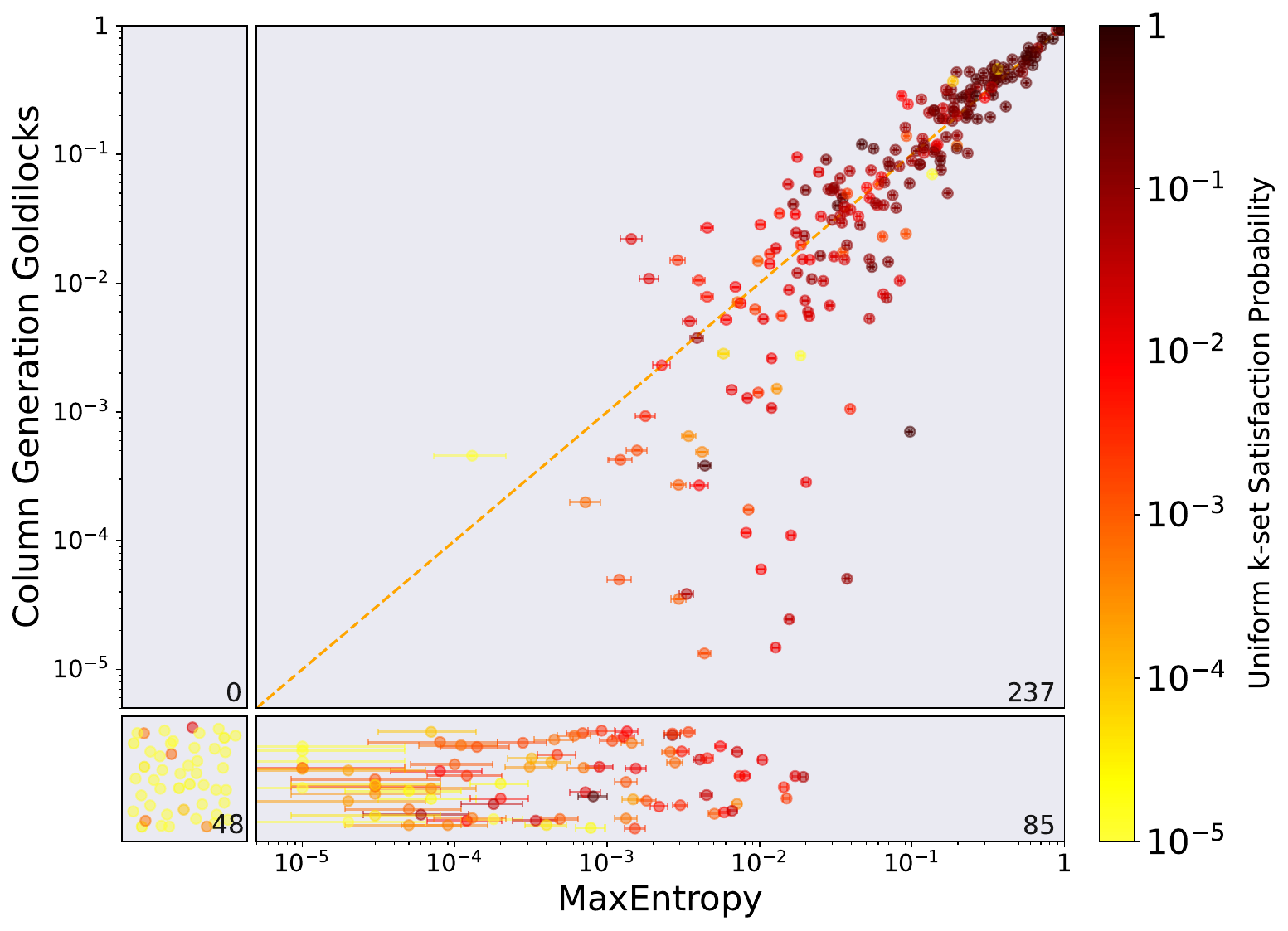}
  \end{subfigure}\hfill
  \begin{subfigure}[t]{0.49\textwidth}
    \centering
    \includegraphics[width=\linewidth]{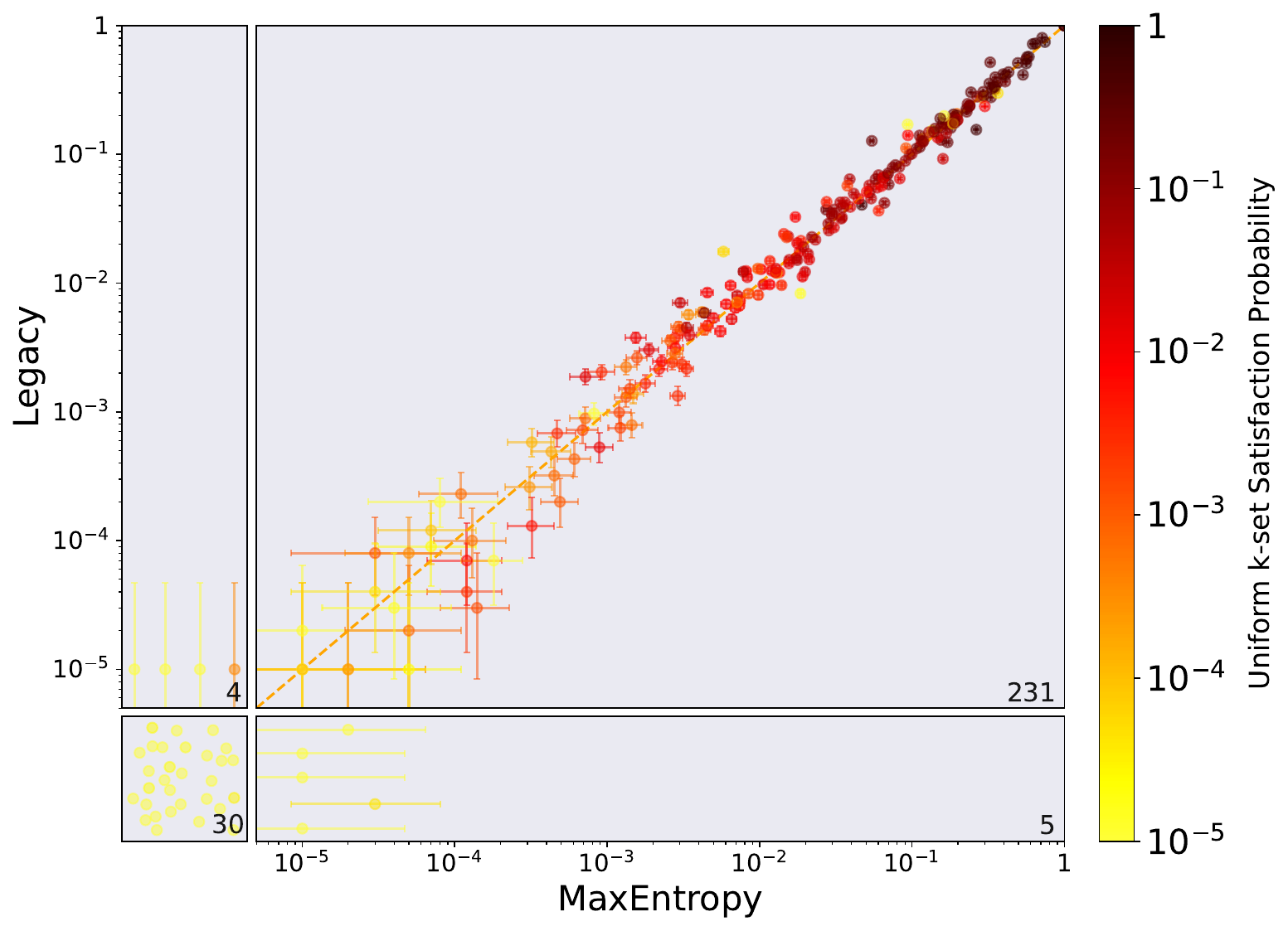}
  \end{subfigure}
\end{figure}

\begin{figure}[h!]
  \centering
  \begin{subfigure}[t]{0.49\textwidth}
    \centering
    \includegraphics[width=\linewidth]{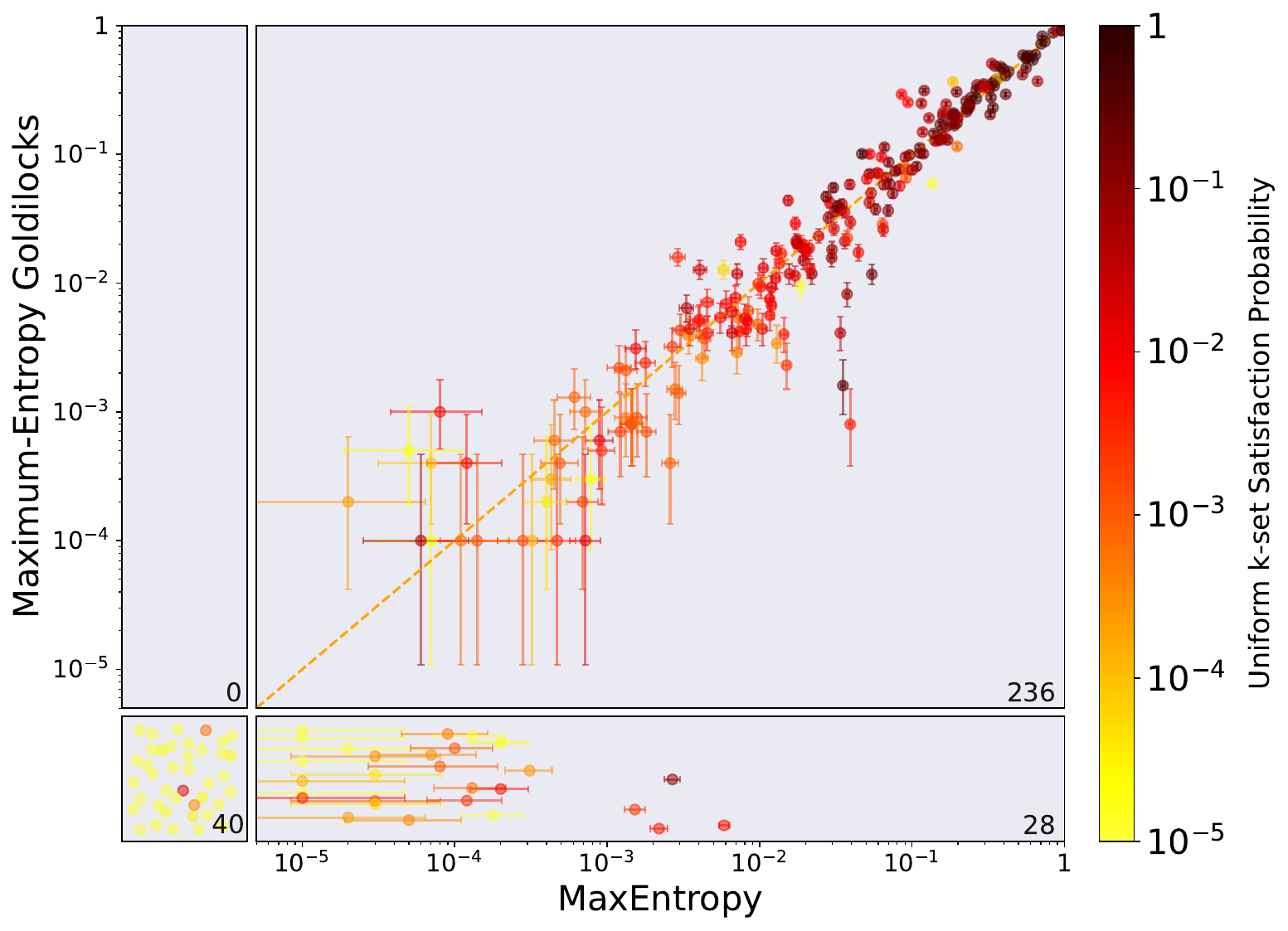}
  \end{subfigure}\hfill
  \begin{subfigure}[t]{0.49\textwidth}
    \centering
    \includegraphics[width=\linewidth]{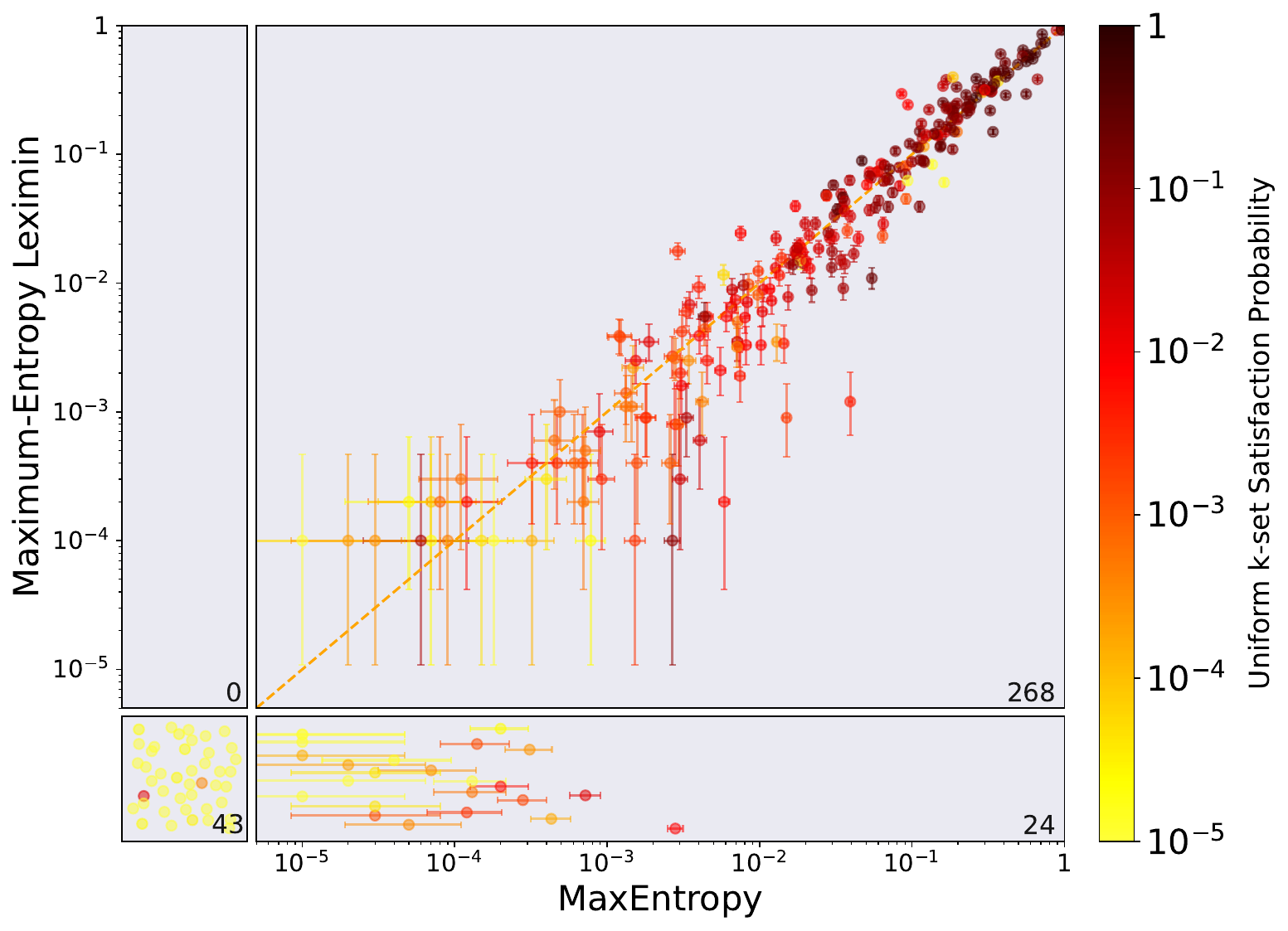}
  \end{subfigure}
\end{figure}

\begin{figure}[h!]
  \centering
  \begin{subfigure}[t]{0.49\textwidth}
    \centering
    \includegraphics[width=\linewidth]{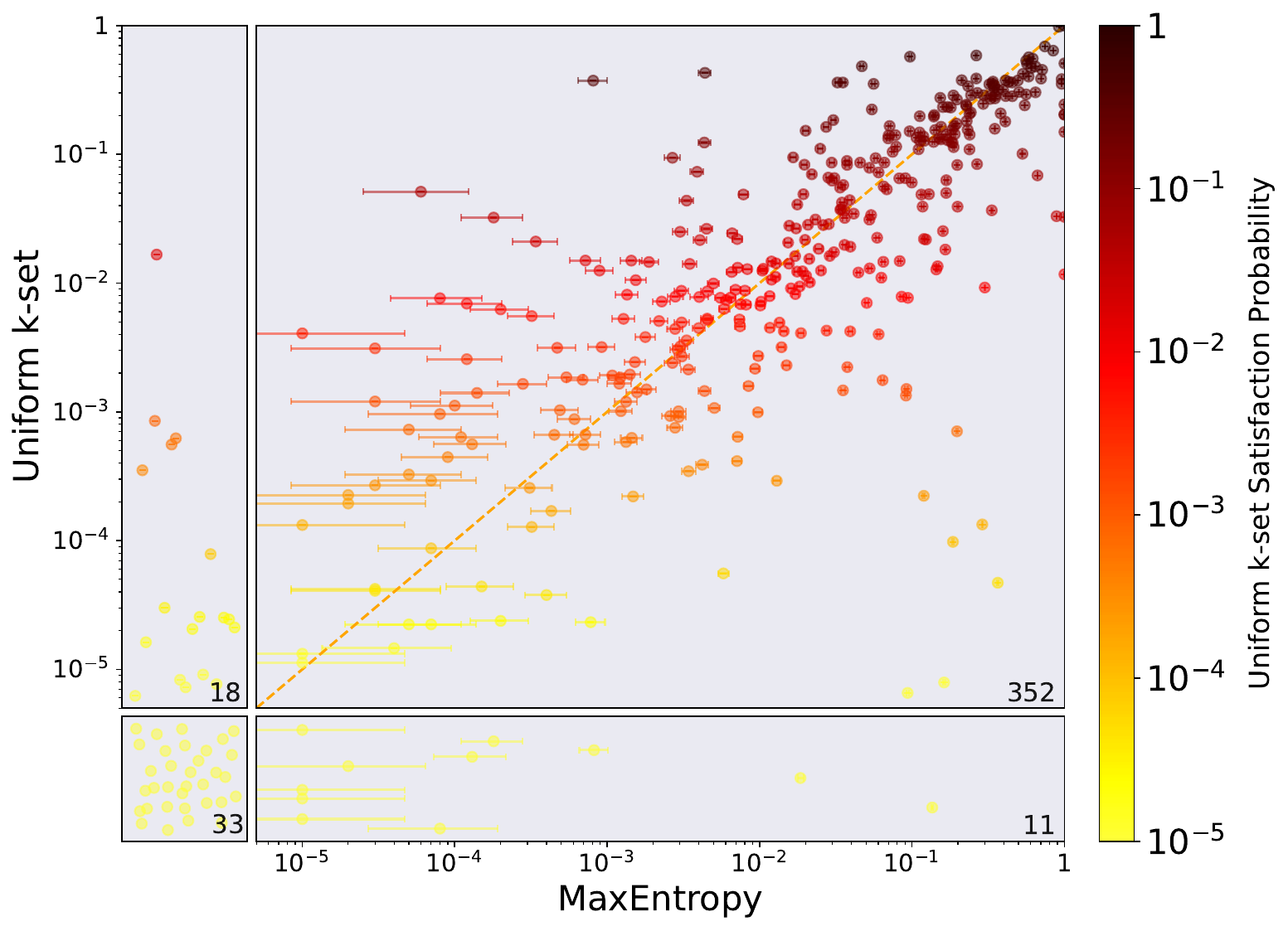}
  \end{subfigure}
\end{figure}

\end{document}